\newif\ifConference
\newif\ifJournal
\newif\ifAnonymous
\newif\ifArXiv
\newtheorem{rr}{Reduction~Rule}{\upshape\itshape}{\upshape\rmfamily}
\newcommand{\kommentar}[1]{}
\newcommand{\Oh}{\ensuremath{\mathcal{O}}}
\newcommand{\SSS}{\ensuremath{\mathcal{S}}}
\newcommand{\proofpara}[1]{\smallskip
	
	\noindent\textit{#1.}}
\DeclareMathOperator{\poly}{poly}
\DeclareMathOperator{\twwithoutN}{{tw}}
\DeclareMathOperator{\swwithoutN}{{sw}}
\DeclareMathOperator{\retwithoutN}{{ret}}
\DeclareMathOperator{\off}{off}
\DeclareMathOperator{\DP}{DP}
\newcommand{\Dbar}{{\ensuremath{\overline{D}}}\xspace}
\newcommand{\Hgt}{{\ensuremath{H}}\xspace}
\newcommand{\kbar}{{\ensuremath{\overline{k}}}\xspace}
\newcommand{\w}{{\ensuremath{\omega}}\xspace}
\newcommand\lb{\linebreak}
\newcommand\Recc[1]{Recurrence~(\ref{#1})}
\newcommand{\predators}[1]{{\ensuremath{\text{pred}(#1)}}\xspace}
\newcommand{\prey}[1]{{\ensuremath{\text{prey}(#1)}}\xspace}
\newcommand{\predatorsE}[1]{{\ensuremath{\text{pred}^{(E)}(#1)}}\xspace}
\newcommand{\preyE}[1]{{\ensuremath{\text{prey}^{(E)}(#1)}}\xspace}
\newcommand{\yes}{{\normalfont\texttt{yes}}\xspace}
\newcommand{\no}{{\normalfont\texttt{no}}\xspace}
\newcommand{\Wh}[1]{{\normalfont W[#1]}\xspace}
\newcommand{\NP}{{\normalfont{NP}}\xspace}
\newcommand{\FPT}{{\normalfont{FPT}}\xspace}
\newcommand{\Instance}{{\ensuremath{\mathcal{I}}}\xspace}
\newcommand{\Net}{{\ensuremath{\mathcal{N}}}\xspace}
\newcommand{\Tree}{{\ensuremath{\mathcal{T}}}\xspace}
\newcommand{\Food}{{\ensuremath{\mathcal{F}}}\xspace}
\newcommand{\ret}{{\ensuremath{\retwithoutN_{\Net}}}\xspace}
\newcommand{\tw}{{\ensuremath{\twwithoutN_{\Net}}}\xspace}
\newcommand{\sw}{{\ensuremath{\swwithoutN_{\Food}}}\xspace}
\newcommand{\PD}{\PDsub\Net}
\newcommand{\PDsub}[1]{{\ensuremath{PD_{#1}}}\xspace}
\newcommand{\problemdef}[3]{
	\begin{quote}
		\normalsize\textsc{#1} \smallskip \\
		\begin{tabularx}{0.92\textwidth}{@{}l@{\hspace{3pt}}X}
			\normalsize\textbf{Input:}    & \normalsize#2 \\
			\normalsize\textbf{Question:} & \normalsize#3
		\end{tabularx}
	\end{quote}
}
\newcommand{\PROB}[1]{{{\normalfont\textsc{#1}}}\xspace}
\newcommand{\MPD}{\PROB{Max-PD}}
\newcommand{\MPDlong}{\PROB{Maximize Phylogenetic Diversity}}
	\author{BLIND}
	{}
	{}
	{}
	{}
	\authorrunning{DOUBLE BLIND}
	\author{Mark Jones}
	{TU Delft, The Netherlands}
	{m.e.l.jones@tudelft.nl}
	{https://orcid.org/0000-0002-4091-7089}
	{Partially funded by the Dutch Organisation for Scientific Research (NWO) grant OCENW.KLEIN.125 and OCENW.M.21.306.}
	\author{Jannik Schestag}
	{TU Delft, The Netherlands}
	{j.t.schestag@tudelft.nl}
	{https://orcid.org/0000-0001-7767-2970}
	{Funded by the Dutch Research Council (NWO), project “Optimization for and with Machine Learning (OPTIMAL)” OCENW.GROOT.2019.015.}
	\authorrunning{Jones and Schestag}
\keywords{Phylogenetic Diversity; Fixed-Parameter Tractability; Phylogenetic Networks; Food Webs; Color Coding}
\tikzstyle{big}=[inner sep=2.5pt]
\tikzstyle{small}=[inner sep=2pt]
\tikzstyle{tiny}=[inner sep=1.7pt]
\tikzstyle{textnode}=[inner sep=0pt]
\tikzstyle{bgvertex}=[circle, minimum size=12pt]
\tikzstyle{vertex}=[circle, draw, fill=white]
\tikzstyle{reti}=[vertex, fill=black]
\tikzstyle{leaf}=[vertex, rectangle]
\tikzstyle{arc}=[arrows = {-Stealth[length=5pt]}]
\tikzstyle{thickarc}=[arrows = {-Stealth[length=8pt]}]
\newcommand{\PDD}{\PROB{\mbox{$\varepsilon$-PDD}}}
\newcommand{\fPDD}{\PROB{\mbox{1-PDD}}}
\newcommand{\WPDD}{\PROB{\mbox{Weighted-PDD}}}
\newcommand{\MAPPD}{\PROB{\mbox{Map-PD}}}
\newcommand{\MAPPDD}{\PROB{\mbox{Map-$\varepsilon$-PDD}}}
\newcommand{\fMAPPDD}{\PROB{\mbox{Map-1-PDD}}}
\newcommand{\WMAPPDD}{\PROB{{Map-Weighted-PDD}}}
\newcommand{\exWMAPPDD}[1]{\PROB{\mbox{ex-{$#1$}-colored-Map-W-PDD}}}
\newcommand{\kSWMAPPDD}{\PROB{\mbox{$k\Hgt$-colored-Map-W-PDD}}}
\newcommand{\viable}{{$\varepsilon$-viable}\xspace}
\newcommand{\fviable}{{$1$-viable}\xspace}
\newcommand{\gviable}{{$\gamma$-viable}\xspace}
\newcommand{\pgviable}{{part-$\gamma$-viable}\xspace}
\newcommand{\GW}{GW}
\newcommand{\Ext}{\textsc{Ext}_\Net}
\newcommand{\todos}[2][]{\todo[#1,color=red!25!green!50]{ #2}}
\newcommand{\todosi}[2][]{\todo[inline,color=red!25!green!50]{ #2}}
\title{Parameterized Algorithms for Diversity of Networks with Ecological Dependencies}
\titlerunning{Diversity of Networks with Ecological Dependencies}
\begin{document}

\maketitle

\begin{abstract}
	For a phylogenetic tree, the phylogenetic diversity of a set~$A$ of taxa is the total weight of edges on paths to~$A$.
	Finding small sets of maximal diversity is crucial for conservation planning, as it indicates where limited resources can be invested most efficiently.
	In recent years, efficient algorithms have been developed to find sets of taxa that maximize phylogenetic diversity either in a phylogenetic network or in a phylogenetic tree subject to ecological constraints, such as a food web.
	However, these aspects have mostly been studied independently.
	Since both factors are biologically important, it seems natural to consider them together.
	
	In this paper, we introduce decision problems where, given a phylogenetic network, a food web, and integers~$k$, and~$D$,
	the task is to find a set of~$k$ taxa with phylogenetic diversity of at least~$D$ under the~\emph{maximize all paths} measure, while also satisfying viability conditions within the food web.
	Here, we consider different definitions of viability, which all demand that a ``sufficient'' number of prey species survive to support surviving predators.
	
	We investigate the parameterized complexity of these problems and present several fixed-parameter tractable (\FPT) algorithms.
	Specifically, we provide a complete complexity dichotomy characterizing which combinations of parameters---out of the size constraint~$k$, the acceptable diversity loss~$\Dbar$, the scanwidth of the food web~$\sw$, the maximum in-degree~$\delta$ in the network, and the network height~$h$---lead to \Wh{1}-hardness and which admit \FPT algorithms.
	
	Our primary methodological contribution is a novel algorithmic framework for solving phylogenetic diversity problems in networks where dependencies (such as those from a food web) impose an order, using a color coding approach.
\end{abstract}

\section{Introduction}
The Sixth Mass Extinction eliminates species and their genera at an unprecedented rate~\cite{ceballos2023mutilation}, even exceeding rates in previous mass extinction events~\cite{pimm2014biodiversity}.
The situation is severe enough that approximately a quarter of Earth's existing species are at threat~\cite{ferreira2024conservation}.
Inaction now means jeopardizing further parts of the animal tree of life.

\looseness=-1
Because resources are not sufficiently available to preserve all species, scientists developed the \emph{phylogenetic diversity} measure~\cite{FAITH1992} to cover the necessity of making an educated decision on which set of species the limited resources should be invested in.
Given a phylogenetic tree with edge weights in which leaves represent present-day species, the phylogenetic diversity of a set of species~$A$ is the total weight of edges on paths from the root to species in~$A$.
Intuitively, a set of species with larger phylogenetic diversity captures a larger variety of genetic material, and is therefore expected to have larger biodiversity.
Phylogenetic diversity became the most popular measure of the significance of a set of species~\cite{vellend2011measuring}.
% \todo{R1: Give a intuition what phylogenetic trees/diversity is when you first mention it. JS: I feel like this is done in the abstract and here as well. MJ: I will see if I can add more; it's clear how PD is defined but perhaps less clear what it has to do with biodiviersity.}

Apart from its biological relevance, phylogenetic diversity probably also found favor in the eyes of many scientists for its highly desirable trait of being easy to compute~\cite{FAITH1992}.
Indeed, an optimal solution for maximizing phylogenetic diversity can be found with a greedy algorithm~\cite{Pardi2005,steel} and therefore even large instances can be solved within seconds~\cite{PDinSeconds}.

To model further relevant aspects of conservation planning, more problems were defined, in which a special focus was placed upon capturing varying costs of saving species~\cite{GNAP,pardi07}, finding optimal conservation areas~\cite{bordewich2008,moulton}, considering species' extinction times~\cite{TimePD}, or preserving viable sets of species~\cite{moulton}.
% keeping food circuits\todos{This is probably a bad word in this case. MJ: what do you mean by circuits? that phrase does not appear in the cited paper}~\cite{moulton}.
In the latter problem a food web, modeling predator-prey relationships, is given in addition to the phylogenetic tree.
It is asked to find a set~$A$ of species that maximizes the phylogenetic diversity and is viable in the sense that each species in~$A$ is either a source in the ecosystem, or finds prey within~$A$.
Due to concerns that one prey could be insufficient, further definitions of viability have been defined~\cite{WeightedFW}.

The notion of phylogenetic trees has been generalized.
Ancestry of species is in recent years more often modeled with phylogenetic networks, which in contrast to phylogenetic trees also allow hybridization events or horizontal gene transfer~\cite{huson2006application}.
Consequently, generalizations of phylogenetic diversity on networks have been proposed~\cite{bordewichNetworks,MAPPD,van2025average,van2024maximizing,WickeFischer2018}.

As phylogenetic networks represent the connection between species a lot better than phylogenetic trees and considering viability constraints are vital,
it is natural to combine these two aspects.
Yet, to the best of our knowledge, this has not been done so far.
Therefore, we take the step and define problems for the maximization of phylogenetic diversity on networks with different viability definitions.
As it was expected that ``a combination of these concepts [would] result in very hard problems.''~\cite{WeightedFW}, we turn to the toolbox of parameterized complexity to break intractability.

Parameterized complexity is one method to cope with \NP-hardness.
In this, we consider a problem~$\Pi$ and a parameter~$p$ of size~$\kappa$ and say that~$(\Pi,p)$ is \FPT if~$\Pi$ can be solved in~$\Oh^*(f(\kappa))$ time, for some computable function~$f$.
If $\Pi$ is \Wh{1}-hard with respect to~$p$ or even \NP-hard if~$\kappa$ is a constant, then the existence of an \FPT-algorithm is unlikely~\cite{cygan,downeybook}.

% MAX-1-pdd		 - complete dichotomy
% MAX-EPSILON-pdd   - almost complete (1 thing missing)
% max weighted pdd  - complete dichotomy
We consider three definitions of viability, based on whether a non-source species in the food web needs to have one or all of its prey available, or whether a weighted sum of the available prey must reach a certain threshold.
We provide a complete complexity dichotomy, for the latter two of these definitions, in the sense that for every combination of the following parameters, we show whether the defined problems are \Wh{1}-hard, or can be solved with an \FPT-result: the size constraint~$k$; the acceptable diversity loss~$\Dbar$; the scanwidth of the food web~$\sw$; the maximum in-degree~$\delta$ in the network; and the network height~$h$.
For the other definition of viability, we have a near-complete complexity dichotomy that omits only one of the above combinations.
% More explicitly,
In particular, we provide \FPT algorithms for the most general version of the problem we define for the parameters~$\Dbar+\sw$ and~$k+\sw+\delta+h$.
For the former, we depend on a notion of \emph{anchors} already used in~\cite{TimePD}, but we improve on their algorithmic idea, in two ways.
First, we use a color coding technique which only requires one color per edge.
Secondly, we consider a non-linear order.
By this, we are able to even provide algorithms for the smaller parameter~$\kbar + \sw$ for the version of the problem on trees.

\section{Preliminaries}
\label{sec:prelims}
\subsection{Definitions}
% \paragraph*{Mathematical Definitions}
\ifConference
We use the $\mathcal O^*$-notation which omits factors polynomial in the input size.
\fi
For a positive integer $a$,
by $[a]$ we denote the set $\{1,2,\dots,a\}$, and
by $[a]_0$ the set $\{0\}\cup [a]$.
For functions $f:A\to B$, where~$B$ is a family of sets, we define $f(A') := \bigcup_{a\in A'} f(a)$,
and if $B \subseteq \mathbb{Q}$, we define $f_\Sigma(A') := \sum_{a\in A'} f(a)$, for subsets $A'\subseteq A$.
% \todo{MJ: I prefer $f(A')$ over $f_\Sigma(A')$. Would this reduce clarity anywhere?}
%
%\todosi{  neighbors, spanning tree, connected component}
%
We write $\{u,v\}$ for an undirected edge between $u$ and $v$ and~$uv$ for a directed edge from~$u$ to~$v$.
For any graph~$G$, we write~$V(G)$ and~$E(G)$ for the set of vertices or edges, respectively.
For a set of edges~$E'$, we write~$V(E')$ for the vertices with at least one endpoint in an edge of~$E'$.
For a vertex set~$V'\subseteq V(G)$, we let~$G[V']:=(V',\{e\in E(G) \mid V(\{e\}) \subseteq V'\})$ denote the subgraph of~$G$ induced by~$V'$.
We generalize to edge sets~$E' \subseteq E(G)$ by~$G[E'] := G[V(E')]$.
%Moreover, with~$G - V':=G[V\setminus V']$ we denote the graph obtained from~$G$ by removing~$V'$ and its incident edges.

\subparagraph*{Phylogenetic Networks and Phylogenetic Diversity.}
For a given set $X$, a \emph{phylogenetic~$X$-network~$\Net = (V,E,\w)$} is a directed, acyclic graph~$(V,E)$ with \emph{edge-weight}~$\w: E\to \mathbb{N}_{>0}$, in which there is a single vertex with an in-degree of~0, the \emph{root}~$\rho$, and~$X$ is the set of vertices with an in-degree of~1 and an out-degree of~0, called the \emph{leaves}.
All other vertices split up into \emph{tree vertices}, which have an in-degree of~1 and an out-degree of at least~2, and \emph{reticulations}, which have an out-degree of~1 and an in-degree of at least~2.
Edges incoming at reticulations (tree vertices) are \emph{reticulation edges} (\emph{tree edges}).
The set of reticulations, tree vertices, reticulation and tree edges are denoted with~$V_R(\Net)$, $V_T(\Net)$, $E_R(\Net)$, and~$E_T(\Net)$, respectively.
A \emph{phylogenetic~$X$-tree~$\Tree=(V,E,\w)$} is a phylogenetic~$X$-network without reticulations.
The set~$X$ is a set of~\emph{taxa}~(species).
We interchangeably use the words taxon and leaf.
In biological applications, the set $X$ is a set of taxa, and the other vertices of~$\Net$ correspond to biological ancestors of these taxa. An edge $e = uv$ represents direct biological inheritance from $u$ to $v$. The weight~$\w(e)$ describes the phylogenetic distance between the endpoints of~$e$.
As these endpoints correspond to distinct, (possibly extinct) species, we may assume this distance is greater than zero.
% \todo{R1: line 105: Confusing since taxa are the leaves. Please write species or clarify. JS: I will pass this to you, Mark. MJ: Will do.}
Reticulations correspond to species that have direct inheritance from multiple ancestors, such as hybrid species.
% \todosi{Explain more about the biological background of reticulations.}

For a vertex~$v\in V$, the \emph{offspring} $\off(v)$ of~$v$ is the set of leaves~$x\in X$ for which there is a path from~$v$ to~$x$.
Given a set of taxa $A \subseteq X$,
let $E(A)$ denote the set of edges $uv \in E$ with $\off(v)\cap A \neq \emptyset$.
The \emph{phylogenetic diversity} $\PD(A)$ of $A$ is defined by 
\begin{equation}
	\label{eqn:PDdef}
	\PD(A) := \sum_{e \in E(A)} \w(e).
\end{equation}

In other words, the phylogenetic diversity $\PD(A)$ of a set $A$ of taxa is the sum of the weights of edges that are on a path to a taxon in~$A$.

For phylogenetic trees, this measure is well established~\cite{FAITH1992}.
For phylogenetic networks the search for the most relevant measure is still ongoing, but so far the above definition, also called \emph{All-Paths-PD}, is the measure that is ``the simplest''~\cite{bordewichNetworks,MAPPD,van2025average,van2024maximizing,WickeFischer2018} and is the only measure of phylogenetic diversity considered in this paper.

\subparagraph*{Food Webs.}
A \emph{food web~$\Food=(X,E,\gamma)$ on $X$} for a set of taxa $X$, is a directed acyclic graph~$(X,E)$ with an edge weight function~$\gamma: E \to [0,1]$.

For an edge $xy\in E$, we say that $x$ is \emph{prey} of $y$ and $y$ is a \emph{predator} of $x$.
Thus, edges in~$\Food$ are directed from prey to predator.
The set of prey and predators of $x$ is denoted with $\prey{x}$ and $\predators{x}$, respectively.
The set $\preyE{x}$ is the set of edges incoming at $v$ and $\predatorsE{x}$ is the set of edges outgoing of $v$ in~$\Food$.
Taxa without prey are \emph{sources}.
For the problems considered in this paper, instances in which the food web has several sources can in quadratic time be transformed into one where there is only one source~\cite[Observation~2.3]{PDD}.
% \todo{R2: Line 122 - Some serious weirdness about assuming just one source in the food web, talking about “Equivalent instances” when the problem(s) under consideration have not even been defined (and they are defined in the current manuscript?) How can then equivalence follow from an observation in a previous paper? I am not saying it does not follow here, but clearly more proper justification is necessary. MJ: todo.}
Therefore, throughout the paper, we assume that~$\Food$ only has a single source~$s_\Food$.

\looseness=-1
For a given food web~\Food,
a set of taxa~$A\subseteq X$ is \emph{\gviable} 
if~$\gamma_\Sigma(\{ux \mid u \in \prey{x} \cap A\}) \ge 1$
for each non-source~$x\in A$.
That is, the total weight of edges incoming from taxa in~$A$ is at least~$1$~\cite{WeightedFW}.
Given also a set~$E' \subseteq E(\Food)$ of edges,
a set of taxa~$A\subseteq X$ is \emph{$E'$-\pgviable} 
if~$\gamma_\Sigma(\{ ux \in \preyE{x} \mid u\in A \text{ or } ux \in E'\}) \ge 1$ 
for each non-source~$x\in A$.
That is, the total weight of edges incoming from taxa in~$A$ together with incoming edges in $E'$ is at least~$1$.
% That is, that $A$ is \gviable, if also the edges~$E'$ are counted.
%

% If~$\gamma$ is restricted to be~$1/|\prey{x}|$ (or~1) for every edge in~$\preyE{x}$ for each~$x\in X$, then
% we say that a set of taxa~$A \subseteq X$ is~\emph{\fviable} (or \emph{\viable}) instead of~\gviable.
% %
% %In other words, $A\subseteq X$ is \emph{\viable} if for each non-source~$x\in A$ the set~$A\cap \prey{x}$ is not empty,
% %and~$A$ is \emph{\fviable}, if~$\prey{x} \subseteq A$ for each~$x\in A$.
% %
% That is equivalent to saying that~$A$ is \viable, if each non-source in~$A$ can find prey in~$A$; and~$A$ is \fviable, if all prey of each taxon in~$A$ are in~$A$.

We consider two important special cases of viability.
If $\gamma(e) = 1$ for all $e \in E$, then we say a \gviable set~$A \subseteq X$ is \emph{\viable}.
That is equivalent to saying that~$A$ is \viable, if each non-source in~$A$ can find prey in~$A$.
If $\gamma(ux) = 1/|\prey{x}|$ for every edge in~$\preyE{x}$ for each~$x\in X$, then we say a \gviable set~$A \subseteq X$ is \emph{\fviable}.
That is, $A$ is \fviable, if all prey of each taxon in~$A$ are in~$A$.

\subparagraph*{Problem Definitions.}
In the classical \MPDlong~(\MPD) problem, we are given a phylogenetic tree~$\Tree$, and integers~$k$, and~$D$, and it is asked whether a set of~$k$ taxa with a phylogenetic diversity of at least~$D$ exists~\cite{FAITH1992}.

The problems \MAPPD~\cite{bordewichNetworks}, \PDD~\cite{moulton}, and~\fPDD~\cite{WeightedFW} are generalizations of~\MPD in the following sense.
In \MAPPD, we are given a phylogenetic network instead of a tree and the question stays the same---just with the more general phylogenetic diversity measure.
In \PDD, \fPDD and \WPDD, we are, in addition to the input of \MPD, given a food web and the solution set is required to be \viable, \fviable or \gviable, respectively.

In this paper, we consider the following generalizations of these problems.
\problemdef{\WMAPPDD}{
	A phylogenetic~$X$-network~$\Net$, a food web~$\Food$ on $X$, and integers~$k,D \in \mathbb{N}$.
}{
	Is there an \gviable set~$S\subseteq X$ such that~$|S|\le k$, and~$\PD(S)\ge D$?
}
We call the set~$S$ a \emph{solution} of the instance.
The problems \MAPPDD and \fMAPPDD are defined analogously, where the set~$S$ is required to be~\viable and~\fviable, respectively.

\looseness=-1
We note that in \PDD, \fPDD, and \WPDD, as originally defined, the phylogenetic network is required to be a tree.

Throughout the paper, we adopt the convention that $n$ is the number of taxa~$|X| = |V(\Food)|$
and that $m$ is the number of edges of the food web $|E(\Food)|$.

\subparagraph*{Scanwidth.}
For a directed, acyclic graph~$G=(V,E)$, a rooted, directed tree~$T = (V,E')$ is a \emph{tree extension of~$G$} if for each edge~$uv\in E$ there is a path from~$u$ to~$v$ in~$T$.
We say that an edge~$uv\in E$ \emph{passes over} an edge~$e\in E'$ if the (only) path from~$u$ to~$v$ in~$T$ contains~$e$.
For an edge $uv \in E'$, the set of edges that pass over~$uv$ is~$\GW(v)$ and~$T_\Food^{(v)}$ is the set of vertices that can be reached from~$v$ in~$T_\Food$.
The \emph{scanwidth of a tree extension~$T$} of~$G$ is the maximum number of edges of~$G$ that pass over an edge of~$T$.
The \emph{scanwidth of~$G$} is the minimal scanwidth of any tree extension of~$G$~\cite{berry}.
Computing the tree extension and scanwidth of a directed, acyclic graph is~\NP-hard~\cite{berry} and, when considered for phylogenetic networks, \FPT when parameterized by the level of the network~\cite{holtgrefeComputing,holtgrefeExact}.
% \todo{R1: Scanwidth: Please give one sentence how a tree extension can be constructed if the scanwidth is known, since tree extensions with scanwidth are used in the proofs. JS: I think the previous sentence refers people to the place where to find it. We do not need to reinvent the wheel, so I would ignore the advice. MJ: I may add a sentence explaining that we assume a tree extension is given.}
%
In this paper, we will use the parameter \emph{scanwidth~$\sw$ of the food web}, and assume we are given a tree extension of $\Food$ of minimum scanwidth.

\subparagraph*{Other Main Parameters.}
Here, we define the main parameters which are used in this paper.

For an instance $\Instance = (\Net,\Food,k,D)$ of \WMAPPDD, we define~$\kbar := |X| - k$ and~$\Dbar := \PD(X)-D = \sum_{e \in E}\w(e) - D$.
Observe that if a set~$S$ of~$k$ taxa with diversity ~$D$ is preserved, then~$\kbar$ taxa are not in~$S$ and a diversity of~$\Dbar$ is lost.
We therefore speak of~$\Dbar$ as the \emph{acceptable loss of diversity}.

We define~$\delta$ to be the \emph{maximum in-degree of a reticulation} of~$\Net$.
We define~$h_R$ and~$h_T$ to be the maximum number of reticulations and tree vertices, respectively, which are in a path in~$\Net$.
Further, we define~$h := h_R + h_T$ to be the \emph{height of the network}.

\ifJournal
\subparagraph*{Parameterized Complexity.}
%Throughout this paper, we consider a number of parameterizations of \PDD and \sPDD. 
%For a detailed introduction to parameterized complexity refer to the standard monographs~\cite{cygan,downeybook}; we only give a brief overview here.
%
%A parameterization of a problem $\Pi$ associates with each input instance $\Instance$ of $\Pi$ the value of a specific \emph{parameter}~$\kappa$.
%%
%A parameterized problem $\Pi$ is \emph{fixed-parameter tractable} (\FPT) with respect to some parameter~$\kappa$ if there exists an algorithm solving every instance $(\Instance,\kappa)$ of $\Pi$ in time~$f(\kappa)\cdot |\Instance|^{\Oh(1)}$.
%A parameterized problem $\Pi$ is \emph{slice-wise polynomial} (\XP) with respect to some parameter~$\kappa$ if there exists an algorithm solving every instance $(\Instance,\kappa)$ of $\Pi$ in time~$|\Instance|^{f(\kappa)}$.
%Here, in both cases, $f$ is some computable function only depending on~$\kappa$.
%Parameterized problems that are \emph{\Wh{1}-hard} are believed not to be \FPT.
%%
%%Observe that if a parameterized problem is \FPT, it also is \XP.
%%For convenience reasons, the word parameterized is usually omitted.
%We use the $\mathcal O^*$-notation which omits factors polynomial in the input size.
\todos{I think we can leave this paragraph out for IPEC. MJ: Agreed, except I think we should define the $\mathcal O^*$-notation somewhere.}
\fi

\subparagraph*{Color Coding.}
In this paper, we use color coding methods.
For an in-depth treatment of color coding, we refer the reader to~\cite[Chapter~5]{cygan} and~\cite{alon}.

\begin{definition}
	\label{def:perfectHashFamily}
	For integers~$n$ and~$k$,
	an~\emph{$(n,k)$-perfect hash family $\mathcal{H}$} is a family of mappings $f: [n] \to [k]$ such that for every subset~$Z$ of~$[n]$ of size at most~$k$, there is an $f \in \mathcal{H}$ which is injective when restricted to~$Z$.
\end{definition}
Each mapping in an~$(n,k)$-perfect hash family can be seen as coloring of a set of~$n$ items into~$k$ colors. 
If we are interested in finding a set of at most $k$ items satisfying some property, then we may assume that each element in the set will have a different color under some coloring. This assumption can make solving a problem easier, and we will use it in the algorithms developed in \Cref{sec:Dbar,sec:k}.
% This can make solving a problem easier, as fewer combinations are possible to be chosen from.
\begin{proposition}[\cite{cygan,Naor1995SplittersAN}]\label{prop:perfectHashFamily}
	For any integers $n, k \geq 1$, a $(n,k)$-perfect hash family of size~$e^k k^{\Oh(\log k)} \cdot \log n$ can be constructed in time $e^k k^{\Oh(\log k)} \cdot n \log n$.
\end{proposition}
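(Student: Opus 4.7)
The plan is to follow the classical two-step approach of Naor, Schulman, and Srinivasan: first establish existence via a probabilistic argument, and then derandomize by composing a coarse hash family on $[n]$ with a fine one on a much smaller universe.

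First, I would record the probabilistic intuition that fixes the correct order of magnitude. A uniformly random function $f \colon [n] \to [k]$ is injective on a fixed $k$-subset $Z$ with probability $k!/k^k$, which by Stirling's formula is $\Theta(\sqrt{k}/e^k)$. Taking $\Theta(e^k \log \binom{n}{k}) = \Theta(e^k \cdot k \log n)$ independent random functions therefore suffices, by a union bound over all $k$-subsets, to cover every $Z$ with at least one injective function. This shows the existence of a family of the claimed asymptotic size, but by itself does not yield an efficient deterministic construction.

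Second, I would derandomize via composition. I would construct a family $\mathcal{F}_1$ of functions $[n] \to [k^2]$ such that for every $k$-subset $Z \subseteq [n]$ at least one $f \in \mathcal{F}_1$ is injective on $Z$; via the splitter construction of Naor--Schulman--Srinivasan, relying on small-bias probability spaces and Reed--Solomon codes, such a family of size $k^{\Oh(1)} \log n$ can be built in time $k^{\Oh(1)} n \log n$. Then, on the small universe $[k^2]$, I would construct a $(k^2, k)$-perfect hash family $\mathcal{F}_2$ of size $e^k \cdot k^{\Oh(\log k)}$ by exhaustive or algebraic search within a structured collection; this runs in time $e^k \cdot k^{\Oh(\log k)}$ because the universe has only $\poly(k)$ elements. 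The composed family $\mathcal{F} := \{\, g \circ f \mid f \in \mathcal{F}_1,\ g \in \mathcal{F}_2 \,\}$ is then routinely verified to be an $(n, k)$-perfect hash family: for any $k$-subset $Z$, some $f \in \mathcal{F}_1$ maps $Z$ injectively into a $k$-subset $f(Z) \subseteq [k^2]$, and some $g \in \mathcal{F}_2$ is injective on $f(Z)$. Its size is $|\mathcal{F}_1| \cdot |\mathcal{F}_2| = e^k \cdot k^{\Oh(\log k)} \cdot \log n$, within the claimed construction time.

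The main obstacle is building the small-universe family $\mathcal{F}_2$ so that its size matches the probabilistic bound; matching the $e^k$ factor while keeping the search within $\poly(k)$ universe elements is where the real combinatorial work of Naor, Schulman, and Srinivasan lies. The splitter step and the composition, by contrast, are comparatively straightforward once $\mathcal{F}_2$ is available.
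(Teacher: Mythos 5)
The paper does not prove this statement at all --- it is imported verbatim as a known result with citations to Naor--Schulman--Srinivasan and to Cygan et al., Chapter~5. Your sketch correctly reproduces the standard construction from those sources (probabilistic existence, then derandomization by composing an $(n,k,k^2)$-splitter with a small-universe perfect hash family), so it matches the intended provenance of the proposition; the only caveat is that the genuinely hard step --- building the family on $[k^2]$ of size $e^k k^{\Oh(\log k)}$ --- is acknowledged but not carried out, which is acceptable here since the paper itself delegates exactly that work to the cited references.
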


% \todo{R1: Color coding: Please explain that the hash family can be seen as a family of colorings. It was non-intuitive and confusing for me. You could also write how you use it later on. And please be give more explanation when you write standard color-coding techniques. Which technique do you mean at these points precisely? JS: First part: Done. Second Part... We have referred to Chapter 5 of Para. Alg.. MJ: I will see if I can add something}

\subsection{Related work}
\PDD, defined in~\cite{moulton},
was conjectured to be \NP-hard~\cite{spillner}.
A formal prove appeared in~\cite{faller}.

\begin{theorem}[{\cite[Theorem 5.1]{faller}}]
	\label{thm:faller}
	\PDD is \NP-hard even if the phylogenetic tree has a height of 2 and the food web is an out-tree.
\end{theorem}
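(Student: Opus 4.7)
\proofpara{Plan}
I would prove \NP-hardness by a polynomial reduction from \SC. A short padding step---duplicating every set---ensures that each element lies in at least two sets while preserving \NP-hardness of \SC. Given such a preprocessed instance with universe $U = \{u_1, \ldots, u_n\}$, sets $S_1, \ldots, S_m$, and budget $k$, I would build a \PDD instance whose taxa are a source $s$, one $t_j$ per set $S_j$, and one $\ell_{j,i}$ per incidence $u_i \in S_j$.

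The food web is the out-tree rooted at $s$ in which every $t_j$ has prey $s$ and every $\ell_{j,i}$ has prey $t_j$. The phylogenetic tree has root $\rho$ with children $v_0, v_1, \ldots, v_n$: the taxa $s, t_1, \ldots, t_m$ hang as leaves beneath $v_0$, and for $i \geq 1$ the taxa $\{\ell_{j,i} : u_i \in S_j\}$ hang beneath $v_i$. The padding step guarantees every $v_i$ has out-degree at least two, so this is a valid height-$2$ phylogenetic tree. Every edge receives weight $1$, and I would set $k' := 1 + k + n$ and $D := 2n + k + 2$.

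For the forward direction, a cover $\{S_{j_1}, \ldots, S_{j_k}\}$ lifts to the \viable solution $\{s, t_{j_1}, \ldots, t_{j_k}\} \cup \{\ell_{j^*(i), i} : i \in [n]\}$, where each $j^*(i) \in \{j_1, \ldots, j_k\}$ satisfies $u_i \in S_{j^*(i)}$; this uses exactly $1 + k + n$ taxa and activates all $n + 1$ edges out of $\rho$, giving $\PD(A) = D$. For the converse, because every edge has unit weight, $\PD(A)$ equals $|A|$ plus the number of children of $\rho$ that have a descendant in $A$. A \viable set $A$ with $|A| \leq 1 + k + n$ reaching $\PD(A) \geq 2n + k + 2$ must therefore saturate the budget and activate every child of $\rho$; since $s$ activates $v_0$ and the remaining $n$ children can only be activated by $\ell$-taxa, the split must be exactly $k$ set-taxa and $n$ element-taxa, and viability forces the chosen $t_j$'s to cover $U$.

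The main subtlety I expect is a short arithmetic comparison ruling out alternative budget allocations such as $k + 1$ set-taxa paired with $n - 1$ element-taxa, which under the uniform-weight choice reduces to a single unit inequality. The padding step ensuring each internal tree vertex has out-degree at least two is routine and preserves \NP-hardness of the source problem.
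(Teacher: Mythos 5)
The paper does not actually prove this statement---it is imported verbatim from Faller, Semple and Welsh \cite{faller}---so there is no in-paper argument to compare against; the only question is whether your blind reduction is sound, and it is. The tight count $\PD(A)=|A|+(\text{number of children of }\rho\text{ activated by }A)\le(1+k+n)+(n+1)=D$ forces equality throughout, so every $v_i$ with $i\ge 1$ must be hit by some $\ell_{j,i}$, \viable{}ness then pulls the corresponding $t_j$ and $s$ into $A$, and the budget $1+k+n$ leaves room for at most $k$ set-taxa, which must cover $U$; the forward direction is a direct computation. The construction also respects both stated restrictions: $s\to t_j\to\ell_{j,i}$ is an out-tree with single source $s$, every root-to-leaf path has exactly two edges, and the duplication padding gives each $v_i$ out-degree at least two as the definition of a phylogenetic tree requires. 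This is the same flavour of set-cover-style gadget as the original hardness proof in the cited source, so I see no gap.
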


\PDD has been analyzed within parameterized complexity~\cite{PDD} and it has been shown that~\PDD is \FPT when parameterized with~$D$, but \Wh{1}-hard with respect to~$\Dbar$~\cite{PDD}.
In~\cite{WeightedFW}, further definitions for viability were given, among them~\gviable and~\fviable.
It has been shown that \fPDD is \Wh{1}-hard with respect to~$k$,~$D$,~$\kbar$, and~$\Dbar$~\cite{fPDD}.
Further, \PDD~\cite{PDD} and \WPDD~\cite{WeightedFW} were analyzed with respect to parameters categorizing the structure of the food web.

\begin{theorem}
	\label{thm:PDD+}
	\begin{enumerate}[(a)]
		\item\label{itm:Dbar-PDD}
			\PDD is \Wh{1}-hard when parameterized with~$\kbar$ or~$\Dbar$, even if the phylogenetic tree is a star~\cite[Proposition~5.1]{PDD}.
		\item\label{itm:Dbar-fPDD}
			\fPDD is \Wh{1}-hard when parameterized with~$\kbar$ or~$\Dbar$, even if the phylogenetic tree is a star~\cite[Theorem~3.3]{fPDD}.
		\item\label{itm:k-fPDD}
			\fPDD is \Wh{1}-hard when parameterized with~$k$ or~$D$, even if the phylogenetic tree is a star~\cite[Theorem~3.2]{fPDD}.
	\end{enumerate}
\end{theorem}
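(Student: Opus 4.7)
The plan is to prove each of the three items via a parameterized reduction from a known \Wh{1}-hard problem, exploiting the fact that when the phylogenetic tree is a star the value $\PD(A)$ reduces to the sum of the edge weights to the leaves in $A$ (plus a constant). Hence the combinatorial difficulty must be encoded entirely in the food web~$\Food$, and the phylogenetic tree serves only as a weight function on the taxa.

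For \Cref{itm:Dbar-PDD}, I would reduce from \PROB{Multicolored Clique} on an input graph~$G$ with~$k$ color classes. The taxa would be the vertices and edges of~$G$, arranged in a two-level food web where each edge taxon has its two endpoint-vertex taxa as prey. All taxa receive (essentially) the same small edge weight in the star, padded by a bulk of ``dummy'' source taxa that are cheap to keep. A \viable selection that leaves out only $\binom{k}{2}+k$ taxa then corresponds exactly to choosing a multicolored clique in~$G$; this simultaneously bounds~$\kbar$ and~$\Dbar$ by functions of~$k$, yielding \Wh{1}-hardness for both parameters.

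For \Cref{itm:Dbar-fPDD}, the same template works, but now the \fviable constraint forces the chosen set to be closed under prey, so the complementary set of removed taxa must be closed under predators. I would therefore use a food web that is essentially the reverse of the one above: deleting a small upward-closed set should correspond to hitting all edges of a clique (or covering an independent set) in a multicolored graph, again giving parameter bounds~$\kbar, \Dbar \in f(k)$. For \Cref{itm:k-fPDD} the parameter is the solution size itself, so I would reduce from \PROB{Clique} (or \PROB{Multicolored Clique}) parameterized by~$k$: build a food web in which sources represent vertices and predators represent candidate edges, with weights assigned so that a small \fviable set reaching diversity~$D$ has no choice but to pick a clique's worth of vertex-sources together with the edges that lie on them.

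The main obstacle in all three reductions is weight calibration in the star. Because \PD is merely a weighted sum on a star, the threshold~$D$ must be chosen delicately so that no ``cheating'' selection can trade a few missing prey constraints against extra weight elsewhere; this is typically handled by making a distinguished set of taxa dominant in weight while the remaining taxa carry weight that is either negligible or identical. A secondary subtlety for \Cref{itm:Dbar-PDD} and \Cref{itm:Dbar-fPDD} is that both the upper bound~$\kbar$ on excluded taxa \emph{and} the upper bound~$\Dbar$ on lost weight must be functions of the source parameter simultaneously, which forces the ``excludable'' taxa to be roughly uniform in weight, whereas the ``must-keep'' taxa should be cheap enough not to blow up~$D$.
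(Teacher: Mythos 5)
This theorem is not proved in the paper at all: it is a survey statement whose three items are imported verbatim from the cited works (\cite[Proposition~5.1]{PDD} for~(a) and \cite[Theorems~3.2 and~3.3]{fPDD} for~(b) and~(c)), so there is no in-paper argument to compare your attempt against. Your high-level strategy---encode all the hardness in the food web, since on a star $\PD$ degenerates to a weighted sum over the chosen leaves, and reduce from a \PROB{Multicolored Clique}-type problem---is indeed the standard route and is consistent in spirit with how the cited papers proceed.

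However, what you have written is a plan rather than a proof, and the parts you defer are exactly the parts where such reductions succeed or fail. Concretely: (i)~In your construction for~(a), nothing as stated prevents a solution from discarding $\kbar$ cheap dummy or vertex taxa and no edge taxa at all; since \viable{}ness only constrains the \emph{saved} set, the removed set is not forced to contain the $\binom{k}{2}$ edge taxa unless the weights and the threshold $D$ are calibrated so that every admissible removal must sacrifice that much weight, and this forcing argument is precisely what you leave open. (ii)~For~(b), the direction of the closure matters: under \fviable{}ness the saved set must be prey-closed, so the removed set is predator-closed, and depending on which way you orient the vertex/edge gadget, removing one vertex taxon can force the removal of all $\deg(v)$ incident edge taxa---a quantity not bounded by any function of $k$---so $\kbar$ is not automatically $f(k)$; you need the orientation in which removing $\binom{k}{2}$ edge taxa forces removing at least $k$ vertex taxa with equality exactly for cliques, together with a weight argument that forces edge taxa to be removed in the first place. (iii)~For~(c) you give only one sentence and do not address how a \emph{small} \fviable set of prescribed diversity is prevented from cherry-picking heavy sources without assembling a clique. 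Since the statement is a citation of external results, the safe and correct ``proof'' here is simply to invoke \cite{PDD} and \cite{fPDD}; if you want a self-contained argument, each of the three reductions needs the forcing and calibration steps written out and verified in both directions.
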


In recent years, the question of how to model phylogenetic diversity in networks best has drawn some attention~\cite{bordewichNetworks,van2025average,van2024maximizing,WickeFischer2018}.
The measure \emph{All-Paths-PD}, as defined in~\cite{bordewichNetworks}, is hereby the easiest to understand and also computationally slightly less challenging than other definitions.
\MAPPD is \FPT when parameterized with~$D$~\cite{MAPPD}, and can be solved in~$\Oh^*(2^{\ret})$~time for~$\ret$ being the number of reticulations~\cite{MAPPD},
in~$\Oh^*(2^{\swwithoutN_{\Net}})$~time for $\swwithoutN_{\Net}$ the scanwidth~\cite{holtgrefePANDA},
and in~$\Oh^*(2^{\Oh(\tw)})$~time for $\tw$ the treewidth of the network~\cite{MAPPD}.

\begin{theorem}[\cite{bordewichNetworks,MAPPD}]
	\label{thm:MAPPD}
	\MAPPD is \Wh{2}-hard, when parameterized with the solution size~$k$, even if every path from the root to a leaf contains exactly one tree vertex and one reticulation.
\end{theorem}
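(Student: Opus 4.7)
The plan is to reduce from \SC, which is \Wh{2}-hard when parameterized by the number $k$ of sets selected. Given an instance with universe $U=\{u_1,\dots,u_n\}$ and sets $S_1,\dots,S_m$, I would build a three-layer network $\Net$ as follows: a root $\rho$, a tree vertex $t_j$ for each element $u_j$ with edge $\rho\to t_j$, a reticulation $r_i$ for each set $S_i$, a leaf $\ell_i$ with edge $r_i\to\ell_i$, and an edge $t_j\to r_i$ exactly when $u_j\in S_i$. By construction, every root-to-leaf path has the form $\rho\to t_j\to r_i\to\ell_i$ and thus contains exactly one tree vertex and one reticulation. Assuming (WLOG, via a standard duplication preprocessing) that each element is in at least two sets and each set has at least two elements, the in/out-degrees match the definitions of tree vertices and reticulations.

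Next I would choose weights making the top layer dominant. Set $M:=nm+1$ and define $\w(\rho t_j):=M$ for all top edges while $\w(e):=1$ on every other edge, so all weights are positive integers as required. Set the threshold $D:=nM+k$ and keep the solution size exactly $k$. For $A=\{\ell_{i_1},\dots,\ell_{i_k}\}$, writing $U(A):=S_{i_1}\cup\cdots\cup S_{i_k}$, observe that a top edge $\rho t_j$ lies in $E(A)$ iff $u_j\in U(A)$, while the middle and bottom edges contribute a positive integer between $k$ and $nm$. Hence $\PD(A)=|U(A)|\cdot M+\alpha$ for some $\alpha\in[k,nm]$.

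For correctness, if $\{S_{i_1},\dots,S_{i_k}\}$ covers $U$ then picking $A=\{\ell_{i_1},\dots,\ell_{i_k}\}$ gives $|U(A)|=n$ and $\PD(A)\ge nM+k=D$. Conversely, if $\PD(A)\ge D$ for some $|A|=k$, then $|U(A)|\cdot M\ge nM+k-nm>(n-1)M$ by the choice of $M$, forcing $|U(A)|=n$; thus the corresponding sets form a cover. Since the parameter $k$ is preserved exactly, the reduction transfers \Wh{2}-hardness from \SC to \MAPPD within the stated structural class.

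The main obstacle is purely bookkeeping: ensuring that the constructed digraph is genuinely a phylogenetic network, i.e.\ that out-degrees of tree vertices and in-degrees of reticulations are at least two, that the root has out-degree at least two, and that no unintended root-to-leaf path of a different shape is created by the preprocessing. The standard trick of duplicating rare elements and rare sets (and padding the universe if $n=1$) resolves this without changing $k$ or enabling a trivial solution, so the reduction goes through in polynomial time with parameter preserved.
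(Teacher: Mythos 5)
This statement is quoted from the literature (\cite{bordewichNetworks,MAPPD}) and the paper gives no proof of its own, so there is nothing internal to compare against; your reduction from \SC\ via the three-layer network (root $\to$ element tree vertices $\to$ set reticulations $\to$ leaves) is exactly the construction behind the cited result, and it is correct. Two small bookkeeping remarks: your claimed bound $\alpha\le nm$ is not quite right in general (the middle and bottom edges can contribute up to $k(n+1)\le m(n+1)$), but the argument survives because when $|U(A)|\le n-1$ every selected set has at most $n-1$ elements, which re-establishes the needed gap --- or one can simply take $M:=nm+m+1$ to avoid the issue; and for the degree repair, duplicating each \emph{set} (to give every element a second occurrence) plus one shared dummy element added to every set (to give every set a second element) is the clean way to phrase the preprocessing you gesture at.
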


Recently, \MAPPD has been considered in semidirected networks.
In semi directed networks,~\MAPPD can be solved in~$\Oh^*(2^\ell)$ time, where~$\ell$ is the level of the network~\cite{holtgrefePANDA}.

\subparagraph*{Our Contribution.}
We analyze \MAPPDD and \fMAPPDD with respect to the parameters~$k$, $\Dbar$, $\sw$, $\delta$, and~$h$ and show, for any combination of these parameters, whether \fMAPPDD is \Wh{1}-hard, or admits an \FPT algorithm.
For \MAPPDD, we only leave one case as a conjecture and prove all others.
All algorithms we prove for the generalization \WMAPPDD.
Parameterized results for \MAPPDD and \fMAPPDD are displayed in~\Cref{tab:results}.

In \Cref{sec:Dbar}, we prove that~\WMAPPDD is \FPT with respect to~$\Dbar + \sw$.
In \Cref{sec:k}, we show that~\WMAPPDD is \FPT with respect to~$k+\sw+h+\delta$.
If any of these parameters is dropped, \fMAPPDD is \Wh{1}-hard.

The algorithm presented in \Cref{sec:Dbar} is our primary methodological contribution.
In this novel approach, only a single color per edge is used in a color coding algorithm.
This can be used to prove that \PDD and \fPDD are \FPT with respect to~$\kbar + \sw$, see \Cref{cor:Dbar+sw}\ref{cor:Dbar+sw2}.
We expect this to be applicable for similar problems parameterized with~$\kbar$ as well.

\looseness=-1
Due to space restrictions, proofs of theorems and lemmas marked with~$\star$ are partly or fully deferred to the appendix.

\begin{table}[t]
	\resizebox{1.1\columnwidth}{!}{%
	\begin{tabular}{lll|ll|ll|ll}
		$\sw$ & $\delta$ & $h$ & \multicolumn{2}{c|}{parameter alone} & \multicolumn{2}{c|}{$k$ + parameter} & \multicolumn{2}{c}{$\Dbar$ + parameter} \\
		\hline
		\XSolidBrush & \XSolidBrush & \XSolidBrush & \multicolumn{2}{l|}{\emph{no parameter}} & \Wh{2}-h & Thm.~\ref{thm:MAPPD},\cite{bordewichNetworks,MAPPD} & \Wh{1}-h & Thm.~\ref{thm:PDD+}\ref{itm:Dbar-PDD}\&\ref{itm:Dbar-fPDD},\cite{PDD,fPDD} \\
		\XSolidBrush & \XSolidBrush & \checkmark & p-\NP-h & Thm.~\ref{thm:faller},\cite{faller} & \Wh{2}-h & Thm.~\ref{thm:MAPPD},\cite{bordewichNetworks,MAPPD} & \Wh{1}-h & Thm.~\ref{thm:PDD+}\ref{itm:Dbar-PDD}\&\ref{itm:Dbar-fPDD},\cite{PDD,fPDD} \\
		\XSolidBrush & \checkmark & \XSolidBrush & p-\NP-h & Thm.~\ref{thm:faller},\cite{faller} & \Wh{2}-h & Cor.~\ref{cor:binary} & \Wh{1}-h & Thm.~\ref{thm:PDD+}\ref{itm:Dbar-PDD}\&\ref{itm:Dbar-fPDD},\cite{PDD,fPDD} \\
		\XSolidBrush & \checkmark & \checkmark & p-\NP-h & Thm.~\ref{thm:faller},\cite{faller} & \multicolumn{2}{l|}{* \emph{see caption}} & \Wh{1}-h & Thm.~\ref{thm:PDD+}\ref{itm:Dbar-PDD}\&\ref{itm:Dbar-fPDD},\cite{PDD,fPDD} \\
		\checkmark & \XSolidBrush & \XSolidBrush & p-\NP-h & Thm.~\ref{thm:faller},\cite{faller} & \Wh{2}-h & Thm.~\ref{thm:MAPPD},\cite{bordewichNetworks,MAPPD} & \FPT & Cor.~\ref{cor:Dbar+sw}\ref{cor:Dbar+sw1} \\
		\checkmark & \XSolidBrush & \checkmark & p-\NP-h & Thm.~\ref{thm:faller},\cite{faller} & \Wh{2}-h & Thm.~\ref{thm:MAPPD},\cite{bordewichNetworks,MAPPD} & \FPT & Cor.~\ref{cor:Dbar+sw}\ref{cor:Dbar+sw1} \\
		\checkmark & \checkmark & \XSolidBrush & p-\NP-h & Thm.~\ref{thm:faller},\cite{faller} & \Wh{2}-h & Cor.~\ref{cor:binary} & \FPT & Cor.~\ref{cor:Dbar+sw}\ref{cor:Dbar+sw1} \\
		\checkmark & \checkmark & \checkmark & p-\NP-h & Thm.~\ref{thm:faller},\cite{faller} & \FPT & Cor.~\ref{cor:k+sw+S-1PDD} & \FPT & Cor.~\ref{cor:Dbar+sw}\ref{cor:Dbar+sw1}
	\end{tabular}
	}
	\caption{This table shows parameterized complexity results.
	Except for the marked cell, \fMAPPDD and \MAPPDD have the same complexity result.
	All \FPT results are new.
	For any set of taxa~$A$, in polynomial time~$\PD(A)$ and whether~$A$ is \gviable can be computed.
	Consequently, by iterating over all subsets of~$X$, \WMAPPDD is trivially \FPT for~$n = k + \kbar < k + \Dbar$.\\
	Entry *:
	\fMAPPDD is \Wh{1}-hard when parameterized with~$k + \delta + h$ or even~$D + \delta + h$~(Thm.~\ref{thm:PDD+}\ref{itm:k-fPDD})
	while we conjecture \MAPPDD to be \FPT when parameterized with~$k + \delta + h$~(Con.~\ref{con:k+d+h}).
	}
	\label{tab:results}
\end{table}

\subsection{Preliminary Observations}
By \Cref{thm:MAPPD} \MAPPD is \Wh{2}-hard with respect to~$k$ even for a network of a small height~\cite{MAPPD}.
% Therefore, 
In this hardness reduction,
however, the maximal in-degree of reticulations is big.
It is an easy observation, 
% that we can easily replace vertices of a large degree with a stack of vertices---this works for reticulations and tree vertices alike.
that, in polynomial time, one can replace vertices of a large degree with a stack of vertices, where the newly created edges have negligible weight compared to the original edges of the network.
This works for reticulations and tree vertices alike.
We obtain the following result.
\todo{Give this corollary a full proof in the journal version. JS: I think it is already quite complete. But we can think of it in this sense.}

\begin{corollary}
	\label{cor:binary}
	\MAPPD is \Wh{2}-hard with respect to~$k$ even if the network is binary.
\end{corollary}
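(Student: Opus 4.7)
The plan is to start from the \MAPPD instance $(\Net,k,D)$ produced by the reduction in Theorem~\ref{thm:MAPPD}, in which only the in-degrees of reticulations and the out-degrees of tree vertices may be unbounded, and to replace each high-degree vertex by a polynomial-size binary gadget while preserving the answer and leaving~$k$ fixed.

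For a reticulation~$r$ with parents $u_1,\dots,u_d$, I would replace~$r$ by a chain of $d-1$ fresh reticulations $r_1,\dots,r_{d-1}$: $r_1$ has in-edges from $u_1$ and $u_2$; for $i\ge 2$, $r_i$ has in-edges from $r_{i-1}$ and $u_{i+1}$; and $r_{d-1}$ inherits the single out-edge of~$r$. Each $r_i$ then has in-degree exactly~$2$. Symmetrically, for a tree vertex~$v$ with children $w_1,\dots,w_d$ I would replace~$v$ by a chain $v_1,\dots,v_{d-1}$ of binary tree vertices, where $v_i$ has children $w_i$ and $v_{i+1}$ for $i<d-1$, and $v_{d-1}$ has children $w_{d-1}$ and $w_d$. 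The original edges $u_i r$ and $v w_i$ reappear as edges in the new network and keep their identities; the only new edges are the chain edges $r_i r_{i+1}$ and $v_i v_{i+1}$. Iterating both gadgets exhaustively yields a binary network~$\Net'$ with only polynomially many additional edges.

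Next I would fix the weights. Let $N$ be the total number of edges in~$\Net'$ and set $M := N+1$. Multiply every inherited edge weight by~$M$, and give every new chain edge weight~$1$. The decisive observation is that for every $A\subseteq X$ and every inherited edge $e$, the edge~$e$ lies in $E_{\Net'}(A)$ if and only if its ancestor copy lies in $E_{\Net}(A)$: each new chain vertex has precisely the same set of descendant leaves as the original vertex it subdivides, so offspring sets, and hence the predicate defining $E(\cdot)$, transfer verbatim. Consequently
\[ \PDsub{\Net'}(A) \;=\; M\cdot \PD(A) + C(A), \qquad 0\le C(A)\le N < M, \]
where $C(A)$ counts the new chain edges whose head has an $A$-offspring. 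Setting $D':=M\cdot D$, the equivalence $\PDsub{\Net'}(A)\ge D'$ iff $\PD(A)\ge D$ then follows: the forward direction uses $C(A)<M$ together with the integrality of~$\PD$, and the reverse direction uses $C(A)\ge 0$.

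The reduction runs in polynomial time, outputs a binary network, and leaves~$k$ untouched, so \Wh{2}-hardness in~$k$ transfers from Theorem~\ref{thm:MAPPD} to binary networks. The only piece that demands care---and which I view as the main bookkeeping obstacle---is the offspring-set invariance for the tree-vertex chain, because there $\off(v_i)=\off(w_i)\cup\dots\cup\off(w_d)$ changes with~$i$; one must verify uniformly that the inherited edges $v_i w_i$ (which keep the weight of the original $v w_i$) are in $E_{\Net'}(A)$ on exactly the same $A$ as the original edges, so that the displayed identity is clean.
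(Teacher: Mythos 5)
Your proposal is correct and follows essentially the same route as the paper, which only sketches the argument as ``replace vertices of large degree with a stack of vertices, where the newly created edges have negligible weight compared to the original edges''; your chains of binary reticulations and tree vertices, the weight scaling by $M=N+1$ with unit weights on chain edges, and the threshold $D'=M\cdot D$ are exactly the intended instantiation of that sketch. The offspring-set invariance you flag at the end does go through, since every inherited edge keeps its original head and hence its original offspring set, so nothing further is needed.
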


If the food web is an out-tree, each taxon~$x\ne s_\Food$ has exactly one prey.
We conclude that the result of~\Cref{thm:faller} also holds for \fPDD.
\begin{corollary}
	\label{cor:faller}
	\fPDD is \NP-hard even if the phylogenetic tree has a height of 2 and the food web is an out-tree.
\end{corollary}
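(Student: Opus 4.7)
The plan is to derive this directly from \Cref{thm:faller} by observing that the two notions of viability coincide whenever the food web is an out-tree. Recall that a set~$A$ is \viable if every non-source~$x\in A$ has at least one prey in~$A$, while~$A$ is \fviable if every non-source~$x\in A$ has \emph{all} of its prey in~$A$. In an out-tree, each non-root vertex has exactly one in-edge, so every non-source taxon~$x$ has exactly one prey~$y$. Hence ``at least one prey of~$x$ lies in~$A$'' and ``all prey of~$x$ lie in~$A$'' both reduce to ``$y \in A$''.

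Concretely, I would start from an arbitrary instance~$\Instance = (\Tree, \Food, k, D)$ of \PDD produced by the reduction in \Cref{thm:faller}, in which~$\Tree$ has height~$2$ and~$\Food$ is an out-tree. I would then argue that, for this~$\Food$, a set~$A\subseteq X$ is \viable if and only if it is \fviable, by the single-prey observation above. Interpreting~$\Instance$ as an instance of \fPDD therefore preserves the yes/no answer, so \fPDD is \NP-hard on instances with the same structural restrictions. No further transformation is needed; the reduction is the identity.

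There is no real obstacle here beyond making the single-prey argument explicit and checking that the source of~$\Food$ is handled correctly (sources impose no viability constraint in either definition, so they play the same role in both problems).
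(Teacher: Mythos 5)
Your proposal is correct and matches the paper's own (very brief) justification: the paper likewise observes that in an out-tree each taxon other than the source has exactly one prey, so \viable and \fviable coincide and \Cref{thm:faller} transfers directly to \fPDD. Your write-up just makes the same one-line argument explicit.
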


\ifJournal
\todosi{Parameter~$D$ does not appear in our table and we also do not have a complete dichotomy for it for \fMAPPDD. I would exclude it for the journal version. MJ: Do we mean exclude the following paragraph + corollary from the IPEC submission, but include it in the journal version? If so I agree.}

Since~\fPDD is~\Wh{1}-hard when parameterized with~$D$ (Thm.~\ref{thm:PDD+}\ref{itm:k-fPDD}, \cite{WeightedFW}), also the more general \fMAPPDD is.
In contrast, \PDD is \FPT when parameterized with~$D$~\cite{PDD}.
The algorithm which is used to prove this result, uses a color-coding technique in which the edges of the phylogenetic trees are colored with a set of colors which have the size of the weight of the edge.
This coloring then is extended to taxa where a taxon~$x$ is colored in the union of the colors of edges that are on a path to~$x$.
By this, now only the colors of taxa are considered and not anymore the phylogenetic tree.
This approach can without changes be adopted to phylogenetic networks and we obtain the following result.

\begin{corollary}
	\label{thm:D}
	\MAPPDD can be solved in~$\Oh(2^{3.03(2D+k)+o(D)} \cdot nm + n^2)$~time.
\end{corollary}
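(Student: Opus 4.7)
The plan is to adapt the color-coding dynamic program of~\cite{PDD} for \PDD parameterized by~$D$ to the network setting, replacing only the step where edge colors are extended to taxa so that it accommodates the possibility of multiple root-to-taxon paths. Assign to each edge $e\in E(\Net)$ a set $C(e)$ of $\w(e)$ fresh colors, pairwise disjoint across edges, drawn from a palette of total size $W=\sum_{e\in E(\Net)}\w(e)$. For each taxon~$x\in X$, define
\begin{equation*}
C(x)\;:=\;\bigcup\bigl\{C(uv)\,\big|\,uv\in E(\Net),\;x\in\off(v)\bigr\}.
\end{equation*}
The central identity is that for every $A\subseteq X$,
\begin{equation*}
\PD(A)\;=\;\Bigl|\bigcup_{x\in A}C(x)\Bigr|,
\end{equation*}
because the right-hand side counts exactly the colors assigned to the edges in $E(A)$, each contributing $|C(e)|=\w(e)$. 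This identity is structure-agnostic: when $\Net$ is a DAG rather than a tree, multiple root-to-$x$ paths may contribute colors to $C(x)$, but the set union absorbs any duplicates so the equation remains valid by construction. This is the sole place at which the move from trees to networks has any algorithmic consequence.

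With the colored taxa in hand, the remainder of the algorithm proceeds exactly as in~\cite{PDD}. A perfect hash family from~\Cref{prop:perfectHashFamily} folds the $W$ colors down to a palette of size proportional to $2D+k$, guaranteeing that for any would-be solution~$A$ of size at most~$k$ with $\PD(A)\ge D$, some member of the family is injective on the items one needs to track to recover~$A$. For each hash function~$f$, the residual task---find $A\subseteq X$ with $|A|\le k$, $|f(C(A))|\ge D$, and $A$ \viable in~$\Food$---is solved by the DP of~\cite{PDD}: iterate over taxa in a topological order of~$\Food$, maintain a table indexed by (subset of folded colors seen, number of taxa chosen), and at each inclusion step enforce \viability locally by verifying that the taxon is a source or else has a prey already selected. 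Since this DP queries only $C(\cdot)$ and~$\Food$, never~$\Net$ itself, no further modification is needed.

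The main---indeed the only non-routine---step is the verification of the structure-agnostic identity above; once it is established, the complexity analysis, derandomization, DP correctness, and viability handling all transfer verbatim from the tree case. Combining the hash family size of~\Cref{prop:perfectHashFamily} with the $2^{\Oh(D+k)}$ subset-DP states and polynomial per-state work yields the stated bound $\Oh\!\bigl(2^{3.03(2D+k)+o(D)}\cdot nm+n^{2}\bigr)$, where the $n^{2}$ summand arises from the one-off computation of all~$C(x)$ via a topological traversal of ancestor relations in~$\Net$.
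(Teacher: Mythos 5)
Your proposal is correct and matches the paper's own argument: the paper likewise observes that the algorithm of~\cite{PDD} colors each tree edge with $\w(e)$ colors, extends the coloring to taxa as the union of colors on edges lying on paths to that taxon, and thereafter works only with the colored taxa and the food web, so the construction carries over to networks unchanged. Your explicit identity $\PD(A)=\bigl|\bigcup_{x\in A}C(x)\bigr|$ is exactly the (implicit) reason the paper gives for why multiple root-to-taxon paths cause no difficulty.
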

\fi

\section{Parameter~${\Dbar + \sw}$}
\label{sec:Dbar}

By \Cref{thm:PDD+}\ref{itm:Dbar-PDD} and \ref{itm:Dbar-fPDD}, \PDD and \fPDD are \Wh{1}-hard when parameterized by~$\Dbar$.
In this section, we prove that~\MAPPDD and \fMAPPDD are \FPT when parameterized by~$\Dbar+\sw$.
Further, we show that when the phylogenetic network is a tree, even an \FPT-running time with respect to the smaller parameter~$\kbar+\sw$ is possible.
To prove these results, we present our main methodological contribution.
This approach uses \emph{anchors}, as introduced in~\cite{TimePD}.
Yet, it is new to use this approach either on phylogenetic networks or for the smaller parameter~$\kbar$.

\newcommand{\thmDbar}[1]{
\begin{theorem}[$\star$]
	#1
	Let a tree extension~$T_\Food$ of the food web with scanwidth~$\sw$ be given.
	\begin{enumerate}[(a)]
		\item\label{thm:Dbar+sw1}\WMAPPDD can be solved in~$\Oh(2^{7.530\cdot \Dbar + \sw + \Oh(\log^2(\Dbar))} \cdot n \cdot |E(\Net)| \cdot \log |E(\Net)|)$ time.
		\item\label{thm:Dbar+sw2}\WPDD can be solved in~$\Oh(2^{15.059\cdot \kbar + \sw + \Oh(\log^2(\kbar))} \cdot n \cdot |E(\Net)| \cdot \log |E(\Net)|)$ time.
	\end{enumerate}
\end{theorem}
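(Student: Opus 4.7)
The plan is to combine a color coding argument with a dynamic program over the tree extension $T_\Food$ of the food web. The key observation is that, since edge weights of $\Net$ are positive integers, any valid solution $A$ has at most $\Dbar$ \emph{lost} edges (edges $uv$ of $\Net$ with $\off(v) \cap A = \emptyset$); dually, in the tree setting of part~\ref{thm:Dbar+sw2}, $A$ excludes exactly $\kbar$ taxa from $X$. I then invoke \Cref{prop:perfectHashFamily} to build a $(|E(\Net)|,\Dbar)$-perfect hash family on the network edges (respectively, an $(n,\kbar)$-family on taxa for part~\ref{thm:Dbar+sw2}). For any fixed optimal solution, at least one coloring assigns distinct colors to the lost objects, so it suffices to solve, for each such coloring, the restricted subproblem in which each color class contributes at most one lost object.

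For each fixed coloring I would run a bottom-up DP on $T_\Food$. The state at a node $v$ records: (i) which of the $\leq\sw$ food web edges passing over the tree-extension edge entering $v$ have their tail in $A$, giving $2^{\sw}$ options; (ii) a subset of $[\Dbar]$ (or $[\kbar]$) of colors that have already been committed as lost in the processed subtree, giving $2^{\Dbar}$ (or $2^{\kbar}$) options; and (iii) the accumulated lost PD, which is capped by $\Dbar$. At each taxon, we decide membership in $A$ and enforce \gviable locally, since the scanwidth bound guarantees that all incoming food web edges relevant to the current taxon are recorded in the passing-edge set. A network edge is declared \emph{kept} the first time a descendant taxon is placed in $A$, and it is attributed to the first \emph{anchor} in the bottom-up order in the spirit of~\cite{TimePD}; the perfect hash family guarantee prevents two distinct lost edges from ever being attributed to the same color.

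Two methodological refinements over~\cite{TimePD} are needed. First, each network edge receives only a \emph{single} color instead of a set of colors, which is what keeps the state small. Second, the order used is the non-linear tree order induced by $T_\Food$ rather than a linear species order; merging DP tables over children of tree-extension nodes is exactly what produces the $2^{\sw}$ dependence instead of a $2^{n}$ one. The main obstacle I anticipate is reconciling anchor bookkeeping across sibling subtrees of $T_\Food$: two taxa lying in different branches may both witness the same network edge as kept, so the DP must use the color labels, and the "committed" subset in the state, to ensure that each color is settled by at most one branch while inherited through the merged state for other branches. The running time then arises by multiplying the hash family size, roughly $e^{\Dbar}\Dbar^{\Oh(\log \Dbar)}$, with the $2^{\sw}\cdot 2^{\Dbar}$ DP state space and the polynomial per-transition work of checking \gviable and updating accumulated loss; the larger constants $7.530$ and $15.059$ come from additional branching factors in the DP transitions, in particular enumerating, for each color, the way in which it becomes committed at the processed node. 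Part~\ref{thm:Dbar+sw2} follows the same blueprint on the tree $\Tree$, coloring taxa rather than edges, which explains the worse dependence on $\kbar$.
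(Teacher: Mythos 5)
Your high-level architecture matches the paper's: a perfect hash family on the edges of $\Net$, followed by a bottom-up dynamic program over $T_\Food$ whose state carries a subset of the at most $\sw$ food-web edges passing over the current tree-extension edge (to certify $\Phi$-part-$\gamma$-viability) and a subset of colors already spent on lost edges. However, there is a genuine gap in how you identify, \emph{locally}, which edges of $\Net$ become lost when a given taxon dies. You color only the (at most $\Dbar$) lost edges. The paper colors $N=2\Dbar$ edges, because the certificate for ``the lost region above $x$ stops at vertex $u$'' is an \emph{anchor}: a sibling edge outgoing of $u$ that is not (yet) lost, and this anchor must itself carry a distinguished color from a second palette $\chi_2(x)$. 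Only with these anchor colors is the per-taxon lost set $F_{x,C}$ uniquely determined (\Cref{lem:FxC-unique}) and computable via a bipartite matching (\Cref{lem:FxC-compute}); the DP then carries \emph{two} color sets $C_1,C_2$ and each color can be in one of five states during the child-merge convolution, which together with the hash family of size $e^{N}N^{\Oh(\log N)}$ is exactly where $7.530\approx 2\log_2(5e)$ comes from. In your scheme, when the DP kills $x$ and guesses a set of lost-edge colors, nothing forces that guess to coincide with the true set of edges that lose their last surviving offspring at that moment: whether an ancestor edge $uv$ is lost depends on whether $u$ has a surviving descendant in a \emph{different} branch of $T_\Food$, and your state (lost colors plus passing food-web edges) does not encode this. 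The DP could therefore under-attribute lost weight and accept a \no-instance. Your remark that the constants ``come from additional branching factors'' does not repair this; your own accounting ($e^{\Dbar}\cdot 2^{\Dbar}$) lands near $2^{2.44\Dbar}$, not $2^{7.53\Dbar}$, which is a symptom of the missing anchor palette.

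Part~(b) has a second, independent problem: you propose coloring \emph{taxa} with $\kbar$ colors, but the quantity the DP must control is the weight of lost \emph{edges}, and a coloring of taxa gives no handle on partitioning the lost edge set among the killed taxa. The paper instead observes that in a tree the lost edges form a forest with at most $\kbar$ leaves, hence at most $2\kbar-1$ edges, and colors edges with $N=4\kbar-2$ colors (lost edges plus their anchors); this is precisely why the exponent doubles to $15.059\approx 4\log_2(5e)$. To fix your proposal you would need to (i) introduce the anchor color class, (ii) define the respecting sets $F_{x,C}$ and prove their uniqueness and the lower bound $\PD(X\setminus A)\ge \w(E(\Net))-\sum_{x\in A}\w(F_{x,\chi_2(x)})$ as in \Cref{lem:perfectTripleBound}, and (iii) in part~(b) keep the coloring on edges with the $4\kbar-2$ bound.
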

}
\thmDbar{\label{thm:Dbar+sw}}

As a consequence of this theorem, we obtain these results.
\begin{corollary}
	\label{cor:Dbar+sw}
	\begin{enumerate}[(a)]
		\item\label{cor:Dbar+sw1}\MAPPDD and \fMAPPDD are \FPT with respect to~$\Dbar + \sw$.
		\item\label{cor:Dbar+sw2}\PDD and \fPDD are \FPT with respect to~$\kbar + \sw$.
	\end{enumerate}
\end{corollary}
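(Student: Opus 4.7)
The plan is to derive the corollary immediately from \Cref{thm:Dbar+sw} and sketch the algorithm underlying that theorem. Both \MAPPDD and \fMAPPDD are special cases of \WMAPPDD (fix $\gamma\equiv 1$ or $\gamma(ux)=1/|\prey{x}|$), and \PDD, \fPDD are special cases of \WPDD when the phylogenetic network happens to be a tree; a tree extension of $\Food$ of width \sw can be computed in \FPT time in \sw, so this input assumption of the theorem is harmless. Hence it suffices to prove \Cref{thm:Dbar+sw}.

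The overall strategy is color coding on $E(\Net)$, combined with a dynamic program driven by the tree extension $T_\Food$. The starting observation is that the set $E_{\mathrm{lost}}:=E(\Net)\setminus E(S)$ of edges lying on no path to $S$ satisfies $\w_\Sigma(E_{\mathrm{lost}})\le\Dbar$; since weights are positive integers, $|E_{\mathrm{lost}}|\le\Dbar$. For the tree case we additionally observe that $E_{\mathrm{lost}}$ decomposes into subtrees of $\Tree$ whose leaves all lie in $X\setminus S$, and since every internal vertex of a phylogenetic tree has out-degree at least two, this forces $|E_{\mathrm{lost}}|\le 2\kbar$. We then invoke \Cref{prop:perfectHashFamily} with $k\in\{\Dbar,2\kbar\}$ to produce $e^{k}\cdot k^{\Oh(\log k)}\cdot\log|E(\Net)|$ colorings $\chi\colon E(\Net)\to[k]$, guaranteeing that for any fixed solution $S$ at least one $\chi$ is injective on $E_{\mathrm{lost}}$. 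It therefore suffices to search, per coloring, for a \emph{colorful} solution---one whose lost edges carry pairwise distinct colors.

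For a fixed $\chi$, we dynamic-program bottom-up on $T_\Food$. At each $v\in V(T_\Food)$ the table is indexed by (i) a subset $A\subseteq\GW(v)$ of the at most \sw crossing food-web edges marked as ``promised to contribute'' to viability (their prey endpoint is committed to lie in $S$) and (ii) a subset $C\subseteq[k]$ of colors already consumed by lost edges resolved at or below $v$; the entry stores the best partial objective (taxa chosen so far, remaining PD target). To handle edges of $\Net$, we adapt the anchor idea of~\cite{TimePD}: the \emph{anchor} of $uv'\in E(\Net)$ is the lowest $w\in V(T_\Food)$ with $\off(v')\subseteq T_\Food^{(w)}$, i.e.\ the LCA of $\off(v')$ in $T_\Food$. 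Only at this anchor can the DP decide whether $uv'$ is lost, and if so we attempt to add $\chi(uv')$ to $C$, failing if the color is already present. The key departures from~\cite{TimePD} are that each edge receives a \emph{single} color (rather than $\w(uv')$ colors) and $T_\Food$ imposes a partial rather than linear order on anchors; together these are what allow the smaller parameter $\kbar+\sw$ in part~(b). Transitions merge children via compatible union on $\GW$-states and on $C$ (enforcing colorfulness), and at each $v$ branch on $v\in S$ vs.\ $v\notin S$, locally verifying the viability of $v$ at the moment $v$ joins $S$ by combining the contributions already accounted for inside the subtree with those promised via $A$.

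The main obstacle is the mismatch between the graph where PD is measured ($\Net$) and the graph driving the DP ($T_\Food$): offspring of a $\Net$-vertex can be spread across many $T_\Food$-subtrees, so each lost edge must be charged to exactly one anchor and accounted at most once in $C$, otherwise the color-coding argument collapses. Bounding the state space by $2^{\sw+k}$ per node, combining with the $e^{k}\cdot k^{\Oh(\log k)}\cdot\log|E(\Net)|$ outer loop over hash functions, and absorbing the merge cost after binarizing $T_\Food$ by introduce/forget/join gadgets, then yields the claimed running times, with the $\kbar$-exponent in part~(b) being exactly twice the $\Dbar$-exponent in part~(a) owing to the $|E_{\mathrm{lost}}|\le 2\kbar$ bound.
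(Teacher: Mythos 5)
Your first paragraph is exactly the paper's own proof of this corollary: the paper derives it in one line from \Cref{thm:Dbar+sw}, since \MAPPDD and \fMAPPDD arise from \WMAPPDD by fixing $\gamma\equiv 1$ or $\gamma(ux)=1/|\prey{x}|$, and \PDD, \fPDD likewise from \WPDD. One caveat: the paper does \emph{not} claim that a minimum-scanwidth tree extension of $\Food$ can be computed in \FPT time in $\sw$ (it notes that computing scanwidth is \NP-hard and cites \FPT results only for the parameter ``level'' of a network); it simply assumes the tree extension is part of the input, and your derivation should carry the same assumption rather than the unsupported computability claim.

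Since you explicitly reduce the corollary to proving \Cref{thm:Dbar+sw} and then sketch that proof yourself, the sketch matters, and it has a genuine gap. You color only the at most $\Dbar$ (resp.\ roughly $2\kbar$) lost edges and propose to resolve each lost edge at the LCA in $T_\Food$ of its offspring set; but the DP state you describe (a subset of $\GW(v)$ plus a set of consumed colors) carries no information about \emph{which} taxa below $v$ have been killed, so at that LCA the algorithm cannot actually decide whether the edge is lost, and, more dangerously, nothing in your scheme certifies that an edge \emph{not} charged to any anchor really retains a surviving offspring---so the computed diversity can overcount and the algorithm can accept \no-instances. This is precisely what the paper's second color class buys: each killed taxon $x$ gets a respecting set $F_{x,\chi_2(x)}$ determined locally from $x$ and a color set $\chi_2(x)$, where the colors in $\chi_2(x)$ sit on \emph{sibling} edges (the paper's ``anchors'', a different object from your LCA-anchors) witnessing that each highest lost edge has a not-yet-lost sibling; the disjointness conditions on $(\chi_1,\chi_2)$ then force $\bigcup_{x}F_{x,\chi_2(x)}$ to cover every genuinely lost edge (\Cref{lem:perfectTripleBound}), which is why the paper needs $2\Dbar$ (resp.\ $4\kbar-2$) colors rather than your $\Dbar$ (resp.\ $2\kbar$). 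Without these certificates your single-color-class DP is not sound as stated.
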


% Words for Mark to add somewhere:
% Anchors are used to witnness the fact that for a given set of taxa $A$, the set of edges lost by not saving $A$ cannot be too large [todo: term for these kind of lost edges?]. Let TERM denoted the set of edges in NETWORK with all offspring in $A$, and suppose we know that $c(TERM) = C_1$. Then for an anchor edge $e$ with $c(e) \notin C_1$, we know that $e \notin TERM$. COnsequently, we know that $e$ has at least one offspring not in $A$, and so does any ancestor of $e$.

For the remainder of this section, fix a tree extension~$T_\Food$ of the food web.
Let $x_1,\dots x_n$ be an ordering of the taxa such that if $x_i$ is a parent of $x_j$ in~$T_\Food$ then $i < j$, and if $x_i$ and $x_j$ share a parent in~$T_\Food$ with $i<j$, then every vertex in the subtree of~$T_\Food$ rooted $x_i$ appears before every vertex in the subtree of~$T_\Food$ rooted $x_j$ in the ordering. 
Such an ordering can be found by taking a depth-first traversal of~$T_\Food$.
For a set of edges $F$ in a directed acyclic graph $G$, we say an edge $e = uv \in F$ is a \emph{highest edge in $F$} if no incoming edge of $u$ is in $F$, and similarly we say $uv \in F$ is a \emph{lowest edge in $F$} if no outgoing edge of $v$ is in $F$.
For a vertex $u$ with outgoing edges $uv$ and $uv'$, we say $uv'$ is a \emph{sibling edge of $uv$}.

The main ideas of our proof is as follows.
Our approach uses color coding techniques, wherein we generate a number of colorings on the edges of $\Net$, and seek a solution under the assumption that a certain set fulfills that each edge in the set has a different color.

Our dynamic programming algorithm will search for a structure that we define below as \emph{perfect triples}.
Roughly speaking, a perfect triple $(A,\chi_1,\chi_2)$ consists of a set of taxa $A$ such that $X\setminus A$ is a solution for our instance of \WMAPPDD, and colorings $\chi_1,\chi_2$ assigning each leaf in $A$ to a set of colors.
Suppose the leaves of $A$ are 'killed' one at a time, in order determined by the depth-first traversal of~$T_\Food$.
Then as each leaf is deleted, a certain set of edges will be 'lost' in the sense that they no longer have any offspring that have not been killed.
For each $x \in A$, the set $\chi_1(x)$ corresponds to the colors of the edges that are lost when $x$ is deleted. 
Furthermore, for each highest edge $e$ that is lost when $x$ is killed, there must be a corresponding sibling edge $e'$ that has not yet been lost (otherwise the parent edge(s) of $e$ would also be lost.
The colors of these sibling edges are represented by the set $\chi_2(x)$.
Note that some of these sibling edges may themselves be lost when a later leaf from $A$ is killed.
We may assume, via standard color-coding techniques, that all the lost edges and sibling edges together are multicolored.

Our dynamic programming algorithm will keep track of the existence of perfect triples satisfying certain properties.
In particular, we store the minimum possible total weight of the set of lost edges for a perfect triple.

% \mj{[Todo: paragraph outlining the main ideas.
% We do color coding, which allows us to assume some target set of edges is multicolored (define).
% We're interested in finding a set of edge $E_1$ that contains all the edges above a set $A$ of $\kbar$ leaves, and for which every non-leaf edge in $E_1$ has a child edge in $E_1$. (This corresponds to the edges we expect to lose if we save all but the taxa in $A$ (define a term for this?))
% In addition, we want to a set of edges $E_2$ such that every edge in $E_1$ without all its parents in $E_1$ has at leas one sibling edge in $E_2$. Furthermore we want that $E_1 \cup E_2$ all have different colors.
% For such a tuple $(A,E_1,E_2)$ it can be shown that Ext(A) is a subset of $E_1$, and so by losing $A$ we lose at most $\w_\Sigma(E_1)$ diversity. 
% Todo still: talk about anchors / respecting sets, and why they are relevant. Talk about how some anchors for a leaf may end up in the extinction set for anotther leaf
% ]}

\looseness=-1
We now give some formal definitions.

\begin{definition}
Fix an integer~$N$ and a mapping~$c: E_T(\Net) \to [N]$.
For a taxon~$x\in X$ and a set of colors~$C \subseteq [N]$, a set of edges~$F \subseteq E_T(\Net)$ is \emph{$(x,C)$-respecting}, if
\begin{itemize}
	\item $F$ contains the edge incoming at~$x$,
	\item $c(e) \not\in C$ for each~$e\in F$,
	\item for each $uv \in F$ there is a directed path from~$v$ to~$x$ within~$\Net[F]$, and
	\item there is a set~$F'$ of edges, which we call \emph{anchors of~$F$}, such that
	\begin{itemize}
		\item $c(e) \in C$ for each edge~$e \in F'$,
		\item $c(e_1) \ne c(e_2)$ for each pair $e_1,e_2 \in F \cup F'$, and
		\item for each edge~$uv \in F$, exactly one of the following is true
		\begin{enumerate}[(1)]
			\item $F$ contains all edges incoming at~$u$ and~$c(e) \not\in C$ for each edge~$e$ outgoing of~$u$.
			\item $F$ does not contain any edge incoming at~$u$ and there is an edge~$e$ outgoing of~$u$ with~$c(e) \in C$ and~$e\in F'$.
		\end{enumerate}
	\end{itemize}
\end{itemize}
When $(x,C)$ is clear from the context we say an $(x,C)$-respecting set is simply \emph{respecting}.
\end{definition}

In~\Cref{fig:exampleRespecting} an example of a respecting set of edges is given and some example networks for which~$x$ under a certain coloring has no respecting set of edges.

\begin{figure}
	\DeclareRobustCommand{\tikzdot}[1]{\tikz[baseline=-0.6ex]{\node[draw,fill=#1,inner sep=2pt,circle] at (0,0) {};}}
	\definecolor{col1}{HTML}{d95f02}
	\definecolor{col2}{HTML}{e7298a}
	\definecolor{col3}{HTML}{7570b3}
	\definecolor{col4}{HTML}{e6ab02}
	\definecolor{col5}{HTML}{66a61e}
	\definecolor{col6}{HTML}{1b9e77}
	\definecolor{col7}{HTML}{666666}
	
	\definecolor{bgcol1}{HTML}{ebebeb}
	\definecolor{bgcol2}{HTML}{c8e6c9}
	\definecolor{bgcol3}{HTML}{bbdefb}
	
	\centering
	\resizebox{\textwidth}{!}{
		\begin{tikzpicture}
			% RIGHT
			\node [leaf,label=left:$x$] (0) at (5, -0.5) {};
			\node [textnode] at (6.25, 0) {$\tikzdot{col1}\in C$};
			\node [leaf,label=left:$x$] (1) at (9.25, -0.5) {};
			\node [textnode] at (11.5, 0) {$\tikzdot{col4}\not\in C$};
			\node [leaf] (2) at (10.75, -0.5) {};
			\node [reti,small] (3) at (5, 0.5) {};
			\node [vertex,small] (4) at (4.25, 1.5) {};
			\node [vertex,small] (5) at (5.75, 1.5) {};
			\node [vertex,small] (6) at (10, 0.5) {};
			\node [vertex,small] (7) at (10, 1.5) {};
			\node [vertex,small] (8) at (4.25, -2.5) {};
			\node [vertex,small] (9) at (5.75, -2.5) {};
			\node [reti,small] (10) at (5, -3.5) {};
			\node [leaf] (11) at (3.5, -3.5) {};
			\node [leaf] (12) at (6.5, -3.5) {};
			\node [leaf,label=left:$x$] (13) at (5, -4.5) {};
			\node [textnode] at (6.25, -4) {$\tikzdot{col1}\in C$};
			\node [reti,small] (14) at (10, -4) {};
			\node [leaf,label=left:$x$] (15) at (10, -5) {};
			\node [textnode] at (11.25, -4.5) {$\tikzdot{col2},\tikzdot{col6}\in C$};
			\node [vertex,small] (16) at (8.5, -2.5) {};
			\node [vertex,small] (17) at (10, -2.5) {};
			\node [vertex,small] (18) at (11.5, -2.5) {};
			\node [leaf] (19) at (8, -3.5) {};
			\node [leaf] (20) at (11.5, -3.5) {};
			\node [leaf] (21) at (12.25, -3.5) {};
			\node [leaf] (22) at (9.5, -3) {};
			\coordinate (23) at (10, -1.5) {};
			\coordinate (24) at (8, -2.5) {};
			\coordinate (25) at (12, -2.5) {};
			\coordinate (26) at (5, -1.5) {};
			\coordinate (27) at (3, -2.5) {};
			\coordinate (28) at (7, -2.5) {};
			\coordinate (29) at (10, 2.5) {};
			\coordinate (30) at (8, 1.5) {};
			\coordinate (31) at (12, 1.5) {};
			\coordinate (32) at (5, 2.5) {};
			\coordinate (33) at (3, 1.5) {};
			\coordinate (34) at (7, 1.5) {};

			% LEFT
			\node [textnode] at (-3.5, -2) {$C = \{\tikzdot{col1},\tikzdot{col2},\tikzdot{col3},\tikzdot{col4}\}$};
			
			\node [bgvertex,fill=bgcol3] (35) at (-0.75, 1) {};
			\node [bgvertex,fill=bgcol3] (36) at (-0.25, -0.5) {};
			\node [bgvertex,fill=bgcol3] (37) at (-1, -1.5) {};
			\node [bgvertex,fill=bgcol3] (37') at (-1.6, -1.2) {};
			\node [bgvertex,fill=bgcol3] (38) at (-2.7, 0) {};
			\node [bgvertex,fill=bgcol3] (38') at (-2.3, -.6) {};
			\node [bgvertex,fill=bgcol3] (39) at (-2.15, 1) {};
			\node [bgvertex,fill=bgcol3] (40) at (-3.25, 1) {};
			\node [bgvertex,fill=bgcol3] (41) at (-1, -2.5) {};
			\node [bgvertex,fill=bgcol2] (47) at (-4.75, 0) {};
			\node [bgvertex,fill=bgcol2] (42) at (-1.9, 0) {};
			\node [bgvertex,fill=bgcol2] (65) at (1, 0) {};
			\node [bgvertex,fill=bgcol2] (48) at (1.25, -1.5) {};
			\draw [line width=12pt,color=bgcol2] (39.center) to (42.center);
			\draw [line width=12pt,color=bgcol2] (35.center) to (65.center);
			\draw [line width=12pt,color=bgcol2] (40.center) to (47.center);
			\draw [line width=12pt,color=bgcol2] (36.center) to (48.center);
			\node [bgvertex,fill=bgcol3] (35) at (-0.75, 1) {};
			\node [bgvertex,fill=bgcol3] (36) at (-0.25, -0.5) {};
			\node [bgvertex,fill=bgcol3] (39) at (-2.15, 1) {};
			\node [bgvertex,fill=bgcol3] (40) at (-3.25, 1) {};
			\draw [line width=12pt,color=bgcol3] (40.center) to (38.center);
			\draw [line width=12pt,color=bgcol3] (39.center) to (38.center);
			\draw [line width=12pt,color=bgcol3] (35.center) to (37.center);
			\draw [line width=12pt,color=bgcol3] (36.center) to (37.center);
			\draw [line width=12pt,color=bgcol3] (37.center) to (41.center);
			\draw [line width=12pt,color=bgcol3] (38.center) to (38'.center);
			\draw [line width=12pt,color=bgcol3] (38'.center) to (37'.center);
			\draw [line width=12pt,color=bgcol3] (35.center) to (37'.center);
			\draw [line width=12pt,color=bgcol3] (37'.center) to (37.center);

			\node [vertex,small,label=left:$u_3$] (35) at (-0.75, 1) {};
			\node [vertex,small,label=right:$u_4$] (36) at (-0.25, -0.5) {};
			\node [reti,small] (37) at (-1, -1.5) {};
			\node [reti,small] (37') at (-1.6, -1.2) {};
			\node [reti,small] (38) at (-2.7, 0) {};
			\node [vertex,small] (38') at (-2.3, -.6) {};
			\node [vertex,small,label=left:$u_2$] (39) at (-2.15, 1) {};
			\node [vertex,small,label=left:$u_1$] (40) at (-3.25, 1) {};
			\node [leaf,label=left:$x$] (41) at (-1, -2.5) {};
			\node [leaf] (47) at (-4.75, 0) {};
			\node [leaf] (43) at (-4, 0) {};
			\node [leaf] (46) at (-3.25, 0) {};
			\node [leaf] (42) at (-1.9, 0) {};
			\node [leaf] (45) at (0, 0) {};
			\node [leaf] (65) at (1, 0) {};
			\node [leaf] (38'') at (-2.5, -1.5) {};
			\node [leaf] (66) at (-0.25, -1.5) {};
			\node [leaf] (44) at (0.5, -1.5) {};
			\node [leaf] (48) at (1.25, -1.5) {};
			\node [vertex,small] (49) at (-3.25, 1.5) {};
			\node [vertex,small] (50) at (-2.15, 1.5) {};
			\node [vertex,small] (51) at (-0.75, 1.5) {};
			\coordinate (52) at (-2, 2.5) {};
			\coordinate (53) at (-4, 1.5) {};
			\coordinate (54) at (0, 1.5) {};
			\coordinate (55) at (2, 2.5) {};
			\coordinate (56) at (2, -5.2) {};
			
			% Bipartite
			\node [vertex,big,label=above:$u_1$] (57) at (-4, -3.5) {};
			\node [vertex,big,label=above:$u_2$] (58) at (-2.5, -3.5) {};
			\node [vertex,big,label=above:$u_3$] (59) at (-1, -3.5) {};
			\node [vertex,big,label=above:$u_4$] (60) at (0.5, -3.5) {};
			\node [vertex,big,fill=col1] (61) at (-4, -5) {};
			\node [vertex,big,fill=col2] (62) at (-2.5, -5) {};
			\node [vertex,big,fill=col3] (63) at (-1, -5) {};
			\node [vertex,big,fill=col4] (64) at (0.5, -5) {};

			% RIGHT
			\draw [col1,ultra thick] (4) to (3);
			\draw [col2,ultra thick] (5) to (3);
			\draw [col3,ultra thick] (3) to (0);
			
			\draw [col2,ultra thick] (7) to (6);
			\draw [col2,ultra thick] (6) to (1);
			\draw [col4,ultra thick] (6) to (2);
			
			\draw [col1,ultra thick] (8) to (11);
			\draw [col2,ultra thick] (8) to (10);
			\draw [col3,ultra thick] (10) to (13);
			\draw [col4,ultra thick] (9) to (10);
			\draw [col1,ultra thick] (9) to (12);
			
			\draw [col6,ultra thick] (16) to (19);
			\draw [col1,ultra thick,bend right=15] (16) to (14);
			\draw [col3,ultra thick] (14) to (15);
			\draw [col2,ultra thick] (17) to (22);
			\draw [col4,ultra thick] (17) to (14);
			\draw [col7,ultra thick,bend left=15] (18) to (14);
			\draw [col2,ultra thick] (18) to (20);
			\draw [col6,ultra thick] (18) to (21);

			\draw (23) to (25);
			\draw (25) to (24);
			\draw (24) to (23);
			\draw (26) to (28);
			\draw (28) to (27);
			\draw (27) to (26);
			\draw (29) to (31);
			\draw (31) to (30);
			\draw (30) to (29);
			\draw (32) to (34);
			\draw (34) to (33);
			\draw (33) to (32);
			
			% LEFT
			\draw (40) to (38);
			\draw (39) to (38);
			\draw (35) to (37);
			\draw (36) to (37);
			\draw (37) to (41);
			\draw (37') to (37);
			\draw (35) to (37');
			\draw (38') to (37');
			\draw (38) to (38');
			\draw [col6,ultra thick] (38') to (38'');
			\draw [col2,ultra thick] (39) to (42);
			\draw (35) to (36);
			\draw [col4,ultra thick] (35) to (45);
			\draw [col4,ultra thick] (35) to (65);
			\draw [col1,ultra thick] (40) to (47);
			\draw [col2,ultra thick] (40) to (46);
			\draw [col3,ultra thick] (40) to (43);
			\draw [col2,ultra thick] (36) to (66);
			\draw [col3,ultra thick] (36) to (48);
			\draw [col4,ultra thick] (36) to (44);
			\draw [col5,ultra thick] (49) to (40);
			\draw [col6,ultra thick] (50) to (39);
			\draw [col7,ultra thick] (51) to (35);
			\draw (52) to (54);
			\draw (54) to (53);
			\draw (53) to (52);
			\draw (55) to (56);
			\draw [col6,ultra thick] (35) to (36);
			
			% Bipartite
			\draw [col5,ultra thick] (57) to (61);
			\draw (57) to (63);
			\draw (57) to (62);
			\draw [col5,ultra thick] (58) to (62);
			\draw [col5,ultra thick] (59) to (64);
			\draw [col5,ultra thick] (60) to (63);
			\draw (60) to (64);
			\draw (60) to (62);

			\node [vertex,small] at (49) {};
			\node [vertex,small] at (50) {};
			\node [vertex,small] at (51) {};
			\node [vertex,small] at (16) {};
			\node [vertex,small] at (17) {};
			\node [vertex,small] at (18) {};
			\node [vertex,small] at (4) {};
			\node [vertex,small] at (5) {};
			\node [vertex,small] at (7) {};
			\node [vertex,small] at (8) {};
			\node [vertex,small] at (9) {};
		\end{tikzpicture}
	}
	\caption{\looseness=-1
		Top Left:
		An example network. The set~$F_{x,C}$ is highlighted in~\tikzdot{bgcol3} and one possible set~$F_{x,C}'$ in~\tikzdot{bgcol2}.
		For the sake of readability, colors of edges in~$F_{x,C}$ are omitted.
		Bottom Left:
		The bipartite graph which is constructed in~\Cref{lem:FxC-compute}.
		A perfect matching is highlighted.\\
		Right:
		Four examples of networks, where~$F_{x,C}$ is not a respecting set of edges.
	}
	\label{fig:exampleRespecting}
\end{figure}

We continue with proving some essential properties about respecting sets.
\begin{lemma}
	\label{lem:FxC-unique}
	For each taxon~$x\in X$ and each set~$C \subseteq [N]$ of colors, at most one set of edges is $(x,C)$-respecting.
\end{lemma}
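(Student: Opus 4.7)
The plan is to prove uniqueness by observing that every $(x,C)$-respecting set is forced by the definition to coincide with a single set traced out starting from the edge incoming at $x$. The key observation is that, for any vertex $u$ appearing as the tail of an edge in a respecting set $F$, the choice between case~(1) and case~(2) in the last bullet of the definition is not actually a free choice: case~(1) requires that every outgoing edge of $u$ has color outside $C$, while case~(2) requires at least one outgoing edge of $u$ to have color inside $C$, and these conditions are mutually exclusive and exhaustive (as soon as $u$ has any outgoing edge at all, which it does because $uv \in F$). Setting $\sigma(u):=1$ if every outgoing edge of $u$ has color outside $C$ and $\sigma(u):=0$ otherwise, the behaviour of $F$ at the tail of any of its edges becomes mechanical: if $\sigma(u)=1$, then $F$ must contain every edge incoming at $u$; if $\sigma(u)=0$, then $F$ must contain no edge incoming at $u$. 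Crucially, $\sigma$ depends only on $\Net$, $c$ and $C$, not on $F$.

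Given this, I would take two $(x,C)$-respecting sets $F_1$ and $F_2$ and prove $F_1\subseteq F_2$ (with $F_2\subseteq F_1$ following by symmetry) by induction on the length $\ell$ of a shortest directed path from the head $v$ of an edge $uv\in F_1$ to $x$ inside $\Net[F_1]$; such a path exists by the third bullet of the definition. The base case $\ell=0$ gives $v=x$, so $uv$ is the unique edge incoming at the leaf $x$, which both $F_1$ and $F_2$ must contain by the first bullet. For the inductive step, let $vw$ be the first edge of a shortest $v$-to-$x$ path in $\Net[F_1]$; the remaining path from $w$ to $x$ in $\Net[F_1]$ then has length $\ell-1$, so the induction hypothesis applied to $vw$ yields $vw\in F_2$. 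Because $uv$ and $vw$ both lie in $F_1$ and $uv$ is an edge incoming at the tail $v$ of $vw$, case~(1) must hold at $v$ for the edge $vw$ in $F_1$, which forces $\sigma(v)=1$. Since $\sigma(v)$ does not depend on which respecting set we look at, case~(1) must also hold at $v$ in $F_2$ with respect to the edge $vw\in F_2$, so every edge incoming at $v$ lies in $F_2$, and in particular $uv\in F_2$.

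The principal obstacle is arranging the induction along a structure that is simultaneously accessible from both $F_1$ and $F_2$; routing it through a shortest path inside $\Net[F_1]$ and using the induction hypothesis to transfer edges to $F_2$ one at a time resolves this cleanly, because each transferred edge provides the tail that anchors the next inductive step. A minor subtlety to flag is that the anchor set $F'$ plays no role in the uniqueness argument: it is an existence witness required by the definition, but it does not constrain which edges of $\Net$ lie in $F$ beyond what $\sigma$ already dictates, so the fact that different choices of $F'$ may be available does not threaten uniqueness of $F$ itself.
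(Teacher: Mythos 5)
Your proof is correct and rests on the same key observation as the paper's: whether case~(1) or case~(2) of the definition applies at the tail of an edge is determined solely by whether that tail has an outgoing edge colored in~$C$, independently of which respecting set one considers. The paper packages this as a proof by contradiction (taking the first edge of a $v'$-to-$x$ path that lies in both sets and deriving a clash between case~(1) for $F_1$ and case~(2) for $F_2$ at its tail), whereas you run a direct induction along shortest paths to $x$; these are the same argument organized differently, and your explicit $\sigma$-invariant arguably makes the mechanism clearer.
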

\begin{proof}
	Towards a contradiction, assume that~$F_1$ and~$F_2$ are $(x,C)$-respecting and that there is~$u'v'\in F_1$ such that~$u'v'\notin F_2$.
	Fix a set $F_1'$ of anchors of $F$, and a set $F_2'$ of anchors of $F_2$.
	Choose any path from~$v'$ to~$x$ in~$\Net[F_1]$ and let~$uv$ be the first edge on this path that occurs in~$F_2$, also.
	Such an edge exists as the edge incoming at~$x$ is in any respecting set of edges.
	As~$F_2$ does not contain all edges incoming at~$u$, there is an edge~$uw \in F_2'$ with~$c(uw) \in C$.
	But then~$F_1$ is not respecting, as $F_1$ contains an edge incoming at $u$.
	% it must not contain edges incoming at~$u$.
	%
	% This is a contradiction.
	% and not both sets are respecting.
\end{proof}

As a consequence of~\Cref{lem:FxC-unique}, we write~$F_{x,C}$ for the unique set of $(x,C)$-respecting edges, if existent.
If the set~$F_{x,C}$ exists, we write~$c(F_{x,C})$ for the colors on edges in~$F_{x,C}$.
By definition,~$c(F_{x,C})$ and~$C$ are disjoint.
If the set~$F_{x,C}$ does not exist, we define~$\w(F_{x,C}) = \infty$.
We note that the set of anchors is not necessarily unique.

\begin{lemma}
	\label{lem:FxC-compute}
	For a taxon~$x\in X$ and a set of colors~$C \subseteq [N]$, in~$\Oh(|E(\Net)| \cdot \sqrt{N})$ time, we can compute~$F_{x,C}$, or conclude that it does not exist.
\end{lemma}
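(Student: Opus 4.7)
The plan is to compute $F := F_{x,C}$ by a backward traversal starting from $x$, and then to resolve the anchor assignment via bipartite matching. The key observation is that for every vertex $u$ that is the tail of an edge in $F$, the choice between conditions~(1) and~(2) of the respecting-set definition is forced by whether $u$ has any outgoing edge of color in $C$: condition~(1) forbids such an edge, while condition~(2) requires one. These cases are mutually exclusive and jointly exhaustive, so the applicable case at $u$ is determined by $c$ and $C$ alone.

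I would initialize $F := \{e_x\}$, where $e_x$ is the edge incoming at $x$, together with a processing queue containing $e_x$. While the queue is non-empty, pop an edge $uv$ and, if $u$ has not yet been examined, inspect its outgoing edges. If some outgoing edge of $u$ has color in $C$, mark $u$ as \emph{anchor-needing}; otherwise add every incoming edge of $u$ that is not already in $F$ to both $F$ and the queue. Throughout, verify that every new edge of $F$ has color outside $C$ and that the colors appearing in $F$ remain pairwise distinct; if either check fails, report that $F_{x,C}$ does not exist. By \Cref{lem:FxC-unique} and the forcedness of the case at each tail, this greedy procedure returns exactly $F_{x,C}$ when it exists, and it runs in $\Oh(|E(\Net)|)$ time using a length-$N$ boolean array to track used colors and a boolean array on the vertices to avoid re-examining any tail.

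Next, construct a bipartite graph $H$ whose left side $U$ is the set of anchor-needing vertices, whose right side is $C$, and whose edges are pairs $(u,\gamma)$ such that $u$ has an outgoing edge of color $\gamma \in C$. A valid anchor set $F'$ exists if and only if $H$ admits a matching saturating $U$: such a matching assigns a distinct color in $C$ to each anchor-needing vertex, and disjointness from $c(F)$ is automatic because $c(F) \cap C = \emptyset$. If $|U|>|C|$, reject immediately; otherwise both sides of $H$ have at most $N$ vertices and $H$ has at most $|E(\Net)|$ edges, so the Hopcroft--Karp algorithm finds a saturating matching or certifies its non-existence in $\Oh(|E(\Net)|\sqrt{N})$ time. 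This step dominates the overall running time and matches the bound claimed in the lemma.

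The main obstacle will be justifying the correctness of the forward propagation, in particular that terminating the backward expansion at anchor-needing vertices does not violate the reachability condition requiring every head of an edge in $F$ to reach $x$ within $\Net[F]$. This holds because $uv$ is added to $F$ only after an edge descending from $v$ has been placed in $F$, so the path from $v$ to $x$ in $\Net[F]$ already exists; and the incoming edges of $u$ that condition~(2) excludes from $F$ play no role in that path. Once the correctness of the first phase is settled, the bipartite-matching reformulation of the anchor selection is straightforward and yields the stated running time.
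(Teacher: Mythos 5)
Your proposal is correct and follows essentially the same two-phase strategy as the paper: a forced backward propagation that builds $F$ in $\Oh(|E(\Net)|)$ time, followed by a Hopcroft--Karp matching to decide whether a valid anchor set $F'$ exists. The only (cosmetic) difference is that your bipartite graph places one left-vertex per anchor-needing \emph{tail vertex} rather than per highest \emph{edge} of $F$ as in the paper, which if anything tracks the definition of a respecting set more faithfully, since a single anchor edge outgoing of $u$ serves all edges of $F$ with tail $u$.
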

\begin{proof}
	We construct a unique set of edges $F$ such that either $F$ is respecting for~$x$ and~$C$, or such a set does not exist, as follows.
	Initially let $F$ contain the unique incoming edge $e$ at $x$.
	Now, for each edge $uv$ in $F$ in turn, if $u$ has no outgoing edge $uv'$ with $c(uv') \in C$, then add all incoming edges of $u$ to $F$.
	Repeat this process exhaustively to complete the construction of $F$.
	If $c(e) \notin C$ for some~$e\in F$ or if two edges have the same color, no respecting set exists.
	By using appropriate data structures, this can be implemented in~$\Oh(|E(\Net)|)$ time.
	
	It remains to decide whether there exists a valid set of anchors $F'$ for $F$.
	In particular, for each highest edge $uv \in F$, we need to choose one outgoing edge $uv'$ with $c(uv') \in C$ to add to $F'$ such that $c(e_1) \neq c(e_2)$ for each pair $e_1,e_2 \in F'$.
	This can be done by reducing to an instance of \textsc{Perfect Matching}, as follows.
	\ifJournal
	In \textsc{Perfect Matching}, we are given a graph~$G$ with vertex bipartition~$(A,B)$.
	It is asked whether a set of edges~$E' \subseteq E(G)$, such that each vertex is incident with at most one edge in~$E'$ and~$|E'| = \min\{|A|,|B|\}$.
	
	\fi
	Construct a bipartite graph $G$ with vertex set $F_h \cup C$, where $F_h$ corresponds to the set of highest edges in $F$.
	For each $e \in F_h$ and $c \in C$, add an edge $\{e,c\}$ to $G$ if $e$ has a sibling edge $e'$ with $c(e') = c$.
	Now, $F$ is an $(x,C)$-respecting set if and only if $G$ has a perfect matching covering $F_h$. 
	\textsc{Perfect Matching} can be solved in~$\Oh(|E(G)|\cdot \sqrt{|V(G)|})$ time with the famous Hopcroft-Karp Algorithm~\cite{karpinski1998fast}.
	For each edge in~$\Net$, there is at most one edge in~$G$.
	Thus, the overall running time is~$\Oh(|E(\Net)| \cdot \sqrt{N})$.
\end{proof}

With respecting sets defined, we now formally define perfect triples.
%
% Fix an extension tree~$T_\Food$ of $\Food$.
A triple~$(A,\chi_1,\chi_2)$, consisting of a set of taxa~$A\subseteq X$, 
% an ordering~$\Omega = x_1,\dots,x_{|A|}$ of the taxa in~$A$, 
and mappings~$\chi_1,\chi_2: A \to 2^{[N]}$, is \emph{perfect},~if
\begin{itemize}
	% \item if~$x_i,x_j\in A$ and~$x_ix_j \in E(\Food)$, then~$i<j$, % MJ: cut
	\item the sets~$\chi_i(x)$ and~$\chi_i(y)$ are pairwise disjoint for~$x,y \in A$ and~$i\in\{1,2\}$,
	\item the sets~$\chi_1(x_i)$ and~$\chi_2(x_j)$ are pairwise disjoint for~$x_i,x_j \in A$ and~$i\le j$, and
	\item for each~$x\in A$ there is a set of respecting edges~$F_{x,\chi_2(x)}\subseteq E(\Net)$ with~$\chi_1(x) = c(F_{x,\chi_2(x)})$.
\end{itemize}

To give some intuition behind the notion of a perfect triple, we observe that the existence of a perfect triple $(A,\chi_1,\chi_2)$ provides a lower bound on the phylogenetic diversity of $(X \setminus A)$.
The key idea is that as we remove the elements of $A$ from $X$, one at a time, the set of edges lost with the removal of each $x\in A$ is a subset of the edges in $F_{x,\chi_2(x)}$.

\newcommand{\lemPerfectTripleBound}[1]{
\begin{lemma}[$\star$]
	{#1}
	$\PD(X\setminus A) \geq \w(E(\Net)) - \sum_{x\in A} \w(F_{x,\chi_2(x)})$
	for every perfect\lb triple~$(A,\chi_1,\chi_2)$.
\end{lemma}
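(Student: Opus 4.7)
The plan is to delete the taxa of $A$ one at a time, in the order induced by $x_1,\ldots,x_n$, and to keep track of which edges become ``dead'' (have no surviving offspring) at each step. Write $y_1,\ldots,y_t$ for the elements of $A$ in this order, set $A_i:=\{y_1,\ldots,y_i\}$, and let $D_i$ be the set of edges $uv$ with $y_i\in\off(v)$ and $\off(v)\subseteq A_i$; these are precisely the edges that are killed by the removal of $y_i$ after $y_1,\ldots,y_{i-1}$ have already been removed. The sets $D_1,\ldots,D_t$ are pairwise disjoint and their union equals $\{uv\in E(\Net):\off(v)\subseteq A\}$, so
\[
\PD(X\setminus A) \;=\; \w(E(\Net)) - \sum_{i=1}^{t}\w(D_i).
\]
It therefore suffices to prove the key claim that $D_i\subseteq F_{y_i,\chi_2(y_i)}$ for every $i\in\{1,\ldots,t\}$.

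I would establish this claim by outer induction on $i$, with a nested induction on the number of descendants of $v$. Writing $F:=F_{y_i,\chi_2(y_i)}$, the inner base case $v=y_i$ is handled by the first bullet of the respecting-set definition, which puts the unique edge incoming at $y_i$ into $F$. For the inductive step, take $uv\in D_i$ with $v\neq y_i$ and choose a child $v'$ of $v$ with $y_i\in\off(v')$; then $\off(v')\subseteq\off(v)\subseteq A_i$ and $v'$ has strictly fewer descendants than $v$, so the inner hypothesis gives $vv'\in F$. Applying the respecting-set condition at the top endpoint $v$ of the edge $vv'$ leaves two alternatives: either (1) every incoming edge of $v$ lies in $F$, and in particular $uv\in F$ as desired, or (2) $v$ has an outgoing edge $vw$ with $c(vw)\in\chi_2(y_i)$.

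The main obstacle is ruling out alternative~(2), and this is where the color constraints of a perfect triple come into play. We have $\off(w)\subseteq\off(v)\subseteq A_i$, so either $y_i\in\off(w)$ or $\off(w)\subseteq A_{i-1}$. In the first subcase $vw\in D_i$ while $w$ has strictly fewer descendants than $v$, so the inner induction forces $vw\in F$; but the definition of a respecting set requires $c(e)\notin\chi_2(y_i)$ for every $e\in F$, contradicting $c(vw)\in\chi_2(y_i)$. In the second subcase $vw$ must have died at some earlier step $j<i$, so $vw\in D_j$; the outer induction hypothesis then gives $vw\in F_{y_j,\chi_2(y_j)}$, and by the defining property of a perfect triple $c(vw)\in c(F_{y_j,\chi_2(y_j)})=\chi_1(y_j)$. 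Combined with $c(vw)\in\chi_2(y_i)$ this contradicts the perfect-triple disjointness $\chi_1(y_j)\cap\chi_2(y_i)=\emptyset$, which holds because $y_j$ precedes $y_i$ in the ordering of $X$. Hence alternative~(2) is impossible and $uv\in F$, completing the induction. Summing $\w(D_i)\le\w(F_{y_i,\chi_2(y_i)})$ over $i$ and substituting into the displayed identity for $\PD(X\setminus A)$ yields the lemma.
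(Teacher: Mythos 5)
Your proof is correct and follows essentially the same route as the paper: both arguments kill the taxa of $A$ in the depth-first order of $T_\Food$, show that the edges dying at each step lie in the corresponding respecting set, and rule out the anchor alternative of the respecting-set dichotomy via the color-disjointness conditions of a perfect triple. The only difference is organizational --- you prove the sharper per-taxon containment $D_i\subseteq F_{y_i,\chi_2(y_i)}$ by a nested induction, whereas the paper proves the cumulative containment $\Ext(\{x_1,\dots,x_s\})\subseteq\bigcup_{i\le s}F_{x_i,\chi_2(x_i)}$ via a lowest-counterexample-edge argument.
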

}
\lemPerfectTripleBound{\label{lem:perfectTripleBound}}
\newcommand{\ProofPerfectTripleBound}{
\begin{proof}
	Recall that $x_1,\dots, x_n$ is the ordering of $X$ given by a depth-first traversal of $T_\Food$.
	For the sake of notational convenience, assume that $A = \{x_1,\dots, x_{|A|}\}$.
	\ifJournal
	(If this is not the case, say $A = \{x_{i_1}, \dots, x_{i_{|A|}}\}$, then we may replace $x_h$ with $x_{i_h}$ in the proof that follows.)
	\fi

	The intuitive idea behind our proof is to show that, as we kill each of the taxa $x_1, \dots, x_{|A|}$ in order, the amount of diversity we lose by killing $x_i$ is at most $\w(F_{x_i,\chi_2(x_i)})$.
	To this end, define $\Ext(A') := E(\Net)\setminus E(X\setminus A')$ for any $A' \subseteq X$. That is, $\Ext(A')$ is the set of edges in $\Net$ with all offspring in $A'$.
	
	We prove  the following claim by induction.
	$\Ext(\{x_1,\dots, x_s\}) \subseteq \bigcup_{i = 1}^{s} \w(F_{x_i,\chi_2(x_i)})$, for each $s \in [|A|]$. Note that by letting $s = |A|$, this implies $\PD(X\setminus A) = \w_\Sigma(E(\Net)\setminus \Ext(A)) =  \w_\Sigma(E(\Net) )- \w_\Sigma(\Ext(A)) \geq  \w_\Sigma(E(\Net) - \sum_{x\in A} \w(F_{x,\chi_2(x)})$, as required.
	
	For the base case, observe that $\Ext(\{x_1\})$ consists of the edges $uv$ for which $v$ has a path to $x$, and either~$v=x$ or~$v$ is a reticulation.
	But the definition of a respecting set implies that all such edges are also in $F_{x_1, \chi_2(x_1)}$, and so $\Ext(\{x_1\}) \subseteq F_{x_1, \chi_2(x_1)}$.
	
	For the inductive step, assume that $\Ext(\{x_1,\dots, x_s\}) \subseteq \bigcup_{i = 1}^{s} \w(F_{x_i,\chi_2(x_i)})$ for some $s < |A|$, and now assume for a contradiction that $\Ext(\{x_1,\dots, x_{s+1}\})$ is not a subset of  $\bigcup_{i = 1}^{s+1} \w(F_{x_i,\chi_2(x_i)})$.
	Then consider a lowest edge $e = uv$ in  $\Ext(\{x_1,\dots, x_{s+1}\}) \setminus \bigcup_{i = 1}^{s+1} \w(F_{x_i,\chi_2(x_i)})$.
	By definition edges outgoing of $v$ are also in $\Ext(\{x_1,\dots, x_{s+1}\})$, and as $uv$ is a lowest edge, all edges outgoing of $v$ are also in $\bigcup_{i = 1}^{s+1} \w(F_{x_i,\chi_2(x_i)})$.
	At least one child edge $vw$ is not in $\Ext(\{x_1,\dots, x_{s}\})$ (otherwise $uv \in \Ext(\{x_1,\dots, x_{s}\}) \subseteq  \bigcup_{i = 1}^{s} \w(F_{x_i,\chi_2(x_i)})$, and this edge $vw$ must be in $F_{x_{s+1}, \chi_2(x_{s+1})}$.
	Then as $v$ has an edge incoming which is not in $F_{x_{s+1}, \chi_2(x_{s+1})}$, it follows by definition of a respecting set that $vw$ has a sibling edge $vw'$ with $c(vw') \in \chi_2(s+1)$.
	
	Note however that $vw' \notin  \bigcup_{i = 1}^{s+1} \w(F_{x_i,\chi_2(x_i)})$, as $c(vw') \in \chi_2(x_{s+1})$, which is disjoint from $\chi_1(v_i)$ for each $i \leq s$, and $\chi_1(e) \in \chi_1(v_i)$ for each $e \in F_{x_i,\chi_2(x_i)}$.
	Therefore, $vw'$ is also not in $\Ext(\{x_1,\dots, x_{s+1}\})$, as otherwise $uv$ was not a lowest edge in  $\Ext(\{x_1,\dots, x_{s+1}\}) \setminus \bigcup_{i = 1}^{s+1} \w(F_{x_i,\chi_2(x_i)})$.
	It follows that $w'$, and therefore $v$, has a path to a leaf which is not in $\{x_1,\dots, x_{s+1\}}$, contradicting the assumption that $uv$ was not a lowest edge in  $\Ext(\{x_1,\dots, x_{s+1}\})$.
\end{proof}
}

Having these definitions, we now define a colored problem, which we use as an auxiliary problem for solving~\Cref{thm:Dbar+sw}.
In \exWMAPPDD{N}, besides the usual input of \WMAPPDD
\ifConference
$(\Net,\Food,k,D)$---where~$\gamma$ is a weighting of~$\Food$---%
\else
of a network~$\Net$, a food web~$\Food$ with weight function~$\gamma$, a budget~$k$, and a threshold of diversity~$D$,
\fi
we are given a mapping~$c: E(\Net) \to [N]$ of a color per edge.
We ask whether there exists a set of taxa~$A\subseteq X$ of size \emph{at least}~$\kbar$
% an ordering~$\Omega = x_1,\dots,x_{|A|}$, 
and mappings~$\chi_1,\chi_2: A \to 2^{[N]}$  such that~$X\setminus A$ is~\gviable, 
the triple~$(A,\chi_1,\chi_2)$ is perfect, and~$\sum_{x\in A} \w(F_{x,\chi_2(x)}) \le \Dbar$.

Note that the existence of a perfect triple $(A,\chi_1,\chi_2)$ does not imply that $X\setminus A$ is \gviable.

\newcommand{\lemexPPD}[1]{
\begin{lemma}[$\star$]
	{#1}
	Given a tree-extension~$T_\Food$ of the food web, we can solve instances of \exWMAPPDD{N} in~$\Oh(5^N \cdot 2^{\sw} \cdot n \cdot (N \cdot \kbar + |E(\Net)|))$~time.
\end{lemma}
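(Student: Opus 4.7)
\proofpara{Proof plan}
The plan is to solve \exWMAPPDD{N} by a bottom-up dynamic programming over the tree extension $T_\Food$, processing its nodes in postorder so that the processing order of the taxa matches the DFS ordering $x_1,\dots,x_n$ used to define perfect triples. At each node $v$ of $T_\Food$, the DP table is indexed by a triple $(\sigma, Y, a)$, where $\sigma \in [5]^N$ is a \emph{color role vector} assigning each color in $[N]$ one of five roles (free; already in some past $\chi_1$; in some past $\chi_2$ only, still eligible to become a $\chi_1$-color later; in some past $\chi_2$ and now also in a later $\chi_1$; or currently ``in use'' during the processing of the active taxon), $Y \subseteq \GW(v)$ is a subset of the food web edges passing over the tree-extension edge at $v$ encoding which external endpoints are assumed to lie in $X\setminus A$, and $a \in [\kbar]_0$ is the number of processed taxa placed in $A$. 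The stored value is the minimum of $\sum_{x \in A_{\text{proc}}} \w(F_{x,\chi_2(x)})$ over partial perfect triples on $T_\Food^{(v)}$ consistent with $(\sigma, Y, a)$.

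For the base case, at a leaf $v = x_i$ of $T_\Food$, I branch on $x_i \in A$ or $x_i \notin A$. If $x_i \notin A$, I verify $\gamma$-viability of $x_i$ against $Y$: the incoming food web edges of $x_i$ whose tails lie in $X\setminus A$ (as witnessed by $Y$ together with earlier processed taxa) must have total $\gamma$-weight at least $1$. If $x_i \in A$, I enumerate all candidate sets $C \subseteq [N]$ for $\chi_2(x_i)$ consistent with $\sigma$, invoke \Cref{lem:FxC-compute} to obtain $F_{x_i,C}$, set $\chi_1(x_i) := c(F_{x_i,C})$, check the disjointness conditions from the definition of a perfect triple, and update $\sigma$, $a$, and the stored weight accordingly.

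For the inductive step, at an internal node $v$ of $T_\Food$ I merge the table built from the previously processed children with the table of the next child in DFS order. A pair of states is \emph{compatible} when their color role vectors refine a common partition of $[N]$ (no color occurs in conflicting roles on the two sides, respecting the disjointness rules and the ``$i \le j$'' asymmetry between $\chi_1(x_i)$ and $\chi_2(x_j)$) and when their scanwidth subsets agree on $\GW(v)$ after lifting. For each merged state, I take the minimum of the sum of the two stored weights. At the root of $T_\Food$, I output \yes iff some entry has $a \ge \kbar$ and stored weight at most $\Dbar$; the recorded triple then witnesses a valid solution, and the scanwidth bookkeeping certifies $\gamma$-viability of $X\setminus A$.

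For the running time, there are $\Oh(n)$ nodes of $T_\Food$, each carrying a table of size $\Oh(5^N \cdot 2^{\sw} \cdot \kbar)$. Leaf initializations cost $\Oh(|E(\Net)|)$ per state via \Cref{lem:FxC-compute} (amortized across states sharing the same $C$), and internal joins cost $\Oh(N \cdot \kbar)$ per state after bucketing states by their ``boundary signature'' so that compatible pairs can be matched in linear time. These combine to the claimed bound $\Oh(5^N \cdot 2^{\sw} \cdot n \cdot (N\cdot\kbar + |E(\Net)|))$.

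The main obstacle will be formalizing the five color roles and proving that the DP transitions preserve the perfect-triple invariant; in particular, that a ``$\chi_2$-only'' color may later legally migrate to the ``in $\chi_2$ and in a later $\chi_1$'' role without ever violating the constraint $\chi_1(x_i)\cap \chi_2(x_j)=\emptyset$ for $i\le j$ — here the DFS-consistency of $T_\Food$ is essential. A secondary technical point is organizing the join operation so that internal-node transitions fit into the target running time rather than incurring a quadratic blowup in the table size; this is achieved by indexing states by their scanwidth boundary and by the projection of $\sigma$ onto the colors touched by the subtree, and merging one color coordinate at a time.
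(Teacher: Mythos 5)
Your overall architecture matches the paper's: a bottom-up dynamic program over $T_\Food$ whose states combine color-usage information, a subset of $\GW(v)$ certifying \pgviable{}ness of the surviving taxa, and a counter for $|A|$, with children merged sequentially in DFS order to respect the asymmetric constraint $\chi_1(x_i)\cap\chi_2(x_j)=\emptyset$ for $i\le j$. However, two substantive pieces are missing. First, in a tree extension of $\Food$ \emph{every} vertex is a taxon, so the decision of whether an internal node $v$ of $T_\Food$ joins $A$ or survives must be made at $v$ itself; your inductive step only merges child tables and never branches on $v$'s fate. This is precisely where the paper's final recurrence does real work: if $v$ is killed, one must enumerate $C_2'\subseteq C_2$, compute $F_{v,C_2'}$, add $\w(F_{v,C_2'})$ to the cost and remove $c(F_{v,C_2'})$ from the available $\chi_1$-colors (exploiting that $v$ precedes all of $T_\Food^{(v)}\setminus\{v\}$ in the DFS order, so its $\chi_1$-colors must avoid all $\chi_2$-sets in the subtree); if $v$ survives, one must either verify $\gamma_\Sigma(\Phi\cap\preyE{v})\ge 1$ or have already passed $\predatorsE{v}$ down into the children's $\Phi$-sets (the paper's $\Psi$ parameter). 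Without this step neither the viability certification nor the cost accounting closes.

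Second, the join is the crux of the running-time claim and is only asserted. Your five persistent ``roles'' over-refine the state: only three per-color states survive between nodes (unused; used in some $\chi_2$ only, hence still available for a later $\chi_1$; used in some $\chi_1$, hence dead), and your roles ``in a past $\chi_2$ and a later $\chi_1$'' versus ``in a past $\chi_1$'' need not be distinguished, while ``currently in use by the active taxon'' is not a persistent state at all. The paper accordingly indexes the table by two \emph{disjoint} sets $C_1$ (colors of $\chi_1(A)$) and $C_2$ (colors reserved for $\chi_2$), a $3^N$-sized index, and obtains the factor $5^N$ not from the table size but from the merge recurrence, which enumerates $C_1'\subseteq C_1$ and $C_2'\subseteq C_2$ and passes $C_2'\cup(C_1\setminus C_1')$ to the earlier block so that colors destined for $\chi_1$ of the later block may still serve as anchors of the earlier one; summing over all such nested choices puts each color in one of five categories, giving $5^N$ total work. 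Your proposed remedy (``bucketing by boundary signature'' and ``merging one color coordinate at a time'') would have to reproduce exactly this accounting to avoid a quadratic blow-up in the $5^N$-sized table, and as stated it is not established.
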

}
\newcommand{\IntuitionexPPD}{
	We consider the tree extension of the food web with a dynamic program bottom up.
	At each vertex $v$, we determine whether there exists a perfect triple $(A,\chi_1,\chi_2)$ satisfying certain conditions, where $A$ is a subset of $T_\Food^{(v)}$, the set of vertices descended from $v$ in $T_\Food$.
	We want that $X\setminus A$ is \gviable; to help determine this we keep track of a subset of edges $\Phi \subseteq  \GW(v)$, and require that $T_\Food^{(v)} \setminus A$ is $\Phi$-\pgviable.
}
\lemexPPD{\label{lem:exPPD}}
The intuition behind~\Cref{lem:exPPD} is as follows:
\IntuitionexPPD
\newcommand{\ProofexPPD}{
\begin{proof}
	\proofpara{Intuition}
	\IntuitionexPPD

	\proofpara{Table Definition}
	For a set of taxa~$Q\subseteq X$, sets of colors~$C_1,C_2\subseteq [N]$, a set of edges~$\Phi \subseteq E(\Food)$ between~$Q$ and~$X\setminus Q$, and an integer~$\ell\in [\kbar]_0$,
	let~$\SSS_{(Q,C_1,C_2,\Phi,\ell)}$ be the set of perfect 
	% quadruples~$(A,\Omega,\chi_1,\chi_2)$ 
	triples~$(A,\chi_1,\chi_2)$ of a set~$A \subseteq Q$ of size \emph{at least}~$\ell$ and mappings~$\chi_1,\chi_2: A \to 2^{C}$ with~$\chi_1(A) = C_1$ and~$\chi_2(A)\setminus \chi_1(A) \subseteq C_2$, for which~$Q\setminus A$ is~$\Phi$-\pgviable.
	
	We define a dynamic programming algorithm with table~$\DP$.
	For a vertex~$v\in V(T_\Food)$, sets of colors~$C_1,C_2\subseteq [N]$, a set of edges~$\Phi \subseteq \GW(v)$, and an integer~$\ell\in [k]_0$, we store in~$\DP[v,C_1,C_2,\Phi,\ell]$ the \emph{minimum} value of~$\w(F_{A,\chi_2}) := \sum_{x\in A} \w(F_{x,\chi_2(x)})$ for any perfect triple~$(A,\chi_1,\chi_2) \in \SSS_{(T_\Food^{(v)},C_1,C_2,\Phi,\ell)}$.
	%
	%	The value of an optimal set can then be found in~$\DP[s_\Food,[k\cdot \Hgt],\emptyset,k]$.
	%
	If~$v$ an internal vertex of~$T_\Food$ with children~$w_1,\dots,w_t$,
	then we define further auxiliary tables~$\DP_i[v,C_1, C_2,\Phi,\Psi,\ell]$ analogously, with~$(S,\chi_1,\chi_2)$ considered from~$\SSS_{(Q_i,C_1,C_2,P_i,\ell)}$ with~$Q_i := \bigcup_{j=1}^i T_\Food^{(w_j)}$ and~$P_i := (\Phi \cup \Psi) \cap \left(\bigcup_{j=1}^i \GW(w_j)\right)$;
	that is---only the first~$i$ children of~$v$ are considered.
	We require that the set of edges~$\Psi$ is either empty or~$\predatorsE{v}$.
	
	\proofpara{Algorithm}
	We define the table in a bottom-up fashion.
	Let~$v$ be a leaf of~$T_\Food$.
	(That is a taxon without predators).
	Fix disjoint color sets~$C_1,C_2 \subseteq [N]$.
	If~$\ell > 1$, we store~$\infty$ in~$\DP[v,C_1,C_2,\Phi,\ell]$.
	If~$\ell=0$ and~$\gamma_\Sigma(\Phi) \ge 1$, we store~0 in~$\DP[v,C_1,C_2,\Phi,\ell=0]$.
	Otherwise---if~$\ell=1$ or~$\gamma_\Sigma(\Phi) < 1$---we store~$\min \{ \w(F_{v,C_2'}) \mid C_2' \subseteq C_2, c(F_{v,C_2'}) = C_1\}$ in~$\DP[v,C_1,C_2,\Phi,\ell=1]$.
	
	Now, let~$v$ be an internal vertex of~$T_\Food$ with children~$w_1,\dots,w_t$ such that $w_i$ comes before $w_{i+1}$ in the depth-first traversal ordering of $X$, for $i \in [t-1]$.
	We define~$\DP_1[v,C_1,C_2,\Phi,\Psi,\ell]$ to be~$\DP[w_1,C_1,C_2,(\Phi \cup \Psi) \cap \GW(w_1),\ell]$.
	To compute further values for~$i\in [t-1]$, we use the following recurrence in which we define~$\DP_{i+1}[v,C_1,C_2,\Phi,\Psi,\ell]$ to be
	\begin{eqnarray}
		\label{rec:D-i}
		&\min&
		\DP_i[v,C_1',C_2' \cup (C_1\setminus C_1'),\Phi,\Psi,\ell']\\
		&& +
		\DP[w_{i+1},C_1 \setminus C_1',C_2 \setminus C_2',(\Phi \cup \Psi) \cap \GW(w_{i+1}),\ell-\ell'].
	\end{eqnarray}
	Here, we take the minimum over all~$C_1' \subseteq C_1, C_2' \subseteq C_2$ and $\ell'\in [\ell]_0$.
	
	Finally, if~$v=s_\Food$ or $\gamma_\Sigma(\Phi \cap \preyE{v}) \ge 1$, we set~$\DP[v,C_1,C_2,\Phi,\ell]$ to be
	\begin{eqnarray}
		\label{rec:D-t}
		\min\{~\DP_t[v,C_1,C_2,\Phi,\predatorsE{v},\ell]~;~\DP_t'~\}.
	\end{eqnarray}
	Here,~$\DP_t'$ is~$\min_{C_2' \subseteq C_2}~\DP_t[v,C_1 \setminus c(F_{v,C_2'}),C_2 \setminus C_2',\Phi,\emptyset,\ell-1] + \w(F_{v,C_2'})$.
	Otherwise, we set~$\DP[v,C_1,C_2,\Phi,\ell]$ to be~$\DP_t'$.
	Intuitively, $\DP_t'$ corresponds to the case that $v$ is one of the taxa to be going extinct, while $\DP_t[v,C_1,C_2,\Phi,\predatorsE{v},\ell]$ corresponds to the case that $\DP_t[v,C_1,C_2,\Phi,\predatorsE{v},\ell]$ corresponds to the case that $v$ is saved.

	We return \yes if~$\DP[s_\Food,C_1,C_2,\emptyset,\kbar] \le \Dbar$, for some~$C_1, C_2\subseteq [N]$ which are disjoint.
	Otherwise, we return \no.
	
	\proofpara{Correctness}
	Let us quickly consider the basic case.
	For a leaf~$v$ in $T_\Food$, the set~$T_\Food^{(v)}$ only contains~$v$ and so~$\SSS_{(T_\Food^{(v)},C_1,C_2,\Phi,\ell)}$ for~$\ell>1$ is empty.
	If~$\ell=0$, the only possible triples $(A,\chi_1, \chi_2)$ in~$\SSS_{(T_\Food^{(v)},C_1,C_2,\Phi,\ell)}$ have $A \subseteq \{v\}$, i.e. we must let $v$ survive or go extinct.
	We can only let~$v$ survive if $\{v\}$ is~$\Phi$-\pgviable.
	Thus we can store $0$ if~$\gamma_\Sigma(\Phi) \ge 1$ and otherwise the minimal value of~$\w(F_{A,\chi_2})$ is just $\w(F_{v,\chi_2(x)})$, and so we store the minimum weight of a set of respecting edges~$F_{v,C}$ such that~$c(F_{v,C}) = C_1$ and $C \subseteq C_2$.
	% If~$\ell=0$ and~$\gamma_\Sigma(\Phi) \ge 1$, then it is possible to let~$v$ survive and so~$\SSS_{(T_\Food^{(v)},C_1,C_2,\Phi,\ell)}$ contains a \mj{triple} with an empty set and we can store 0.
	% Otherwise, we need to find a set of respecting edges~$F_{v,C}$ such that~$c(F_{v,C}) = C_1$.
	
	To show the correctness of~\Recc{rec:D-i} and~\Recc{rec:D-t}, we assume that each previous table entries are correct and
	we prove that if~$\DP_i[v,C_1,C_2,\Phi,\Psi,\ell] = d$, respectively~$\DP[v,C_1,C_2,\Phi,\ell] = d$, then
	there is a triple~$(A,\chi_1,\chi_2)$
	with~$\w(F_{A,\chi_2}) = d$ from~$\SSS_{(Q_i,C_1,C_2,P_i,\ell)}$, respectively~$\SSS_{(T_\Food^{(v)},C_1,C_2,\Phi,\ell)}$.
	Afterward, we show that for each such triple, it holds that
	$\DP_i[v,C_1,C_2,\Phi,\Psi,\ell] \ge \w(F_{A,\chi_2})$, respectively~$\DP[v,C_1,C_2,\Phi,\ell] \ge \w(F_{A,\chi_2})$

	We first show the correctness of~\Recc{rec:D-i}.
	Assume that~$\DP_{i+1}[v,C_1,C_2,\Phi,\Psi,\ell] = d$.
	Then, by \Recc{rec:D-i}, there is a~$d^1\in [d]_0$ and~$C_1' \subseteq C_1, C_2' \subseteq C_2, \ell'\in [\ell]_0$ such that
	$\DP_i[v,C_1',C_2',\Phi,\Psi,\ell'] = d^1$ and
	$\DP[w_{i+1},C_1 \setminus C_1',(C_2 \setminus C_2') \cup C_1',(\Phi \cup \Psi) \cap \GW(w_{i+1}),\ell-\ell'] = d-d^1 =: d^2$.
	Consequently, there is $(A^1,\chi_1^1,\chi_2^1)$
	in~$\SSS_{(Q_i,C_1',C_2' \cup (C_1\setminus C_1'),P_i,\ell')}$ such that~$\w(F_{A^1,\chi_2^1}) = d^1$
	and there is~$(A^2,\chi_1^2,\chi_2^2) \in \SSS_{(T_\Food^{(w_{i+1})},C_1\setminus C_1',(C_2 \setminus C_2'),(\Phi \cup \Psi) \cap \GW(w_{i+1}),\ell-\ell')}$
	such that~$\w(F_{A^2,\chi_2^2}) = d^2$.
	As~$Q$ and~$T_\Food^{(w_{i+1})}$ are disjoint, so also~$A^1$ and~$A^2$ are disjoint.
	We therefore define a set~$A := A^1 \cup A^2$, 
	% an ordering~$\Omega$ as the concatenation of~$\Omega^1$ and~$\Omega^2$---all items of~$A^2$ are listed after~$A^1$---
	and mappings~$\chi_i$ with~$\chi_i(x) = \chi_i^j(x)$ for each taxon~$x\in A^j$, $i,j\in \{1,2\}$.
	Then~$\w(F_{A,\chi_2}) = \w(F_{A^1,\chi_2^1}) + \w(F_{A^2,\chi_2^2}) = d^1 + d^2 = d$.
	It remains to show that~$(A,\chi_1,\chi_2)$ is in~$\SSS_{(Q_{i+1},C_1,C_2,P_{i+1},\ell)}$.
	Most of the conditions follow because~$A^1$ and~$A^2$ are disjoint and the axioms hold for the individual sets.
	The difficult part is to show that $\chi_1(x_i)$ and $\chi_2(x_j)$ are disjoint for $i \leq j$, in the case that $x_i \in A^1$ and $x_j \in A^2$. For this,
	we observe that~$\chi_1^1(A^1) \subseteq C_1' \subseteq C_1$ and~$\chi_2^2(A^2) \subseteq C_2\setminus C_2' \subseteq C_2$, and these are disjoint.
	
	Now assume~$(A,\chi_1,\chi_2)$
	is in~$\SSS_{(Q_{i+1},C_1,C_2,P_{i+1},\ell)}$.
	Let $A^1 := A \cap Q_i$ and let $\chi_i^1$ be the restriction of $\chi_i$ to $A^1$, for $i \in \{1,2\}$. Similarly let $A^2 := A \setminus Q_i$, and let $\chi_i^2$ be the restriction of $\chi_i$ to $A^2$, for $i \in \{1,2\}$.
	It is straightforward to check that
	$(A^1,\chi_1^1,\chi_2^1)$ is in $\SSS_{(Q_i,C_1',C_2'\cup{(C_1\setminus C_1')},P_i,\ell')}$ and~$(A^2,\chi_1^2,\chi_2^2) \in \SSS_{(T_\Food^{(w_{i+1})},C_1\setminus C_1',(C_2 \setminus C_2'),(\Phi \cup \Psi) \cap \GW(w_{i+1}),\ell-\ell')}$,
	where $C_1' = \chi_1(A^1), C_2' = \chi_2(A^1)\setminus C_1'$, and $\ell = |A^1|$.   \todos{Do this more in detail for the journal.}

	We next show the correctness of~\Recc{rec:D-t}.
	Assume that~$\DP[v,C_1,C_2,\Phi,\ell] = d$.
	By \Recc{rec:D-t},
	we have~$\DP_t[v,C_1,C_2,\Phi,\predatorsE{v},\ell] = d$ 
	% or~$\min_{C_2' \subseteq C_2}~\DP_t[v,C_1 \setminus c(F_{v,C_2'}),(C_2 \setminus C_2') \cup c(F_{v,C_2'}),\Phi,\emptyset,\ell-1] + \w(F_{v,C_2'}) = d$.
	or~$\min_{C_2' \subseteq C_2}~\DP_t[v,C_1 \setminus c(F_{v,C_2'}),C_2 \setminus C_2',\Phi,\emptyset,\ell-1] + \w(F_{v,C_2'}) = d$.
	In the former case, there is~$(A,\chi_1,\chi_2)$ in~$\SSS_{(Q_{t},C_1,C_2,P_{t},\ell)}$\todos{Slack formality} with~$\w(F_{A,\chi_2}) = d$.
	We observe that~$(A,\chi_1,\chi_2)$ is also in in~$\SSS_{(T_\Food^{(v)},C_1,C_2,\Phi,\ell)}$ which is sufficient for this case.
	in the latter case, fix~$C_2'$ such that~$\DP_t[v,C_1 \setminus c(F_{v,C_2'}),(C_2 \setminus C_2') \cup c(F_{v,C_2'}),\Phi,\Psi=\emptyset,\ell-1] + \w(F_{v,C_2'}) = d$.
	Consequently, there is a triple~$(A,\chi_1,\chi_2)$ in~$\SSS_{(Q_{t},C_1 \setminus c(F_{v,C_2'}),(C_2 \setminus C_2') \cup c(F_{v,C_2'}),P_{t},\ell)}$.
	Consider~$(A',\chi_1\,\chi_2')$ with~$A' := A \cup \{v\}$, 
	% $\Omega'$ results from~$\Omega$ by placing~$v$ first, 
	$\chi_i'$ is~$\chi_i$ on~$A$ with~$\chi_2(v) = C_2'$ and~$\chi_1'(v) := c(F_{v,C_2'})$.
	Clearly~$\w(F_{A',\chi_2}) = \w(F_{A,\chi_2}) + \w(F_{v,C_2'}) = d$.
	As $(T_\Food^{(v)}\setminus A' = Q_t \setminus A$ and $Q_t \setminus A$ is $P_t$-\pgviable, we have also that $T_\Food^{(v)}\setminus A'$ is $\phi$-\pgviable (note that the only edges in $P_t \setminus \phi$ are those incoming at $v$, which are not needed for a set not containing $v$.)
	It remains to show that~$(A',\chi_1',\chi_2')$ is perfect, assuming that~$(A,\chi_1,\chi_2)$ is perfect.
	
	%
	% As~$v$ is first in order there are no edge from~$T_\Food^{(v)}$ to~$v$ in~$\Food$, the first condition holds.
	Note that $v$ appears before any vertex in $Q_t$ in the depth-first traversal of $T_\Food$, and so we may assume $v$ is the first element of $A'$.
	It is clear that~$\chi_1'(v) = c(F_{v,C_2'})$ is disjoint from~$C_1 \setminus c(F_{v,C_2'})$,
	and so $\chi_1'(x)$ and $\chi_1'(y)$ are disjoint for all $x,y \in A'$.
	As $\chi_2(y) \subseteq C_2 \setminus C_2' \cup (C_1\setminus  c(F_{v,C_2'})$) for all $y \in A$ and $C_1, C_2'$ are disjoint, we have that $\chi_2'(v) = C_2'$ and $\chi_2'(y)$ are disjoint. Therefore  $\chi_2'(x)$ and $\chi_2'(y)$ are disjoint for all $x,y \in A'$.
	Similarly as $\chi_1'(v) = c(F_{v,C_2'})$ and $\chi_2(y) \subseteq C_2 \setminus C_2' \cup (C_1\setminus  c(F_{v,C_2'})$ for $y \in A$, we have that $\chi_1'(x_i)$ and $\chi_2'(x_j)$ are pairwise disjoint for all $x_i,x_j \in A'$ with $i < j$.
	% \mj{and~$C_2'$ is disjoint from~$(C_2 \setminus C_2')$.}
	% $ \cup c(F_{v,C_2'})$ because~$c(F_{x,C})$ and~$C$ are disjoint for any~$x\in X$ and~$C\subseteq [N]$.
	% Further, as~$C_2'\subseteq C_2$ and~$C_2$ and~$C_1$ are disjoint.
	% We conclude that~$\chi_1(v)$ is disjoint with any~$\chi_1(x)$, $x\in A$,
	% and that~$\chi_2(v)$ is disjoint with any~$\chi_1(x)$ and $\chi_2(x)$, $x\in A$,
	The existence of~$F_{x,C_2'}$ follows from~$\DP_t' + \w(F_{v,C_2'}) = d$.

	On the converse, if~$(A,\chi_1,\chi_2)$ is in~$\SSS_{(T_\Food^{(v)},C_1,C_2,\Psi,\ell)}$, then we can show by a case distinction and with arguments similar to the previous paragraph that~$\DP[v,C_1,C_2,\Phi,\ell] \le \w(F_{A,\chi_2})$.

	\proofpara{Running time}
	By \Cref{lem:FxC-compute}, we can compute~$F_{x,C}$ or conclude that it does not exist for all~$x\in X$ and~$C\subseteq [N]$ in~$\Oh(2^N \cdot \sqrt{N} \cdot n \cdot |E(\Net)|)$ time.
	
	We observe that because~$C_1$ and~$C_2$ are disjoint, $T_\Food$ is a tree and therefore any vertex has at most one parent, and the field~$\Psi$ can only take two values for a fixed vertex~$v$.%, all tables together have~$\Oh(3^N \cdot 2^{\sw} \cdot n\kbar)$ entries.
	
	In the basic cases, in \Recc{rec:D-i}, and in \Recc{rec:D-t}, we iterate over~$C_2'\subseteq C_2$ and in \Recc{rec:D-i} we additionally iterate over~$C_1' \subseteq C_1$.
	Any color~$c\in [N]$ can therefore be in~$[N]\setminus (C_1 \cup C_2),C_2\setminus C_2',C_2',C_1\setminus C_1'$ or in the case of \Recc{rec:D-i} also in~$C_1'$.
	
	Thus, all table entries can be computed within~$\Oh(5^N 2^{\sw} \cdot n \cdot (N \cdot \kbar + |E(\Net)|))$ time.
\end{proof}
}

It remains to show how to reduce instances of \WMAPPDD to instances of \exWMAPPDD{N}.
This is done using standard color-coding techniques.

\newcommand{\ProofDbarCorr}{
	\proofpara{Correctness}
	For any subset of edges $E'$ with $|E'| \le N$, there is some~$f \in \mathcal{H}$ such that $c_f$ is injective on~$E'$ by~\Cref{def:perfectHashFamily}.
	
	Let~$\Instance_f$ be a \yes-instance of~\exWMAPPDD{N}.
	Thus, there is a perfect triple~$(A,\chi_1,\chi_2)$ such that~$X\setminus A$ is~\gviable,~$A$ has a size of at least~\kbar and for each~$x\in A$ there is a set of respecting edges~$F_{x,\chi_2(x)}\subseteq E(\Net)$ with~$\chi_1(x) = c(F_{x,\chi_2(x)})$ and~$\sum_{x\in A} \w(F_{x,\chi_2(x)}) \le \Dbar$.
	We show that~$S := X\setminus A$ is a solution for instance~$\Instance$.
	By definition,~$S$ is \gviable and has a size of~$|X| - |A| \le |X| - \kbar = k$.
	As~$(A,\chi_1,\chi_2)$ is perfect,~$F_{x,\chi_2(x)}$ have pairwise disjoint colors for~$x,y\in A$ and are therefore disjoint.
	We conclude with~\Cref{lem:perfectTripleBound} that~$\PD(S) \geq \w(E(\Net)) - \sum_{x\in A} \w(F_{x,\chi_2(x)}) \ge \w(E(\Net)) - \Dbar = D$.
	% \todo{Need a proof that $\PD(S) \geq \w(E(\Net)) - \sum_{x\in A} \w(F_{x,\chi_2(x)})$?}
	
	Now, let~$\Instance$ be a \yes-instance of~\WMAPPDD with solution~$S\subseteq X$.
	Let~$A := X\setminus S$ and since we can assume that~$S$ has a size of~$k$, the size of~$A$ is~$\kbar$.
	% Let~$\Omega = x_1, \dots, x_{\kbar}$ be any ordering on~$A$ such that~$i<j$ whenever~$x_ix_j \in E(\Food)$.
	%
	Let~$E_A \subseteq E(\Net)$ be the set of edges~$uv$ where~$\off(v) \subseteq A$.
	As~$D \le \PD(S) = \w(E(\Net) \setminus E_A)$, we conclude that~$\w(E_A) \le \Dbar$.
	
	Now define~$F_{x_1}$ to be the edges~$uv$ in~$E_A$ with~$x_1 \in \off(v)$, and define~$F_{x_i}$ to be the edges~$uv$ in~$E_A \setminus \bigcup_{j=1}^{i-1} F_{x_j}$ with~$x_i \in \off(v)$.
	We define sets~$F_{x_i}'$ as follows.
	For each edge~$uv \in F_{x_i}$, for which no incoming edge of~$u$ is in~$F_{x_i}$,
	we add an edge~$uw$ outgoing of~$u$ to~$F_{x_i}'$, where~$uw$ is from~$Z_i := ((E(\Net) \setminus E_A) \cup \bigcup_{j=1}^{i-1} F_{x_i}) \setminus \bigcup_{j=1}^{i-1} F_{x_i}'$.
	If~$F_{x_i}$ contains more than one edge outgoing of~$u$, adding one edge is sufficient.
	\begin{claim}
		The set~$Z_i$ contains an edge outgoing of~$u$.
	\end{claim}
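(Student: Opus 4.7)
The plan is to prove the claim by a counting argument on the outgoing edges of~$u$. Write $\beta := |N_{\mathrm{out}}(u) \cap (E(\Net) \setminus E_A)|$, $\alpha_j := |N_{\mathrm{out}}(u) \cap F_{x_j}|$, and let $\iota_j$ denote the number of incoming edges of~$u$ in~$F_{x_j}$. Every outgoing edge of~$u$ either lies outside~$E_A$ or lies in exactly one~$F_{x_j}$ (indexed by the minimum-indexed leaf in the offspring of its head), and by construction $F_{x_j}'$ contains at most one outgoing edge of~$u$, contributing exactly one when~$u$ is the tail of a highest edge of~$F_{x_j}$, i.e.\ when $\alpha_j \ge 1$ and $\iota_j = 0$. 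Hence
\[
	|N_{\mathrm{out}}(u) \cap Z_i| \;=\; \beta + \sum_{j < i} \alpha_j \;-\; \bigl|\{\,j < i : \alpha_j \ge 1,\ \iota_j = 0\,\}\bigr|.
\]

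The plan is then to lower-bound this quantity by~$1$ via a case analysis on~$\beta$. If $\beta \ge 1$, some outgoing edge of~$u$ leads to a surviving leaf in $\off(u) \setminus A$; combined with the trivial bound $\sum_{j<i} \alpha_j \ge |\{j < i : \alpha_j \ge 1\}| \ge |\{j < i : \alpha_j \ge 1, \iota_j = 0\}|$, this already supplies the required surplus. If instead $\beta = 0$, then $\off(u) \subseteq A$, so every incoming edge of~$u$ lies in~$E_A$, and in fact in $F_{x_{j^*}}$ where $x_{j^*} := \min \off(u)$. Because $uv \in F_{x_i}$ forces $x_{j^*} \le \min \off(v) = x_i$, and the hypothesis that no incoming edge of~$u$ is in $F_{x_i}$ rules out $j^* = i$ whenever~$u$ has any incoming edge, we get $j^* < i$. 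Choosing a child~$v'$ of~$u$ with $x_{j^*} \in \off(v')$ shows $uv' \in F_{x_{j^*}}$, hence $\alpha_{j^*} \ge 1$; together with $\iota_{j^*} \ge 1$, the index~$j^*$ contributes its full $\alpha_{j^*}$ to $\sum_{j<i} \alpha_j$ while contributing nothing to the subtracted term, producing a strict surplus.

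The main obstacle is precisely this bookkeeping: one must account for every outgoing edge of~$u$ already committed to some $F_{x_j}'$ with $j < i$ and match it against the corresponding $\alpha_j$-contributions, then identify a concrete witnessing index or edge delivering the extra unit. The structural insight that carries the argument is that an anchor is consumed at iteration~$j$ only when $\iota_j = 0$, so the distinguished index~$j^*$ (for which $\iota_{j^*} \ge 1$) delivers its outgoing $F_{x_{j^*}}$-edges ``free of charge''. The only degenerate configuration is~$u$ being the root with $\beta = 0$, which would force $\off(u) = X = A$ and hence $S = \emptyset$; this trivial situation (with $D = 0$) can be excluded at the outset of the converse direction and does not affect the argument.
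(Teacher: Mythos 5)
Your proof is correct, and it reorganizes the argument around a different skeleton than the paper's. The paper proves the claim by an extremal case analysis: if some earlier $F_{x_t}'$ already contains an edge outgoing of~$u$, it takes the \emph{largest} such index~$t$ and exhibits an unconsumed witness $uv_t \in F_{x_t}$; otherwise it either finds an outgoing edge outside~$E_A$ directly, or traces the incoming edges of~$u$ to an earlier~$F_{x_t}$ and extracts an outgoing edge of~$u$ from there. Your version replaces this with a single exact counting identity for $|N_{\mathrm{out}}(u)\cap Z_i|$ and a charging argument: each anchor consumed at step~$j$ is paid for by $\alpha_j\ge 1$, and either $\beta\ge 1$ or the distinguished index~$j^*$ (with $\iota_{j^*}\ge 1$, hence no anchor consumed there) supplies the extra unit. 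The two proofs share the same core insight --- an outgoing anchor of~$u$ is only ever consumed at a step~$j$ where $F_{x_j}$ itself contributes an outgoing edge of~$u$ --- but your bookkeeping is more systematic and avoids the maximality argument. You are also more careful than the paper on one point: the paper's second subcase tacitly assumes~$u$ has an incoming edge, which fails when $u$ is the root with all offspring in~$A$; you identify this degenerate configuration explicitly and dispose of it by excluding the trivial $S=\emptyset$ situation, which is a legitimate (and needed) fix.
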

	\begin{claimproof}
		Assume first that~$\bigcup_{j=1}^{i-1} F_{x_i}'$ contains an edge~$uv'$.
		Without loss of generality, let~$uv' \in F_{x_t}'$ and~$\bigcup_{j=t+1}^{i-1} F_{x_i}'$ does not contain edges outgoing of~$u$.
		Then there is an edge~$uv_t \in F_{x_t}$ and by construction~$uv_t$ is not in~$F_{x_j}'$ for any~$j \in [i]$.
		Consequently, the set~$Z_i$ contains~$uv_t$.
		
		Now, assume that~$\bigcup_{j=1}^{i-1} F_{x_i}'$ does not contain edges outgoing of~$u$.
		If~$E(\Net) \setminus E_A$ contains an edge~$e$ outgoing of~$u$, then~$e \in Z_i$.
		Otherwise, all edges incoming at~$u$ are in~$E_A$.
		As~$F_{x_i}$ contains no edge incoming at~$u$, these edges and at least one edges~$e$ outgoing of~$u$ have to be in~$F_{x_t}$ for some~$t\in [i-1]$.
		We conclude~$e\in Z_i$ and the set~$Z_i$ contains an edge outgoing of~$u$.
	\end{claimproof}
	
	The sets~$F_{x_i}$ are constructed to be the respecting sets and~$F_{x_i}'$ are the auxiliary sets from the definition of respecting sets.
	We observe that~$E_A$ is the union of all~$F_{x_i}$.
	Thus, if~\Net is a tree, then~$|E_A| \le 2\kbar -1$, as a forest with~$\kbar$ leaves has at most~$2\kbar-1$ edges.
	If~\Net is a network, then~$|E_A| \le \w(E_A) = \w(E(\Net)) - \PD(S) \le \Dbar$.
	As we added at most one edge to~$F_{x_i}'$ per edge of~$F_{x_i}$, we conclude that the union~$U$ of all~$F_{x_i}$ and~$F_{x_i}'$ contains at most~$4\kbar -2$ if~\Net is a tree, and~$2\Dbar$ edges if~\Net is a network.

	Consequently, there is some $f \in \mathcal{H}$, such that $c_f$ is injective on~$U$.
	We set~$\chi_1(x_i) = c_f(F_{x_i})$ and~$\chi_2(x_i) = c_f(F_{x_i}')$.
	By the construction we conclude that~$(A,\chi_1,\chi_2)$ is perfect,~$S$ is~\gviable, and~$\sum_{i=1}^\kbar \w(F_{x_i,\chi_2(x_i)}) = \sum_{i=1}^\kbar \w(F_{x_i}) = \w(E_A) \le \kbar$.
	Thus,~$\Instance_f$ is a \yes-instance of~\exWMAPPDD{N}.

	\proofpara{Running Time}
	The construction of $\mathcal{H}$ takes $e^{N} N^{\Oh(\log {N})} \cdot q \log q$ time (\Cref{prop:perfectHashFamily}), and for each $f \in \mathcal{H}$ the construction of instance~$\Instance_f$ of \exWMAPPDD{N} takes time linear in $|\Instance|$.
	By~\Cref{lem:fkSMAPPD}, solving instances of \exWMAPPDD{N} takes $\Oh(5^N 2^{\sw} \cdot n \cdot (N^2 \cdot \kbar^2 + |E(\Net)|))$~time, and the number of instances is $|\mathcal{H}| = e^{N} N^{\Oh(\log {N})} \cdot \log q$.
	
	Thus, the total running time is 
	$\Oh(e^N N^{\Oh(\log N)} \log q \cdot (q + 5^N 2^{\sw} \cdot n \cdot (N^2 \cdot \kbar^2 + E(\Net))))$.
	This simplifies to $\Oh((5e)^N\cdot 2^{\sw + \Oh(\log^2(N))} \cdot n \cdot |E(\Net)| \cdot \log |E(\Net)|)$, as~$\kbar \le N$.
	
	Inserting~$2\Dbar$ or~$4\kbar-2$ into~$N$, respectively, gives the desired running times.
}
\newcommand{\ProofDbar}[1]{
\begin{proof}[Proof of~\Cref{thm:Dbar+sw}]
	\proofpara{Reduction}
	Let $\Instance = (\Net, \Food, k, D)$ be an instance of \PROB{Map-Weighted-PDD}.
	If~\Net is a tree, then set~$N$ to~$4\kbar-2$ and otherwise to~$2\Dbar$.
	
	Arbitrarily order the edges $e_1, \dots, e_{q}$ of $\Net$.
	We may assume~$q > N$, as otherwise, we can consider a single instance of \exWMAPPDD{q}.
	Let $\mathcal{H}$ be a~$(q, N)$-perfect hash family.
	For every $f \in \mathcal{H}$ we define a coloring $c_f$ by~$c_f(e_j) = f(j)$ for~$j\in [q]$
	and let~$\Instance_{f} = (\Net, \Food, k, D, c_f)$ be the corresponding instance of \exWMAPPDD{N}.
	Solve every instance $\Instance_f$, and return \yes if and only if $\Instance_{f}$ is a \yes-instance for some $f \in \mathcal{H}$.

	#1
\end{proof}
}
\ProofDbar{\emph{The proof of the correctness and running time is deferred to the appendix.}}

\section{Parameter~${k + \sw + \delta + h}$}
\label{sec:k}

By \Cref{thm:MAPPD}, \MAPPDD and \fMAPPDD are \Wh{2}-hard with respect to~$k+h$, even if the food web does not contain edges.
In the following, we therefore add the maximum in-degree of a reticulation~$\delta$ as a parameter and show that even the more general problem \WMAPPDD is \FPT with respect to~$k + \sw + h + \delta$.

To do this, we consider a parameter~$\Hgt$ generalizing the height of a tree.
\Hgt is the maximum number of tree edges that, in \Net, are on a path from the root to any taxon.
We observe that in a phylogenetic tree, $\Hgt$~is the height of the tree minus one.
Next, we prove bounds on the value of~$\Hgt$ in a phylogenetic network.

\newcommand{\lemSBar}[1]{
\begin{lemma}[$\star$]
	{#1}
	$h_t \le \Hgt$ and~$\Hgt \le \delta ^ {h_r} \cdot h_t \le \delta ^ h$.
\end{lemma}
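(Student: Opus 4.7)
The plan is to prove each inequality by a direct counting argument on paths, exploiting the fact that every tree vertex has a unique incoming edge which is a tree edge by the paper's definition of $E_T(\Net)$.

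For the first inequality $h_T \leq \Hgt$, I would fix a directed path $P$ in $\Net$ containing $h_T$ tree vertices. Since $\Net$ has a unique root $\rho$ and every vertex has a directed path to some taxon, $P$ can be extended backwards to $\rho$ and forwards to some taxon, yielding a root-to-taxon path $P'$ that still contains at least $h_T$ tree vertices. Each tree vertex has in-degree one, and its unique incoming edge is a tree edge; on $P'$ this incoming edge is precisely the edge of $P'$ entering the tree vertex. Since the root has in-degree zero and is therefore not a tree vertex, the $h_T$ tree vertices on $P'$ witness $h_T$ distinct tree edges on $P'$, so $\Hgt \geq h_T$.

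For the second inequality $\Hgt \leq \delta^{h_R} \cdot h_T$, I would start from a root-to-taxon path $Q$ realizing $\Hgt$ tree edges. The path $Q$ contains at most $h_R$ reticulations, and cutting $Q$ at each such reticulation partitions it into at most $h_R + 1$ ``reticulation-free'' subpaths. Each subpath is itself a directed path in $\Net$, so by the definition of $h_T$ it contains at most $h_T$ tree vertices, and hence at most $h_T$ tree edges (by the same bijection between tree vertices on a path and their unique incoming tree edges). Summing across the subpaths yields $\Hgt \leq (h_R+1)\,h_T$. Since reticulations have in-degree at least two we have $\delta \geq 2$, and a quick induction gives $h_R + 1 \leq \delta^{h_R}$, upgrading the bound to $\Hgt \leq \delta^{h_R}\,h_T$. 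The final chained inequality $\delta^{h_R}\,h_T \leq \delta^{h}$ reduces to $h_T \leq \delta^{h_T}$, which is immediate for $\delta \geq 2$.

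\textbf{Main obstacle.} The only subtle point is the bookkeeping of which vertices along a path contribute a tree edge: the root, the reticulations used as cut-points, and the terminal taxon are not tree vertices and hence contribute no tree edges. Once I verify that on any directed path the tree edges are in bijection with the tree vertices of that path---obtained by mapping each tree vertex to its unique incoming edge, which necessarily lies on the path---both inequalities reduce to elementary path-counting.
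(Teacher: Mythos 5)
Your proof of the first inequality $h_T \le \Hgt$ is fine and essentially matches the paper's. The problem is the second inequality: you have read $\Hgt$ as the maximum, over single root-to-taxon paths $Q$, of the number of tree edges on $Q$. That is not the quantity the lemma is about. Everywhere $\Hgt$ is used in the paper (the bound $|E_x|\le \Hgt$ in \Cref{lem:optimal-F}, where $E_x$ is the set of \emph{all} tree edges $uv$ with $x\in\off(v)$, and the sets $E_T^{(i)}$ in the proof of \Cref{thm:k+sw+S}), it denotes the maximum over taxa $x$ of the number of tree edges lying on \emph{some} path from the root to $x$, i.e.\ the size of the union of the tree-edge sets of all root-to-$x$ paths. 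Under your reading the lemma would be nearly vacuous (a single path carries at most $h_T$ tree vertices), and your intermediate bound $\Hgt\le (h_R+1)\,h_T$ is actually false for the intended quantity: take a taxon $x$ below a balanced stack of $h_R$ levels of in-degree-$\delta$ reticulations, with $\delta^{h_R}$ disjoint directed paths of $h_T$ tree vertices each feeding into the top of the stack (this is exactly \Cref{fig:worstCaseSbar}). Every one of those $\delta^{h_R}\cdot h_T$ tree edges lies on some root-to-$x$ path, so $\Hgt=\delta^{h_R}h_T$, which exceeds $(h_R+1)h_T$ already for $\delta=2$ and $h_R=3$.

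Consequently, cutting one path at its reticulations cannot give the bound; you must control the entire sub-DAG of edges above $x$. The paper does this with an exchange argument on an extremal network: in a network maximizing $\Hgt$ for fixed $h_R,h_T,\delta$ one may assume that no tree vertex lies below a reticulation on any path to the extremal leaf $x$ (otherwise a local rewiring strictly increases the count of tree vertices above $x$ without changing $h_R$, $h_T$, or $\delta$), after which the tree edges above $x$ organize into at most $\delta^{h_R}$ directed tree-paths of length at most $h_T$ each, yielding $\Hgt\le\delta^{h_R}h_T$. Your argument for the first inequality and your final observation that $\delta^{h_R}h_T\le\delta^{h}$ reduces to $h_T\le\delta^{h_T}$ can be retained as is.
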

}
\lemSBar{\label{lem:Sbar}}
\newcommand{\ProofSBar}{
\begin{proof}
	Recall $h_r$ (and~$h_t$) is the maximum number of reticulations (respectively tree vertices) on a path from the root to a leaf.
	For each path~$P$ from the root to a leaf we observe~$|E(P) \cap E_T(\Net)| \le \Hgt$ and as there is a path of containing~$h_t$ tree edges, we conclude~$h_t \le \Hgt$. 

	\looseness=-1
	To see that~$\Hgt \le \delta ^ {h_r} \cdot h_t \le \delta ^ h$, let $\Net$ be a network maximizing the value of~$\Hgt$ for values of $h_r, h_t, \delta$, and let~$x\in X$ be a leaf for which the maximum number of tree vertices with a path to~$x$ is maximized. We show that there is no tree vertex below a reticulation on any path to $x$.
	% Let~$\Net$ be a network and~$x\in X$ be a leaf such that the number of tree vertices with a path to~$x$ is maximized.
	% Towards a contradiction assume 
	Assume for a contradiction that there is an edge~$rv_1 \in E(\Net)$ with~$v_1 \in V_T(\Net)$ and~$r \in V_R(\Net)$ and a path from~$v_1$ to~$x$.
	Let~$u_1,\dots,u_s$ be the parents of~$r$ and~$w$ a child of $v_1$ such that~$w$ has a path to~$x$.
	Remove the edges~$u_i r$ for~$i\in [s]$, $rv_1$, and~$v_1w$, add vertices~$v_2,\dots,v_{s}$ with attached leaves, and add edges~$u_iv_i$ for~$i\in [s]$, $v_i r$ for~$i\in [s]$ and~$rw$. Observe that after this transformation, the values of $h_r$, $h_t$, and $\delta$ have not changed, but there are~$s-1$ further tree vertices with a path to~$x$.
	As~$s>1$, this contradicts the maximality.
	
	We may therefore assume that no tree vertices with a path to $x$ are below a reticulation.
	We conclude that from the root at most~$\delta^{h_r}$ different paths of length~$h_t$ of tree vertices can lead to a leaf.
	% \todos{I do not like these sentences.}
	\Cref{fig:worstCaseSbar} shows this scenario.
\end{proof}

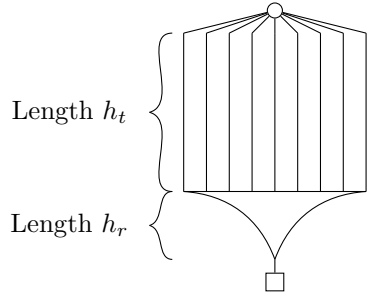
\begin{figure}[t]
	\centering
	\begin{tikzpicture}[scale=.3]
		\node [textnode] at (-5, 3.5) {Length $h_t$};
		\node [textnode] at (-5, -1.5) {Length $h_r$};
		
		\node (0) at (0, 0) {};
		\node (1) at (8, 0) {};
		\node (2) at (4, -3) {};
		\node [leaf] (3) at (4, -4) {};
		\node [small,vertex] (12) at (4, 8) {};
		\node (13) at (-0.5, 7) {};
		\node (14) at (-0.5, 0) {};
		\node (15) at (-1.5, 3.5) {};
		\node (16) at (-0.5, 0) {};
		\node (17) at (-0.5, -3) {};
		\node (18) at (-1.5, -1.5) {};
		
		\draw (2.center) to (3);
		\draw (1.center) to (0.center);
		\draw [bend left=33] (0.center) to (2.center);
		\draw [bend left=33] (2.center) to (1.center);
		\foreach \i in {0,...,8} {
			\draw (\i,7) to (\i,0);
			\draw (\i,7) to (12);
		}
		\draw [in=-180, out=-15] (15.center) to (14.center);
		\draw [in=15, out=-180] (13.center) to (15.center);
		\draw [in=-180, out=-15] (18.center) to (17.center);
		\draw [in=15, out=-180] (16.center) to (18.center);
	\end{tikzpicture}
	\caption{\looseness=-1
		An example for \Cref{lem:Sbar}  where~$\Hgt$ is maximized.
		This is the case if above the leaf an upside-down pyramid of reticulations is followed by~$\delta^{h_r}$ paths of tree vertices of length~$h_t$.
	}
	\label{fig:worstCaseSbar}
\end{figure}
}

In this section we prove the following.

\begin{theorem}
	\label{thm:k+sw+S}
	Given a tree extension~$T_\Food$ of the food web with scanwidth~$\sw$,
	\PROB{Map-Weighted-PDD} can be solved in~$\Oh(2^{2.443\cdot k\Hgt + \sw + \Oh(\log^2(k\Hgt))} \cdot \sw \cdot n \cdot |E(\Net)|^2 \cdot \log |E_T(\Net)|)$ time.
\end{theorem}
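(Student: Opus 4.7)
The plan is to follow the template of Theorem~\ref{thm:Dbar+sw}, but dualized so that the color coding tracks the tree edges \emph{saved} by $S$ rather than the tree edges \emph{lost} by $X \setminus S$. The key observation enabling this is that each taxon in a solution of size $k$ lies below at most $\Hgt$ tree edges, so $|E_T(\Net) \cap E(S)| \le k\Hgt$; consequently, setting $N = k\Hgt$ suffices to record which tree edges land in $E(S)$.

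I would first introduce an $(x, C)$-\emph{saving} set $F_{x,C}$ analogous to the respecting set: $F_{x,C}$ captures the tree edges (together with the reticulation edges they force into $E(S)$) that become freshly part of $E(S)$ when the taxon $x$ is added to a partial solution whose already-saved tree edges bear the colors in $C$. Anchor tree edges in $F'_{x,C}$ are sibling edges already saved at an earlier step, hence with colors in $C$. The existence, uniqueness, and polynomial-time construction of $F_{x,C}$ follow by the same arguments as in Lemmas~\ref{lem:FxC-unique} and~\ref{lem:FxC-compute}, with the roles of ``incoming'' and ``outgoing'' and of $C$ and its complement swapped throughout.

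Next, I would define perfect triples $(S, \chi_1, \chi_2)$ in this dualized sense and prove the analogue of Lemma~\ref{lem:perfectTripleBound}, namely $\PD(S) \ge \sum_{x \in S} \w(F_{x, \chi_2(x)})$. The induction is over the depth-first traversal ordering of $S$ in $T_\Food$, showing that the freshly saved edges at step $s$ contain $F_{x_s, \chi_2(x_s)}$ while the anchor condition rules out double counting across earlier $x_i$. The colored auxiliary problem \kSWMAPPDD is then solved by a DP on $T_\Food$ modelled on Lemma~\ref{lem:exPPD}: at each $v \in V(T_\Food)$ we store the \emph{maximum} of $\sum_{x\in S}\w(F_{x, \chi_2(x)})$ over perfect saving triples with $S \subseteq T_\Food^{(v)}$, $|S| \le \ell$, color commitments contained in $(C_1, C_2)$, and food-web viability recorded by $\Phi \subseteq \GW(v)$; the recurrence at internal vertices combines children by splitting colors, size budget, and viability, with an extra branch for ``$v \in S$'' paralleling \Recc{rec:D-t}.

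The color-coding reduction then applies Proposition~\ref{prop:perfectHashFamily} with universe $|E_T(\Net)|$ and target $N = k\Hgt$, yielding a hash family of size $e^N \cdot N^{\Oh(\log N)} \cdot \log |E_T(\Net)|$. Per coloring, the DP's effective state space per color is such that the branching factor is $2^N$, so the dominant running-time factor is $(2e)^N = 2^{\log_2(2e)\cdot k\Hgt} \approx 2^{2.443\,k\Hgt}$, which combined with the $2^{\sw}$ scanwidth factor and polynomial overhead gives the claimed bound. The hardest step will be the dualized Lemma~\ref{lem:perfectTripleBound}: since only tree edges are colored but $F_{x,\chi_2(x)}$ also accounts for reticulation edges implicitly forced by newly saved tree edges, I must carefully verify that the anchor structure attributes each reticulation edge's weight to exactly one $x \in S$ and never double counts it. Once that structural claim is in place, invoking $\Hgt \le \delta^h$ from Lemma~\ref{lem:Sbar} converts the parameter $k\Hgt$ into $k + \delta + h$ to deliver the stated \FPT running time.
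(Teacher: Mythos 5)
Your high-level plan --- color the tree edges, observe that $|E_T(\Net)\cap E(S)|\le k\Hgt$ so $N=k\Hgt$ colors suffice, and run a DP over $T_\Food$ indexed by color sets, a viability edge set $\Phi\subseteq \GW(v)$, and a size budget --- is exactly the paper's. But the middle layer you import from \Cref{sec:Dbar} (anchors, two colorings $\chi_1,\chi_2$, uniqueness as in \Cref{lem:FxC-unique}, and a DFS-order induction modeled on \Cref{lem:perfectTripleBound}) is not what the paper does, and it creates a concrete problem with the stated bound. A DP state carrying two disjoint color sets $(C_1,C_2)$ with a recurrence that splits both (as in \Cref{lem:exPPD}) costs $5^N$ per coloring; multiplied by the $e^N$ hash-family size this gives $2^{\log_2(5e)\cdot k\Hgt}\approx 2^{3.77\, k\Hgt}$, not $2^{2.443\, k\Hgt}$. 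The constant $2.443=\log_2(2e)$ requires a DP whose color bookkeeping is a \emph{single} set $C$ with a plain subset split ($2^N$, handled by convolution as in \Cref{lem:fkSMAPPD}); your remark that ``the branching factor is $2^N$'' is inconsistent with the two-coloring anchor machinery you set up.

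The reason the paper can drop the anchors entirely is the asymmetry you half-notice in your last paragraph but do not resolve. When taxa are \emph{killed} one at a time, the edges lost at step $i$ depend on the processing order, so anchors are needed to certify that a highest lost edge does not propagate upward --- hence the order-sensitive induction of \Cref{lem:perfectTripleBound}. When taxa are \emph{saved}, no such certificate is needed: the paper's $(x,\chi)$-suitable set only requires that its tree edges carry colors from $\chi(x)$ and are pairwise distinctly colored, and that every edge in the set has a path down to $x$ inside the set. Disjointness of $F_{x,\chi}$ and $F_{y,\chi}$ for distinct $x,y$ --- including the uncolored reticulation edges --- is proved directly in \Cref{lem:disjoint-Fx} by looking at the first tree vertex below a reticulation edge, and then $\sum_{x\in S}\w(F_{x,\chi})\le \PD(S)$ is immediate because each $F_{x,\chi}\subseteq E(\{x\})\subseteq E(S)$; no induction, no uniqueness, no anchors. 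The converse direction just greedily partitions $E_T(\Net)\cap E(S)$ into sets $E_T^{(i)}$ of size at most $\Hgt$ and sets $\chi(x_i)=c_f(E_T^{(i)})$, with the maximum-weight suitable set computed per taxon by \Cref{lem:optimal-F}. So the step you flag as ``the hardest'' (attributing reticulation-edge weight without double counting) is indeed the crux, but the anchor structure is the wrong tool for it, and carrying it along also breaks the running time you are supposed to prove.
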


As \MAPPDD and \fMAPPDD are special cases of \WMAPPDD, this result transfers to them.
\begin{corollary}
	\label{cor:k+sw+S-1PDD}
	\MAPPDD and \fMAPPDD are \FPT with respect to~$k+\sw+h+\delta$%
	\ifConference
	.
	\else
	{} and can be solved in~$\Oh(2^{2.443\cdot k\Hgt + \sw + \Oh(\log^2(k\Hgt))} \cdot \log |E_T(\Net)| \cdot (|E_T(\Net)| + {\sw}^2 \cdot n))$~time.
	\fi
\end{corollary}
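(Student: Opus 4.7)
\textbf{Proof plan for Theorem~\ref{thm:k+sw+S}.}

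The plan is to adapt the colour-coding framework of Section~\ref{sec:Dbar} from $\Dbar$ to $k$ by tracking the \emph{saved} set $S$ directly rather than the killed set. The crucial starting bound is that any candidate solution $S$ of size $k$ has $|E(S)| = O(k\Hgt)$: for each $x \in X$ the set $A(x) := \{uv \in E(\Net) : x \in \off(v)\}$ of ancestor edges of $x$ has $|A(x)| = O(\Hgt)$ by the structural argument behind Lemma~\ref{lem:Sbar} (tree edges bounded by $\Hgt$ per path, reticulation fan-in bounded by $\delta$, depth bounded by $h$), and $E(S) = \bigcup_{x \in S} A(x)$. Taking $N := O(k\Hgt)$ and invoking Proposition~\ref{prop:perfectHashFamily}, we obtain a perfect hash family $\mathcal{H}$ of colourings $c : E(\Net) \to [N]$ such that for the unknown $S$, at least one $c \in \mathcal{H}$ is injective on $E(S)$. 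It then suffices to solve, for each $c \in \mathcal{H}$, the multicoloured version in which we additionally require $c$ to be injective on $E(S)$.

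For the multicoloured version I would run a dynamic program over the tree extension $T_\Food$ mirroring Lemma~\ref{lem:exPPD}. The table $\DP[v, C, \Phi, \ell]$ stores the \emph{maximum} weight $w(E(S'))$ over $\Phi$-\pgviable multicoloured sets $S' \subseteq T_\Food^{(v)}$ of size $\ell$ with colour set $c(E(S')) = C$, where $\Phi$ is a subset of the at most $\sw$ food-web edges in $\GW(v)$. At a $T_\Food$-leaf $v$, we either skip $v$ (trivially $\ell = 0$, $C = \emptyset$) or include $v$ provided $\gamma_\Sigma(\Phi \cap \preyE{v}) \ge 1$, contributing colours $c(A(v))$ and weight $w(A(v))$. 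At an internal vertex we combine children left-to-right by splitting $C$ and $\ell$ between the next child and the already-processed children and intersecting $\Phi$ with $\GW(w_{i+1})$, analogously to Recurrence~(\ref{rec:D-i}), and independently decide whether $v$ itself is included. We accept if some entry at $s_\Food$ with $\Phi = \emptyset$ and $\ell \le k$ attains at least $D$. Under colour-injectivity on $E(S)$, the colour set $C$ acts as a faithful proxy for the underlying edge set, so each taxon-inclusion step contributes exactly colours $c(A(x)) \setminus C$ and weight $\sum\{w(e) : e \in A(x), c(e) \notin C\}$ with no double-counting.

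The main obstacle is the union-of-ancestor-edges accounting when two sibling subtrees of $T_\Food$ contain taxa whose ancestor sets overlap through a shared reticulation: a naive sum of child weights would double-count such shared edges. The colour-coding assumption resolves this cleanly---merging child colour sets corresponds to unioning child edge sets, so the combined weight equals the sum of weights of edges whose colours lie in the merged $C$. I would precompute, for each $x \in X$ and $C' \subseteq [N]$, the value $w_x(C') := w(\{e \in A(x) : c(e) \notin C'\})$ (and verify injectivity of $c$ on each $A(x)$, discarding $x$ otherwise) so that each taxon-insertion is $O(1)$ after polynomial preprocessing per colouring. The running time then follows: $|\mathcal{H}| = e^N N^{O(\log N)} \log|E(\Net)|$ hash functions, each with $O(n \cdot 2^N \cdot 2^\sw \cdot k)$ DP states and polynomial per-state work (dominated by iterating over colour-subset partitions at internal vertices), yielding the claimed $\Oh^*(2^{O(k\Hgt) + \sw + O(\log^2(k\Hgt))})$ bound.
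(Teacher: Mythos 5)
Your overall architecture --- a perfect hash family on edges of $\Net$ combined with a dynamic program over the tree extension $T_\Food$ indexed by a colour set $C$, a guarded set $\Phi \subseteq \GW(v)$ of food-web edges, and a budget $\ell$ --- is essentially the paper's. The one substantive deviation is also where the proposal breaks: you colour \emph{all} edges of $\Net$ and demand injectivity of the colouring on $E(S)=\bigcup_{x\in S}A(x)$, justifying the palette size $N=O(k\Hgt)$ by the claim that $|A(x)|=O(\Hgt)$ ``by the structural argument behind Lemma~\ref{lem:Sbar}''. That claim is false: $\Hgt$ bounds only the number of \emph{tree} edges lying above a taxon and says nothing about reticulation edges. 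For instance, let the root have two tree-vertex children $u_1,u_2$, let $M$ reticulations each have parents $u_1$ and $u_2$, and funnel them down to a single leaf $x$ through a binary in-tree of reticulations; then $\Hgt=O(1)$ and $\delta=2$, but $|A(x)|=\Theta(M)=\Theta(\delta^{h_r})$. With only $O(k\Hgt)$ colours no $c\in\mathcal H$ can be injective on $E(S)$, so the completeness direction of your colour-coding step fails --- and your weight-accounting device $w_x(C')$ relies on exactly this injectivity to recognise already-counted shared (reticulation) edges. The claimed $2^{O(k\Hgt)}$ running time fails with it.

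There are two repairs. The crude one is to enlarge the palette to $N=k\cdot\delta^{O(h)}$, which still yields the \FPT statement of \Cref{cor:k+sw+S-1PDD} (every quantity remains a function of $k$, $\sw$, $\delta$, $h$) but not the running time of \Cref{thm:k+sw+S}. The paper instead colours only the tree edges, so that $k\Hgt$ colours genuinely suffice, and leaves reticulation edges uncoloured: a suitable set for $x$ acquires reticulation edges through a reachability condition rather than through colours, and the disjointness your colour test was meant to provide is restored by \Cref{lem:disjoint-Fx}, which shows that for a colourful pair $(S,\chi)$ the suitable sets of distinct taxa are automatically disjoint even on their uncoloured reticulation edges, because any reticulation edge in $F_{x,\chi}$ reaches $x$ through a first tree vertex whose incoming edge carries a colour private to $x$. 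You would need to add either this lemma or the larger palette for your plan to go through.
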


First we define some objects necessary in this chapter.
% Fix a taxon~$x\in X$, a set~$A$ of taxa, 
% a mapping~$c: E_T(\Net) \to [k\cdot \Hgt]$, and a mapping~$\chi: X \to 2^{[k\cdot \Hgt]}$.
%
For a a set~$A$ of taxa and a mapping~$\chi: X \to 2^{[k\cdot \Hgt]}$, we say
that tuple~$(A,\chi)$ is \emph{colorful} if $\chi(x) \cap \chi(y) \ne \emptyset$ for~$x,y\in A$ implies $x=y$.

\begin{definition}
Fix a mapping~$c: E_T(\Net) \to [k\cdot \Hgt]$. For a taxon~$x\in X$ and a mapping~$\chi: X \to 2^{[k\cdot \Hgt]}$, 
A set of edges~$F\subseteq E(\Net)$ is \emph{$(x,\chi)$-suitable}, if
\begin{itemize}
	\item $c(e) \in \chi(x)$ for each edge~$e \in F \cap E_T(\Net)$,
	% edge~$e \in F_{x,\chi} \cap E_T(\Net)$,
	\item $c(e_1) \ne c(e_2)$ for each pair~$e_1,e_2 \in F \cap E_T(\Net)$,
	% pair~$e_1,e_2 \in F_{x,\chi} \cap E_T(\Net)$, 
	and
	\item for each edge~$uv \in F$,
	% edge~$uv \in F_{x,\chi}$, 
	there is a path from~$v$ to~$x$ in~$\Net[F]$.
	% in~$\Net[F_{x,\chi}]$.
\end{itemize}
When $(x,\chi)$ is clear from the context we say an $(x,\chi)$-respecting set is \emph{respecting}.
\end{definition}

\Cref{fig:exampleSuitable} shows an example.
Note that a respecting set may contain reticulation edges.

\begin{figure}
	\DeclareRobustCommand{\tikzdot}[1]{\tikz[baseline=-0.6ex]{\node[draw,fill=#1,inner sep=2pt,circle] at (0,0) {};}}
	\definecolor{col1}{HTML}{e6ab02}
	\definecolor{col2}{HTML}{e7298a}
	\definecolor{col3}{HTML}{1b9e77}
	
	\definecolor{bgcol1}{HTML}{ebebeb}
	\definecolor{bgcol2}{HTML}{c8e6c9}
	\definecolor{bgcol3}{HTML}{bbdefb}
	
	\centering
	\begin{tikzpicture}
		\node[textnode] at (5,-1.5) {
			\begin{tabular}{c|c}
				& $\chi(x)$ \\
				\hline
				$x_1$ & $\{\tikzdot{col1}\}$ \\
				$x_2$ & $\{\tikzdot{col2}\}$ \\
				$x_3$ & $\{\tikzdot{col2},\tikzdot{col3}\}$ \\
				$x_4$ & $\{\tikzdot{col1},\tikzdot{col2}\}$ \\
				$x_5$ & $\emptyset$
			\end{tabular}
		};
		
		\coordinate (bg0) at (0, 0) {};
		\coordinate (bg1) at (-1.5, -0.75) {};
		\coordinate (bg2) at (0, -0.75) {};
		\coordinate (bg3) at (1.5, -0.75) {};
		\node [bgvertex, fill=bgcol3] (bg4) at (-0.75, -1.5) {};
		\node [bgvertex, fill=bgcol1] (bg5) at (0.75, -1.5) {};
		\coordinate (bg6) at (0.75, -2.25) {};
		\node [bgvertex, fill=bgcol2] (bg11) at (-2, -1.5) {};
		\node [bgvertex, fill=bgcol3] (bg12) at (-0.75, -2.25) {};
		\node [bgvertex, fill=bgcol3] (bg13) at (0.25, -3) {};
		\node [bgvertex, fill=bgcol1] (bg14) at (1.25, -3) {};
		\node [bgvertex, fill=bgcol2] (bg15) at (2, -1.5) {};
		
		\draw [line width=12pt,color=bgcol2] (bg0.center) to (bg1.center);
		\draw [line width=12pt,color=bgcol3] (bg0.center) to (bg2.center);
		\draw [line width=7pt,color=bgcol1] (bg0.center) to (bg2.center);
		\draw [line width=12pt,color=bgcol3] (bg1.center) to (bg4.center);
		\draw [line width=12pt,color=bgcol3] (bg2.center) to (bg4.center);
		\draw [line width=12pt,color=bgcol1] (bg2.center) to (bg5.center);
		\draw [line width=12pt,color=bgcol1] (bg3.center) to (bg5.center);
		\draw [line width=12pt,color=bgcol1] (bg5.center) to (bg6.center);
		\draw [line width=12pt,color=bgcol3] (bg4.center) to (bg12.center);
		\draw [line width=12pt,color=bgcol3] (bg6.center) to (bg13.center);
		\draw [line width=12pt,color=bgcol2] (bg1.center) to (bg11.center);
		\draw [line width=12pt,color=bgcol2] (bg3.center) to (bg15.center);
		\draw [line width=12pt,color=bgcol1] (bg6.center) to (bg14.center);
		
		\newcommand{\mcvertex}[7]{
			\fill [color=bgcol#2] (#1) -- +(60:6pt) arc [start angle=60, end angle=0, radius=6pt] -- cycle;
			\fill [color=bgcol#3] (#1) -- +(0:6pt) arc [start angle=0, end angle=-60, radius=6pt] -- cycle;
			\fill [color=bgcol#4] (#1) -- +(-60:6pt) arc [start angle=-60, end angle=-120, radius=6pt] -- cycle;
			\fill [color=bgcol#5] (#1) -- +(-120:6pt) arc [start angle=-120, end angle=-180, radius=6pt] -- cycle;
			\fill [color=bgcol#6] (#1) -- +(180:6pt) arc [start angle=180, end angle=120, radius=6pt] -- cycle;
			\fill [color=bgcol#7] (#1) -- +(120:6pt) arc [start angle=120, end angle=60, radius=6pt] -- cycle;
		}
		\mcvertex{bg0}321321
		\mcvertex{bg0}321321
		\mcvertex{bg1}323232
		\mcvertex{bg2}131313
		\mcvertex{bg3}212121
		\mcvertex{bg6}131313

		\node [vertex, small] (0) at (bg0) {};
		\node [vertex, small] (1) at (bg1) {};
		\node [vertex, small] (2) at (bg2) {};
		\node [vertex, small] (3) at (bg3) {};
		\node [reti, small] (4) at (bg4) {};
		\node [reti, small] (5) at (bg5) {};
		\node [vertex, small] (6) at (bg6) {};
		\node [leaf, label={left:$x_1$}] (11) at (bg11) {};
		\node [leaf, label={left:$x_2$}] (12) at (bg12) {};
		\node [leaf, label={left:$x_3$}] (13) at (bg13) {};
		\node [leaf, label={right:$x_4$}] (14) at (bg14) {};
		\node [leaf, label={right:$x_5$}] (15) at (bg15) {};
		
		\draw [color=col1, ultra thick] (0) to (1);
		\draw [color=col2, ultra thick] (0) to (2);
		\draw [color=col3, ultra thick] (0) to (3);
		\draw (1) to (4);
		\draw (2) to (4);
		\draw (2) to (5);
		\draw (3) to (5);
		\draw [color=col1, ultra thick] (5) to (6);
		\draw (4) to (12);
		\draw (6) to (13);
		\draw (1) to (11);
		\draw (3) to (15);
		\draw (6) to (14);
	\end{tikzpicture}
	\caption{\looseness=-1
	A hypothetical network on~3 colors~$\{\tikzdot{col1},\tikzdot{col2},\tikzdot{col3}\}$.
	For each taxon the $(x,\chi)$-suitable set~$F_{x,\chi}$ is indicated in~$\tikzdot{bgcol1}$,~$\tikzdot{bgcol2}$, or~$\tikzdot{bgcol3}$.
	In this network,~$\Hgt$ takes the value of~3---there are three tree edges on paths to~$x_3$ or~$x_4$.
	}
	\label{fig:exampleSuitable}
\end{figure}

To prove~\Cref{thm:k+sw+S}, we first show how to solve a colored variant of~\PROB{Map-Weighted-PDD}.
In~\kSWMAPPDD, besides the usual input
\ifConference
$(\Net,\Food,k,D)$%
\else
---where~$\gamma$ is a weighting of~$\Food$---of a network~$\Net$, a food web~$\Food$ with weight function~$\gamma$, a budget~$k$, and a threshold of diversity~$D$,
\fi
, we are given a mapping~$c: E_T(\Net) \to [k\cdot \Hgt]$ of a color per tree edge.
We ask whether a~\gviable set of taxa~$S\subseteq X$ of size at most~$k$ and a mapping~$\chi: S \to 2^{[k\cdot \Hgt]}$ exist such that~$(S,\chi)$ is colorful and for each~$x\in S$ there is a set of suitable edges~$F_{x,\chi}$ and~$\sum_{x\in S} \w(F_{x,\chi}) \ge D$.

\begin{lemma}
	\label{lem:optimal-F}
	Given an instance~$\Instance$ of~\kSWMAPPDD, a leaf~$x$, and a mapping~$\chi$, a suitable edge set~$F$ with maximal value~$\w(F)$ can be computed in~$\Oh(2^H \cdot |E(\Net)| \cdot (kH + |E(\Net)|))$~time. 
\end{lemma}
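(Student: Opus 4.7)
The plan is a dynamic program on the \emph{ancestral subnetwork} $\mathcal{A}$ of $x$, i.e., the subgraph of $\Net$ induced by the vertices that have a directed path to $x$, together with $x$ itself. Any $(x,\chi)$-suitable edge set $F$ satisfies $V(F)\subseteq V(\mathcal{A})$ and $F\subseteq E(\mathcal{A})$, so it is enough to maximize $\w(F)$ over such $F$.

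For each vertex $v\in V(\mathcal{A})$ and each color set $C\subseteq\chi(x)$, I would store
\[
T[v,C] := \max\{\w(F) : F \text{ is } (x,\chi)\text{-suitable with edges below } v,\text{ and the tree-edge colors on a chosen } v\text{-to-}x\text{ backbone in } F \text{ are exactly } C\},
\]
computed in reverse topological order from $x$. The recurrence at $v$ branches over the first outgoing edge $vw$ of the backbone: for a tree edge $vw$ with $c(vw)\in C$ take $\w(vw)+T[w,C\setminus\{c(vw)\}]$, and for a reticulation edge $vw$ take $\w(vw)+T[w,C]$. Any further edges below $v$ that can still be added to $F$ without breaking the multicolored property or the head-reachability requirement are included through an auxiliary pass in which we first absorb all reticulation edges of $\Net[V(F)]$ (they carry no color constraint) and then greedily pick a maximum-weight multicolored subset of the remaining tree edges using colors in $\chi(x)\setminus C$.

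The key structural observation is that every directed path from a vertex of $\mathcal{A}$ to $x$ contains at most $\Hgt$ tree edges, so only color sets $C$ of size at most $\Hgt$ are relevant. After reindexing such $C$ by the positions of its tree edges along the backbone, the number of relevant states per vertex collapses to $2^{\Hgt}$, which is where the $2^{\Hgt}$ factor in the running time comes from. A bottom-up fill then evaluates each of the $\Oh(|E(\Net)|\cdot 2^{\Hgt})$ table cells in $\Oh(k\Hgt+|E(\Net)|)$ time, the first term accounting for enumerating the at most $k\Hgt$ colors available to auxiliary tree edges and the second for scanning candidate child edges and auxiliary reticulation edges.

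The main obstacle is to justify the backbone/auxiliary decomposition in the presence of reticulations: if two sibling children $w_1,w_2$ of a tree vertex $v$ share descendants through a reticulation, naively summing $T[w_1,\cdot]+T[w_2,\cdot]$ would double-count edges in the overlap. The plan to resolve this is to prove an exchange lemma showing that any optimum $F$ admits a decomposition in which the backbone from $v$ to $x$ is uniquely identified and the auxiliary contribution below any reticulation $r$ depends only on the residual color budget at $r$, so that it can be computed once per $(r,C)$ pair and shared among all ancestors. Establishing this exchange lemma is the most delicate step of the proof.
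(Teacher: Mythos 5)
Your proposal is considerably more elaborate than what is needed, and it contains gaps that you yourself flag but do not close. The decisive one is the ``exchange lemma'' for the backbone/auxiliary decomposition: without it, summing contributions of sibling children that share descendants through a reticulation double-counts edges, and the recurrence is not well-defined on networks. You state that establishing this lemma is the most delicate step and leave it unproven, so the argument is incomplete. Two further steps would also fail as written: (i) the ``greedy'' auxiliary pass that picks a maximum-weight multicolored subset of the remaining tree edges ignores the constraint that every edge $uv\in F$ must have a directed path from $v$ to $x$ inside $\Net[F]$ --- a heavy, distinctly colored edge whose head cannot reach $x$ through already-chosen edges may not be added, so the selection is not a plain greedy over colors; and (ii) the collapse of the state space from all $C\subseteq\chi(x)$ with $|C|\le\Hgt$ (of which there are $\binom{k\Hgt}{\le \Hgt}$, not $2^{\Hgt}$) down to $2^{\Hgt}$ states rests on ``reindexing by positions along the backbone'', which is ill-defined when there are several $v$--$x$ paths.

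The paper's proof avoids all of this. The key point is the definition of $\Hgt$: it bounds the \emph{total} number of tree edges $uv$ with $x\in\off(v)$ (call this set $E_x$), not merely the number of tree edges on a single root-to-$x$ path. Hence one simply enumerates all $2^{|E_x|}\le 2^{\Hgt}$ subsets $F$ of $E_x$, closes each under adding the (uncolored, hence unconstrained) reticulation edges entering tails of chosen edges together with the edge into $x$, discards the non-suitable sets, and returns the maximum weight; every suitable set is dominated by one of the enumerated closures, since adding reticulation edges preserves suitability and can only increase the weight. If you adopt this reading of $\Hgt$, your dynamic program becomes unnecessary; as it stands, the proposal does not constitute a proof.
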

Because of this lemma, in the rest of the section, we simply will write~$\w(F_{x,\chi})$ when referring to the weight of a suitable edge set for~$x$, which has maximal weight.
\begin{proof}
	\proofpara{Algorithm}
	Let~$E_x$ be the set of tree edges~$uv$ with~$x\in \off(v)$.
	For a taxon~$x$ and a mapping~$\chi$, iterate over all subsets~$F$ of~$E_x$.
	If~$uv\in F$ and~$u$ is a reticulation, then add all incoming edges of~$u$ to~$F$, for each~$uv\in F$.
	Add the edge incoming at~$x$ to~$F$.
	Return the biggest value of a so computed~$F$ which is suitable.
	
	\proofpara{Correctness}
	If the algorithm returns value~$d$, then there is a suitable set~$F$ with~$d=\w(F)$.
	
	Let~$F$ be a suitable set.
	The set~$F_x := F\cap E_x$ appears in the iteration.
	The edge incoming at~$x$ is in~$F$, as it is suitable.
	Reticulation edges have no color.
	Thus, adding reticulation edges leading to~$F$ keeps the set suitable.
	Consequently, a set~$F'$ with~$F'\supseteq F$ has been considered by the algorithm.
	
	\proofpara{Running time}
	The set~$E_x$ has size at most~$H$, by definition.
	All sets~$F$ are computed in~$\Oh(2^H \cdot |E(\Net)|)$ time.
	Checking whether a set~$F$ is colorful can be done in~$\Oh(kH + |E(\Net)|)$~time.
\end{proof}

\begin{lemma}
	\label{lem:disjoint-Fx}
	If~$(S,\chi)$ is colorful, then any $(x,\chi)$-suitable set~$F_{x,\chi}$ and any $(y,\chi)$-suitable set~$F_{y,\chi}$ are pairwise disjoint for~$x, y\in S$, $x\ne y$.
\end{lemma}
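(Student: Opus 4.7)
\medskip
\noindent\textbf{Proof plan.} The plan is to show directly that an edge $e$ belonging to both $F_{x,\chi}$ and $F_{y,\chi}$ for distinct $x,y \in S$ leads to a contradiction with the colorfulness of $(S,\chi)$. I would split into two cases according to whether $e$ is a tree edge or a reticulation edge.

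The easy case is that $e \in F_{x,\chi} \cap F_{y,\chi}$ is a tree edge. Then by the first bullet of the suitability definition we have $c(e) \in \chi(x)$ and $c(e) \in \chi(y)$, whereas colorfulness together with $x\neq y$ forces $\chi(x) \cap \chi(y) = \emptyset$, a contradiction.

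The main case is a reticulation edge $e = uv \in F_{x,\chi} \cap F_{y,\chi}$, i.e.\ $v \in V_R(\Net)$. Here I would exploit that $v$ has out-degree exactly one, say with unique outgoing edge $vw_1$. Since by the third bullet of the suitability definition there is a path from $v$ to $x$ in $\Net[F_{x,\chi}]$, the edge $vw_1$ must lie in $F_{x,\chi}$; symmetrically $vw_1 \in F_{y,\chi}$. Now I would argue by iterating: if $w_1$ is a tree vertex or a leaf, then $vw_1$ is a tree edge and we have reduced to the first case; otherwise $w_1$ is again a reticulation, so its unique outgoing edge $w_1w_2$ must also lie in both $F_{x,\chi}$ and $F_{y,\chi}$, and we repeat. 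Since $\Net$ is a finite DAG, the descent $v=w_0,w_1,w_2,\dots$ must terminate. If it terminates at a tree vertex $w_i$, then $w_{i-1}w_i$ is a shared tree edge, reducing to the easy case. If it terminates at a leaf $w_i$, then paths from $v$ to both $x$ and $y$ inside the respective suitable sets end at $w_i$, forcing $x = w_i = y$, contradicting $x \neq y$.

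I expect the only delicate point to be making the reticulation descent argument watertight, namely justifying from the path condition in the definition of suitability that the unique outgoing edge of each reticulation encountered truly lies in both suitable sets, and that termination at a leaf collapses to $x=y$. Once that is pinned down, both cases reduce to the colorfulness contradiction, and the lemma follows immediately.
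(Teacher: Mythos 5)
Your proof is correct and takes essentially the same approach as the paper: the tree-edge case is the identical color-disjointness contradiction, and your descent through the forced out-degree-one chain of reticulations is just an explicit, inductive rendering of the paper's step to ``the unique first tree vertex on a path from $v$ to $x$ --- or any other leaf,'' whose incoming tree edge then yields the same contradiction (with the leaf-termination case collapsing to $x=y$ in both arguments). No gaps.
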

\begin{proof}
	Fix vertices~$x,y\in S$ and an edge~$uv\in F_{x,\chi}$.
	If~$uv$ is a tree edge, then~$c(uv) \in \chi(x)$ and because~$\chi(x)$ and~$\chi(y)$ are disjoint, we conclude~$uv \not\in F_{y,\chi}$.
	
	Now let~$uv$ be a reticulation edge.
	Let~$w$ be the unique \emph{first} tree vertex on a path from~$v$ to~$x$---or any other leaf.
	Then~$c(e_w) \in \chi(x)$ for the edge~$e_w$ incoming at~$w$.
	We conclude that~$e_w \not\in F_{y,\chi}$ and there is no path from~$v$ to~$y$ in~$\Net[F_{y,\chi}]$.
	Consequently~$uv \not\in F_{y,\chi}$.
	
	If~$v=x$, or the only taxon reachable from~$v$ is~$x$, then~$uv \not\in F_{y,\chi}$.
\end{proof}

We write~$F_{S,\chi}$ for the union~$\bigcup_{x\in S} F_{x,\chi}$ for a set~$S$.
For a colorful set~$S$ we conclude with~\Cref{lem:disjoint-Fx} that~$\PD(S) \ge \w(F_{S,\chi}) \ge D$.

\ifJournal
\begin{lemma}
	\label{lem:fkS-in-NP}
	Given an instance~$\Instance$ of~\kSWMAPPDD and a set~$S\subseteq X$ of taxa, we can determine in~$\Oh^*(f(\Hgt)\cdot\poly(n))$~time, whether~$S$ is a solution for~$\Instance$.\todos{Why is this lemma necessary?}
\end{lemma}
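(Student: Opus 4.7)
My plan is to verify whether $S$ is a solution in two stages. First, check in polynomial time the direct conditions: that $|S|\le k$ and that $S$ is $\gamma$-viable under the weight function of the food web. If either fails, answer \no. The nontrivial step is to decide whether a mapping $\chi\colon S \to 2^{[k\cdot \Hgt]}$ exists making $(S,\chi)$ colorful such that the induced suitable sets $F_{x,\chi}$ achieve total weight at least $D$.

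For the non-trivial step, I would first argue that we may assume $\chi(x)$ equals exactly the colors of the tree edges in some $(x,\chi)$-suitable set, so $|\chi(x)|\le\Hgt$. For each $x\in S$ I would enumerate all subsets $F^T \subseteq E_x$ as candidate tree-edge parts of $F_x$; by the bound used in the proof of \Cref{lem:optimal-F} there are at most $2^{\Hgt}$ of them. Each candidate $F^T$ is then closed under the reticulation edges forced by the path-to-$x$ condition (the same closure used in \Cref{lem:optimal-F}), checked for suitability, and recorded as a pair (color set, weight). Thus each taxon produces a list of at most $2^{\Hgt}$ candidate (color-set, weight) pairs in time $2^{\Hgt}\cdot\poly(n)$.

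What remains is a combinatorial selection problem: choose one candidate per $x\in S$ such that the selected color sets are pairwise disjoint and the total weight reaches $D$. I would solve this by dynamic programming over the taxa in $S$, processed in an order induced by the tree extension $T_\Food$. The state at each step records the set of colors already committed to previously processed taxa; transitions go through each of the at most $2^{\Hgt}$ candidates of the current taxon and keep only those candidates whose color set is disjoint from the committed ones.

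The main obstacle is bounding the effective state space so the overall running time collapses to $f(\Hgt)\cdot\poly(n)$ instead of being exponential in $|S|\le k$. The intended argument is that, although up to $k\Hgt$ colors exist globally, only those colors that can still appear in candidates of some unprocessed taxon need to be remembered; once we have processed all taxa that could use a given color, that color can be forgotten. Combined with the order induced by $T_\Food$ this should keep only $\Oh(\Hgt)$ colors active at any time. Making this forgetting rule precise, and handling the case where the same tree edge appears as a candidate element for several taxa of $S$, is the delicate part of the argument.
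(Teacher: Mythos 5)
Your first stage and your reduction of the remaining task to a candidate-selection problem (each $x\in S$ contributes at most $2^{\Hgt}$ rainbow color sets, obtained by closing a subset of $E_x$ under the path-to-$x$ condition exactly as in \Cref{lem:optimal-F}, each recorded with its weight) are sound. The genuine gap is the one you flag yourself, and it is not repairable in the way you suggest. The dynamic program whose state is the set of colors already committed ranges over subsets of $[k\cdot\Hgt]$, hence has up to $2^{k\Hgt}$ states, and $k$ is not bounded by any function of $\Hgt$; so this is not $\Oh^*(f(\Hgt)\cdot\poly(n))$. The proposed forgetting rule has no basis: the coloring $c$ is an arbitrary member of a perfect hash family on all tree edges of $\Net$, so a single color class may contain edges lying above many different taxa of $S$, and which colors remain relevant for unprocessed taxa is determined by where equal-colored edges sit in $\Net$ --- not by the tree extension $T_\Food$ of the food web, which is a completely unrelated structure. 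There is therefore no reason only $\Oh(\Hgt)$ colors are ``live'' at any point of your sweep. Worse, the abstract selection problem you arrive at (pick one color set of size at most $\Hgt$ per taxon from its list so that the picks are pairwise disjoint and the weight is large) is already intractable in general for $\Hgt=2$, since it subsumes deciding the existence of a perfect rainbow matching in an edge-colored graph; any correct proof would have to exploit structural properties of the candidate lists arising from $\Net$ and $c$ that you have not identified.

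For comparison, the paper's own proof of this lemma is an acknowledged placeholder, and it sidesteps the difficulty by implicitly treating the mapping $\chi$ as part of the certificate: given the pair $(S,\chi)$, one checks colorfulness, \gviable{}ness and $|S|\le k$ directly, and then computes each maximum-weight $(x,\chi)$-suitable set via \Cref{lem:optimal-F} in $\Oh(2^{\Hgt})$ times a polynomial factor per taxon. Under your reading, where only the set $S$ is given and a witnessing $\chi$ must be synthesized, the statement is substantially stronger, and your argument does not establish it.
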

\begin{proof}
	Hand-wavy:\todos{ToDo}
	We can easily determine whether~$x,y \in S$ have unique colors,~$S$ is \fviable, and of size at most~$k$.
	By \Cref{lem:optimal-F}, we can determine an optimal set~$F$ in ... time.
\end{proof}
\fi

We are now ready to prove how to solve instances of~\kSWMAPPDD.

\newcommand{\lemfkSMAPPD}[1]{
\begin{lemma}[$\star$]
	{#1}
	Given an instance~$\Instance$ of~\kSWMAPPDD and a tree-ex\-ten\-sion~$T_\Food$ of the food web, we can solve~$\Instance$ in~$\Oh(2^{k\cdot \Hgt + \sw} \cdot k^4 \cdot \Hgt^2 \cdot {\sw} \cdot n \cdot |E(\Net)|^2)$~time.
\end{lemma}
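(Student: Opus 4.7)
The proof will mirror the dynamic programming structure of the proof of \Cref{lem:exPPD}, but with the tables adapted for the colorful problem instead of perfect triples. The plan is to traverse the tree extension $T_\Food$ bottom-up, and at each vertex $v$ store, for every combination of used color set $C \subseteq [k\Hgt]$, every subset of edges $\Phi \subseteq \GW(v)$ passing over the edge from $v$ to its parent, and every integer $\ell \in [k]_0$, the maximum of $\sum_{x\in S}\w(F_{x,\chi})$ over colorful pairs $(S,\chi)$ with $S \subseteq T_\Food^{(v)}$, $|S|=\ell$, $\chi(S)=C$, and such that $S$ is $\Phi$-\pgviable. The value in $\DP[s_\Food, C, \emptyset, \ell]$ for some $C$ and some $\ell \le k$ then tells us whether the desired diversity threshold $D$ can be attained.

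First I would deal with the base case of $T_\Food$-leaves: the taxon $v$ is either kept or killed; in the kept case we must check $\Phi$-\pgviability, and in that case look up the maximum of $\w(F_{v,\chi})$ over all $\chi$ with $\chi(v)=C$, which by \Cref{lem:optimal-F} can be precomputed in $\Oh(2^\Hgt \cdot |E(\Net)|\cdot(k\Hgt+|E(\Net)|))$ time per leaf-and-color-set pair. For the inductive step, I would introduce auxiliary tables $\DP_i[v,C,\Phi,\Psi,\ell]$ processing only the first $i$ children of $v$, with $\Psi$ either empty or $\predatorsE{v}$, mirroring \Recc{rec:D-i}. The transition from $\DP_i$ to $\DP_{i+1}$ splits the budget $\ell$ and, crucially, the color set $C$ into $C'$ used by taxa below $w_1,\dots,w_i$ and $C\setminus C'$ used in the subtree rooted at $w_{i+1}$, ensuring colorfulness across siblings. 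Finally, at $v$ itself one decides whether $v$ is in $S$ (adding a color set for $v$, checking that it is disjoint from colors used below, and adding $\w(F_{v,\chi(v)})$) or not (in which case the inherited viability edges passing over the edge to $v$'s parent may need to cover $v$'s ecological need through $\Phi$).

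The main technical obstacle is the color-splitting step in the recurrence: to combine two children we iterate over all partitions of $C$, which at first glance costs $3^{|C|}\leq 3^{k\Hgt}$. I would argue the running time bound of $2^{k\Hgt}$ (rather than $3^{k\Hgt}$) by observing that for a fixed $C \subseteq [k\Hgt]$ processed in the order of the tree extension, the number of pairs $(C',C\setminus C')$ summed over \emph{all} parent sets $C$ contributes a single factor $2^{k\Hgt}$ per node of $T_\Food$: we only ever enumerate a $C'\subseteq C$ once per $C$, and after the subset sum these contribute only $\sum_{C}2^{|C|}=3^{k\Hgt}$ across \emph{all} $C$ at one vertex, but the polynomial factors are absorbed by writing the cost as $2^{k\Hgt}$ with an extra $k^2\Hgt^2$ factor hidden in the transition (alternatively, one applies fast subset convolution at each combine step, which gives $\widetilde{\Oh}(2^{k\Hgt})$ per combine step). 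The remaining correctness arguments --- showing that every achievable colorful solution corresponds to a legal sequence of DP moves, and conversely that every DP choice yields a legal colorful solution with the stored weight, using \Cref{lem:disjoint-Fx} to combine independent suitable sets from different taxa into one overall multicolored set --- are essentially bookkeeping analogous to the proof of \Cref{lem:exPPD}.

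The running time then follows by multiplying the number of DP cells $\Oh(n\cdot 2^{k\Hgt}\cdot 2^{\sw}\cdot k)$ by the transition cost, which includes the color split $2^{k\Hgt}\cdot k$, the iteration over the (at most $\sw$) food-web edges in $\GW(v)$ for the viability check, and the lookup cost from the precomputed table of $F_{x,\chi(x)}$-weights. Collecting the polynomial factors in $k,\Hgt,\sw,n,|E(\Net)|$ and adding the $\Oh(n\cdot 2^{k\Hgt}\cdot|E(\Net)|\cdot(k\Hgt+|E(\Net)|))$ precomputation cost from \Cref{lem:optimal-F} yields the claimed $\Oh(2^{k\cdot\Hgt+\sw}\cdot k^4\cdot\Hgt^2\cdot\sw\cdot n\cdot|E(\Net)|^2)$ bound.
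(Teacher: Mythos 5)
Your proposal follows essentially the same route as the paper's proof: the same bottom-up dynamic program over $T_\Food$ indexed by $(v,C,\Phi,\ell)$, the same auxiliary tables $\DP_i$ with $\Psi\in\{\emptyset,\predatorsE{v}\}$, the same color-and-budget-splitting recurrence, and the same use of \Cref{lem:optimal-F} and \Cref{lem:disjoint-Fx}. One caveat: your first justification for avoiding the $3^{k\Hgt}$ cost of the color split is not valid --- $\sum_{C\subseteq[k\Hgt]}2^{|C|}=3^{k\Hgt}$ cannot be absorbed into $2^{k\Hgt}$ times a polynomial factor, since $3^{k\Hgt}=2^{(\log_2 3)k\Hgt}$. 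Only your alternative argument, applying (fast) subset convolution at each combine step, supports the claimed $2^{k\Hgt+\sw}$ bound, and this is exactly what the paper invokes in its running-time analysis.
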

}
\lemfkSMAPPD{\label{lem:fkSMAPPD}}
\newcommand{\IntuitionfkSMAPPD}{
	We consider the tree extension of the food web bottom up in a dynamic programming algorithm. We track the existence of colorful tuples whose taxa are all below a given vertex of the tree extension. We index colorful tuples by the sets of colors used, as well as by the set of edges $\Phi$ for which the set of taxa is $\Phi$-\pgviable, and we store the total weight of the suitable edge sets.
	We use the fact that the sets~$\chi(x)$ and~$\chi(y)$ have to be pairwise disjoint.
	Therefore, if a taxon~$x$ is saved and a set of colors has been assigned, these colors can be removed.
	We define~$\chi(x)$ when selecting~$x$.
}
The intuition behind~\Cref{lem:fkSMAPPD} is as follows:
\IntuitionfkSMAPPD
\newcommand{\ProoffkSMAPPD}{
\begin{proof}
	\proofpara{Intuition}
	\IntuitionfkSMAPPD

	\proofpara{Table Definition}
	For a set of taxa~$Q\subseteq X$, a set of colors~$C\subseteq [k \cdot \Hgt]$, a set of edges~$\Phi \subseteq E(\Food)$ between~$Q$ and~$X\setminus Q$, and an integer~$\ell\in [k]_0$, let~$\SSS_{(Q,C,\Phi,\ell)}$ be the set of colorful tuples~$(S,\chi)$ of a set~$S \subseteq Q$ of size \emph{at most}~$\ell$ and a mapping~$\chi: S \to 2^C$ with~$\chi(S) = C$ such that~$S$ is~$\Phi$-\pgviable.
	We define a dynamic programming algorithm with table~$\DP$.
	For a vertex~$v\in V(T_\Food)$, a set of colors~$C\subseteq [k \cdot \Hgt]$, a set of edges~$\Phi \subseteq \GW(v)$, and an integer~$\ell\in [k]_0$, we store in~$\DP[v,C,\Phi,\ell]$ the \emph{maximum} value~$\w(F_{S,\chi})$ of any tuple~$(S,\chi) \in \SSS_{(T_\Food^{(v)},C,\Phi,\ell)}$.
	The value of an optimal set can then be found in~$\DP[s_\Food,[k\cdot \Hgt],\emptyset,k]$.
	Is~$v$ an internal vertex of~$T_\Food$ with children~$w_1,\dots,w_t$,
	then we define further auxiliary tables~$\DP_i[v,C,\Phi,\Psi,\ell]$ analogously, where tuples~$(S,\chi)$ are considered from~$\SSS_{\left(Q,C,P,\ell\right)}$ with~$Q:=\bigcup_{j=1}^i T_\Food^{(w_j)}$ and~$P:=(\Phi \cup \Psi) \cap \left(\bigcup_{j=1}^i \GW(w_j)\right)$;
	that is---only the first~$i$ children of~$v$ are considered.
	We require that the set of edges~$\Psi$ is either empty or~$\predatorsE{v}$.

	\proofpara{Algorithm}
	We define the table in a bottom-up fashion.
	Let~$v$ be a leaf of~$T_\Food$.
	(That is a taxon without predators).
	In~$\DP[v,C,\Phi,\ell]$, we store~$\w(F_{v,\chi})$ with~$\chi(v) = C$ if~$\gamma_\Sigma(\Phi) = 1$ and~$\ell \ge 1$.
	Otherwise, we store~$0$.
	
	Now let~$v$ be an internal vertex of~$T_\Food$ with children~$w_1,\dots,w_t$.
	We define~$\DP_1[v,C,\Phi,\Psi,\ell]$ to be~$\DP[w_1,C,(\Phi \cup \Psi) \cap \GW(w_1),\ell]$.
	To compute further values for~$i\in [t-1]$, we use the following recurrence in which we define~$\DP_{i+1}[v,C,\Phi,\Psi,\ell]$ to be
	\begin{eqnarray}
		\label{rec:fkS-i}
			\max_{C'\subseteq C, \ell'\in [\ell]_0}
			\DP_i[v,C',\Phi,\Psi,\ell'] + 
			\DP[w_{i+1},C\setminus C',(\Phi \cup \Psi) \cap \GW(w_{i+1}),\ell-\ell'].
	\end{eqnarray}
	Finally, if~$\ell \ge 1$ and~($v=s_\Food$ or~$\gamma_\Sigma(\Phi\cap\preyE{v}) \ge 1$), we set~$\DP[v,C,\Phi,\ell]$ to be
	\begin{eqnarray}
		\label{rec:fkS-t}
		\max\{~\DP_t[v,C,\Phi,\emptyset,\ell]~;~\max_{C' \subseteq C}~\DP_t[v,C \setminus C',\Phi,\predatorsE v,\ell-1] + \w(F_{v,\chi})~\}.
	\end{eqnarray}
	Here, we use~$\chi(v) := C'$.
	Otherwise, we set $\DP[v,C,\Phi,\ell]$ to be $\DP_t[v,C,\Phi,\emptyset,\ell]$.
	
	We return \yes if~$\DP[s_\Food,[k\cdot \Hgt],\emptyset,k] \ge D$.
	Otherwise, we return \no.
	
	\proofpara{Correctness}
	We only show the correctness of~\Recc{rec:fkS-t} and omit to show the similar case in~\Recc{rec:fkS-i} as well as the basic cases.
	Assume that~$\DP$ stores the correct value for all children of~$v$ in~$T_\Food$ and~$\DP_t$ stores the correct value for~$v$.
	We show first that if~$\DP[v,C,\Phi,\ell] = d$, then there is a tuple~$(S,\chi) \in \SSS_{(T_\Food^{(v)},C,\Phi,\ell)}$ and~$\w(F_{S,\chi}) = d$.
	Afterward, we show that~$\DP[v,C,\Phi,\ell] \ge \w(F_{S,\chi})$ for each tuple~$(S,\chi) \in \SSS_{(T_\Food^{(v)},C,\Phi,\ell)}$.
	
	So assume that~$\DP[v,C,\Phi,\ell] = d$.
	Consequently, we have~$\DP_t[v,C,\Phi,\emptyset,\ell] = d$, or there is some~$C'\subseteq C$ such that~$\DP_t[v,C\setminus C',\Phi,\predatorsE{v},\ell-1] = d - \w(F_{v,\chi})$ for~$\chi(v) = C'$.
	In the former case, there is a colorful tuple~$(S,\chi)$ with a set~$S \subseteq \bigcup_{i=1}^t T_\Food^{(w_i)} = T_\Food^{(v)}\setminus \{v\}$ of size~$\ell$ and a mapping~$\chi$ such that~$\w(F_{S,\chi}) = d$ and~$S$ is~$(\Phi \cup \emptyset)$-\pgviable.
	Consequently,~$S$ is also~$\Phi$-\pgviable in~$T_\Food^{(v)}$.
	In the latter case, $\ell \ge 1$ and there is a colorful tuple~$(S,\chi)$ with a set~$S \subseteq T_\Food^{(v)}\setminus \{v\}$ of size~$\ell-1$ and a mapping~$\chi$ with~$\chi(S) = C\setminus C'$, such that~$\chi(v) = C'$ and~$\w(F_{S,\chi}) = d - \w(F_{v,\chi})$.
	Thus, adding~$v$ to~$S$ yields the desired set.
	
	Conversely, let~$(S,\chi)\in \SSS_{(T_\Food^{(v)},C,\Phi,\ell)}$ be given.
	Assume first that~$v$ is not in~$S$.
	Because~$S$ is $\Phi$-\pgviable, we conclude~$\DP[v,C,\Phi,\ell] = \DP_t[v,C,\Phi,\emptyset,\ell] \ge \w(F_{S,\chi})$.
	Now, let~$v$ be in~$S$.
	As~$S$ is $\Phi$-\pgviable and~$\preyE{x} \cap \GW(v) = \preyE{x} \cap \Phi$ for each~$x\in S$, the set~$S\setminus \{v\}$ is $(\Phi \cup \predatorsE{v})$-\pgviable.
	We conclude~$(S\setminus \{v\},\chi) \in \SSS_{(T_\Food^{(v)}\setminus \{v\},C,\Phi,\ell-1)}$ and therefore~$\DP_t[v,C,\Phi,\predatorsE{v},\ell-1] \ge \w(F_{S,\chi}) - \w(F_{v,\chi})$ and further~$\DP[v,C,\Phi,h] \ge \w(F_{S,\chi})$.

	\proofpara{Running time}
	Observe that for any vertex~$v$, only two values are possible for~$\Psi$.
	Therefore, all tables have~$\Oh(2^{k\cdot \Hgt + \sw} \cdot nk)$ entries, together.
	
	By~\Recc{rec:fkS-i}, in time~$\Oh(2^{k\cdot \Hgt} \cdot 2^{\sw} \cdot k^2 \cdot {\sw} \cdot (k\Hgt) \cdot n)$ all entries of $\DP_{i+1}$ can be computed, using convolutions.
	By~\Recc{rec:fkS-t}, we can compute~$\DP[v,C,\Phi,\ell]$ in~$\Oh(k\cdot \Hgt)$~time and~$\w(F_{v,\chi})$ needs to be computed once per vertex, which by~\Cref{lem:Sbar} can be done in~$\Oh(2^\Hgt \cdot |E(\Net)| \cdot (k\Hgt + |E(\Net)|))$ time.
	
	This leads to an overall running time of~$\Oh(2^{k\cdot \Hgt + \sw} \cdot k^4 \cdot \Hgt^2 \cdot {\sw} \cdot n \cdot |E(\Net)|^2)$.
\end{proof}
}

It remains to show how to reduce instances of \WMAPPDD to an instances of~\kSWMAPPDD.
For this, we use perfect hash families.
(\Cref{def:perfectHashFamily})

\begin{proof}[Proof of \Cref{thm:k+sw+S}]
	\proofpara{Reduction}
	Let $\Instance = (\Net, \Food, k, D)$ be an instance of \WMAPPDD.
	
	Arbitrarily order the tree edges $e_1, \dots, e_{q}$ of $\Net$.
	We may assume~$q > k\cdot \Hgt$.
	Let $\mathcal{H}$ be a~$(q, k\cdot \Hgt)$-perfect hash family.
	For every $f \in \mathcal{H}$ we define a coloring $c_f$ by~$c_f(e_j) = f(j)$ for~$j\in [q]$
	and let~$\Instance_{f} = (\Net, \Food, k, D, c_f)$ be the corresponding instance of \kSWMAPPDD.
	Now, solve instance $\Instance_f$, and return \yes if and only if $\Instance_{f}$ is a \yes-instance for some $f \in \mathcal{H}$.
	
	\proofpara{Correctness}
	For any set~$E' \subseteq E_T(\Net)$ of edges with a size of at most $k\cdot \Hgt$, there is a function~$f \in \mathcal{H}$ such that $c_f(E')$ contains each color at most once.

	Now, let~$\Instance$ be a \yes-instance of~\WMAPPDD with solution~$S=\{x_1,\dots,x_k\}\subseteq X$.
	Further, let~$E_T^{(1)}$ be the set of tree edges on paths from the root to~$x_1$ in \Net, and for~$i\in [k-1]$ let~$E_T^{(i+1)}$ be the set of tree edges on paths to~$x_{i+1}$ which are not in~$E_T^{(i)}$.
	We define~$E_T(S)$ as the union of these sets.
	By definition of~$\Hgt$, each set~$E_T^{(i)}$ has a size of at most~$\Hgt$.
	By definition of perfect hash families, there is some $f \in \mathcal{H}$, such that $c_f$ is injective on~$E_T(S)$.
	Taking~$\chi(x_i) = c_f(E_T^{(i)})$, we conclude that~$(S,\chi)$ is colorful and~$\w(F_{S,\chi}) = \PD(S) \ge D$.
	Thus,~$(S,\chi)$ is a solution of the \yes-instance~$\Instance_f$ of~\kSWMAPPDD.
	
	Conversely, whenever $(S,\chi)$ is a solution for instance~$\Instance_f$, then~$S$ is also a solution for $\Instance$.

	\proofpara{Running Time}
	The construction of $\mathcal{H}$ takes $e^{k\Hgt} ({k\Hgt})^{\Oh(\log {k\Hgt})} \cdot q \log q$ time, and for each $f \in \mathcal{H}$ the construction of instance~$\Instance_f$ of \kSWMAPPDD takes time linear in~$|\Instance|$.
	By~\Cref{lem:fkSMAPPD}, instances of \kSWMAPPDD can be solved in~$\Oh(2^{k\cdot \Hgt + \sw} \cdot k^4 \cdot \Hgt^2 \cdot {\sw} \cdot n \cdot |E(\Net)|^2)$~time, and the number of instances is $|\mathcal{H}| = e^{k\Hgt} ({k\Hgt})^{\Oh(\log {k\Hgt})} \cdot \log q$.
	
	Thus, the total running time is 
	$\Oh(e^{k\Hgt} ({k\Hgt})^{\Oh(\log {k\Hgt})} \log q \cdot (q + 2^{k\cdot \Hgt + \sw} \cdot k^4 \cdot \Hgt^2 \cdot \sw \cdot n \cdot |E(\Net)|^2))$.
	This simplifies to $\Oh((2e)^{k\Hgt}\cdot 2^{\sw + \Oh(\log^2(k\Hgt))} \cdot \sw \cdot n \cdot |E(\Net)|^2 \cdot \log |E_T(\Net)|)$.
\end{proof}

By \Cref{thm:PDD+}\ref{itm:k-fPDD}, \WMAPPDD is not \FPT with respect to~$k+h$.
But this differs when we add~$\sw$ as a parameter.
In a phylogenetic tree, the value of~$\Hgt$ is~$h$.

\begin{corollary}
	\label{prop:k+sw+h}
	\WPDD is \FPT with respect to~$k+h+\sw$ and can be solved\linebreak in~$\Oh(2^{2.443\cdot kh + \sw + \Oh(\log^2(kh))} \cdot \sw \cdot n \cdot |E(\Net)|^2 \cdot \log |E_T(\Net)|)$~time.
\end{corollary}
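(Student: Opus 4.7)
The plan is to derive Corollary~\ref{prop:k+sw+h} as a direct specialization of Theorem~\ref{thm:k+sw+S} to the tree case. The key observation is that \WPDD is precisely \WMAPPDD restricted to inputs where the phylogenetic network~$\Net$ is a phylogenetic tree, and the bound in Theorem~\ref{thm:k+sw+S} is phrased in terms of the parameter~$\Hgt$ (the maximum number of tree edges on a root-to-leaf path in~$\Net$), which collapses to~$h$ when~$\Net$ is a tree.

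First I would invoke Lemma~\ref{lem:Sbar}, which asserts $h_t \le \Hgt \le \delta^{h_r}\cdot h_t$. In a phylogenetic tree there are no reticulations, so $h_r = 0$ and every internal non-root vertex is a tree vertex. Substituting $h_r = 0$ into the bound gives $\Hgt \le \delta^{0}\cdot h_t = h_t$, and combined with $h_t \le \Hgt$ we obtain $\Hgt = h_t = h$. Thus for any instance of \WPDD, the network parameter~$\Hgt$ of Theorem~\ref{thm:k+sw+S} equals the tree height~$h$.

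Next I would simply instantiate the algorithm of Theorem~\ref{thm:k+sw+S} on the given \WPDD input, viewed as a \WMAPPDD input whose ``network'' is a tree and thus has trivially minimum in-degree at every internal vertex. The theorem yields a running time of $\Oh(2^{2.443\cdot k\Hgt + \sw + \Oh(\log^2(k\Hgt))} \cdot \sw \cdot n \cdot |E(\Net)|^2 \cdot \log |E_T(\Net)|)$; substituting $\Hgt = h$ and noting that in a tree all edges are tree edges (so $E_T(\Net) = E(\Net)$) gives exactly the claimed bound.

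There is essentially no obstacle here: the only thing to verify carefully is that the equality $\Hgt = h$ genuinely holds on trees, which follows immediately from Lemma~\ref{lem:Sbar} as sketched above. Fixed-parameter tractability with respect to $k + h + \sw$ is then an immediate consequence, since the exponent depends only on $k\cdot h$ and $\sw$ while all other factors are polynomial in the input size.
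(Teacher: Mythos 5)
Your proposal is correct and matches the paper's own (very short) justification: the corollary is obtained by specializing Theorem~\ref{thm:k+sw+S} to trees via the observation that $\Hgt = h$ when $\Net$ has no reticulations. Your derivation of $\Hgt = h$ through Lemma~\ref{lem:Sbar} with $h_r = 0$ is a slightly more roundabout route than the paper's direct remark, but it is the same argument in substance.
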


\subsection{\textbf{Map-${\varepsilon}$-PDD} with respect to~${k+\delta+h}$}
By \Cref{thm:k+sw+S}, \WMAPPDD is \FPT with respect to~$k+\sw+\delta+h$ and therefore also~\MAPPDD and~\fMAPPDD.
If any of the four parameters is dropped then \fMAPPDD is \Wh{1}-hard, as pointed out in~\Cref{tab:drop-1-para}.

\begin{table}
	\centering
	\caption{\fMAPPDD is \Wh{1}-hard if one of the parameters of~$k+\sw+\delta+h$ is dropped.}
	\begin{tabular}{cccc}
		\toprule
		$\sw+\delta+h$ & $k+\sw+\delta$ & $k+\sw+h$ & $k+\delta+h$ \\
		\Cref{thm:faller},\cite{faller} & \Cref{cor:binary} & \Cref{thm:MAPPD},\cite{bordewichNetworks,MAPPD} & \Cref{thm:PDD+}\ref{itm:k-fPDD},\cite{WeightedFW} \\
		\bottomrule
	\end{tabular}
	\label{tab:drop-1-para}
\end{table}

While the first three hardness results hold for both problems, the last explicitly only shows hardness for \fMAPPDD.
We expect this hardness not to hold for \MAPPDD, but believe that with an approach similar to the one presented in~\cite{PDD} to show that~\PDD is \FPT with respect to $k+h$, one can also show that \MAPPDD is \FPT when parameterized with $k+\delta+h$.
Unfortunately, the proof given in~\cite{PDD} has an incorrect lemma.
An example of the error is explain in greater detail in Appendix~\Cref{sec:counter-example}.
We still believe both of the claims to hold.
Henceforth, we leave the following as a conjecture.

\begin{conjecture}
	\label{con:k+d+h}
	\MAPPDD is \FPT when parameterized with $k+\delta+h$.
\end{conjecture}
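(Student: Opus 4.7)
The plan is to adapt the approach of~\cite{PDD} proving that \PDD is \FPT with respect to~$k+h$, repairing its faulty lemma and extending the argument to networks by exploiting the parameter~$\delta$. Two structural facts make the extension natural: by~\Cref{lem:Sbar}, we have~$\Hgt \le \delta^h$, so the number of tree edges ancestral to any single leaf is bounded by~$f(\delta,h)$; and for any $\varepsilon$-viable solution~$S$ of size~$k$, choosing one prey per non-source produces a \emph{witness in-arborescence} on~$S$ rooted at~$s_\Food$, and there are only~$k^{k-1}$ labeled such arborescences on~$k$ vertices.

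First, I would apply color coding via~\Cref{prop:perfectHashFamily} with~$k\cdot \Hgt$ colors on the tree edges of~$\Net$ and~$k$ colors on the taxa. Under a lucky coloring, the sought solution~$S$ is multicolored on taxa and rainbow on its ancestral tree edges. Second, I would enumerate, in~$f(k,\delta,h)$ time, both the shape~$T$ of the witness arborescence on the~$k$ taxon-colors and a partition of the edge-colors into~$k$ ``ancestral slots,'' one per taxon-color. Third, for each fixed~$T$ and slot assignment, I would run a dynamic program that traverses~$T$ from its leaves (taxa supporting no further chosen predators) toward the root. At each node~$c$, the DP records, for every taxon~$x$ of color~$c$ that is compatible with the food-web edges toward~$c$'s children in~$T$ and whose ancestral tree-edge-colors all lie in the preassigned slot, the maximum total ancestral edge weight of the sub-solution rooted at~$x$; merging at an internal node combines disjoint color-slot contributions of the children and adds the ancestral edges specific to~$x$.

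The main obstacle, and the crux of the error in~\cite{PDD}, is that the ancestral edges of distinct chosen taxa may overlap in~$\Net$: on trees this happens above the LCAs of chosen taxa, and on networks it also happens through shared reticulations. Naive summation over sub-arborescences of~$T$ therefore double-counts weight. Repairing the DP requires carrying an auxiliary state that records, for each current subproblem, which ``topmost already-accounted-for ancestors'' have been claimed. Because each leaf has at most~$f(\delta,h)$ ancestors in~$\Net$ by~\Cref{lem:Sbar}, the auxiliary state has size~$2^{f(\delta,h)}$, which preserves an overall~$f(k,\delta,h)\cdot\poly(n)$ running time and would establish~\Cref{con:k+d+h}.
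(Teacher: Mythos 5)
First, note that the paper does not prove this statement: it is explicitly left as \Cref{con:k+d+h}, precisely because the authors could not repair the faulty Reduction Rule~6 of~\cite{PDD} (see \Cref{sec:counter-example}). So there is no proof in the paper to compare against, and your proposal must stand on its own. Its high-level plan --- a witness in-arborescence in \Food obtained by fixing one prey per non-source, color coding the taxa and the at most $k\Hgt$ relevant tree edges, and pre-partitioning the edge colors into per-taxon slots --- is a sensible line of attack, and the edge-accounting half of it is essentially the suitable-set machinery the paper already develops in \Cref{sec:k} (\Cref{lem:optimal-F}, \Cref{lem:disjoint-Fx}); the genuinely new ingredient is replacing the $2^{\sw}$-state traversal of the food web's tree extension by a DP over the $k$-vertex witness tree, which is what would let you drop $\sw$.

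However, the write-up has two concrete gaps, both located exactly at the double-counting issue that broke~\cite{PDD}. (1)~Your acceptance condition at a node of $T$ --- that a taxon $x$ qualifies only if \emph{all} of its ancestral tree-edge colors lie in its preassigned slot --- is too strong and destroys completeness: an edge ancestral to two chosen taxa is placed in the slot of only one of them, so the second taxon has an ancestral edge colored outside its slot and would be rejected. The correct condition is existential (there is a rainbow, $x$-connected subset of $x$'s ancestral edges colored within the slot, i.e.\ a suitable set), and soundness then rests on the disjointness of suitable sets across disjoint slots --- an argument you never make, even though it is the precise point where the earlier proof failed. (2)~The mechanism you actually present as ``the repair'' --- an auxiliary DP state recording which ``topmost already-accounted-for ancestors'' have been claimed --- does not give an \FPT\ bound as described. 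Those ancestors are vertices of $\Net$, not colors; a subproblem at a node of $T$ may cover up to $k$ as-yet-unspecified taxa, so the state ranges over subsets of an $n$-sized universe of size up to $k\cdot f(\delta,h)$, giving $n^{\Theta(k\cdot f(\delta,h))}$ states, i.e.\ \XP. The bound $2^{f(\delta,h)}$ conflates the ancestor count of a single fixed leaf with the size of the state space. If instead the accounting is done purely through the disjoint color slots (making the auxiliary state unnecessary), the argument may well go through --- and would then also repair the $k+h$ result for \PDD\ on trees --- but that version is not the one you wrote, and the soundness direction ($\sum_x \w(F_{x,\chi})\le\PD(S')$ for every candidate $S'$, not just the intended one) still needs to be proved.
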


\section{Discussion}
\label{sec:discussion}
In this paper, we made the approach to combine the natural problems of maximizing phylogenetic diversity in a network with viability constraints given through a food web.
We defined the problem~\WMAPPDD and its special cases \MAPPDD and~\fMAPPDD.
We provided several \FPT algorithms for these problems and even presented a complete complexity dichotomy by showing for which combination of parameters of~$k$, $\Dbar$, $\sw$, $\delta$, and~$h$, the three problems are in \FPT and for which they are~\Wh{1}-hard.

Still, several questions remain open.
The most obvious is whether \Cref{con:k+d+h} holds.

Further, in \Cref{sec:Dbar}, we showed that \WMAPPDD is \FPT with respect to~$\Dbar + \sw$.
We showed that this approach is sufficient to show that \WPDD is \FPT with respect to the smaller parameter~$\kbar + \sw$.
Yet, it is unclear whether \WMAPPDD admits an \FPT algorithm for this parameter.

\ifJournal
While \MAPPDD is \FPT when parameterized with~$D$ (\Cref{thm:D})\todos{Check if we keep it}, \fMAPPDD is \Wh{1}-hard with respect to~$D$ (\Cref{thm:PDD+}\ref{itm:k-fPDD}).
As a consequence, \MAPPDD remains \FPT with respect to parameters bigger than~$D$.
However, it remains open whether \fMAPPDD is \FPT with respect to $D + \alpha$ where~$\alpha$ is some non-empty combination of~$\sw$, $\delta$, and~$h$.
\fi

Another major open question, already pointed out in~\cite{PDD}, is whether \PDD and \fPDD are \FPT when parameterized with $k$.
This question even remains open if the food web is restricted to an out-tree---or if any vertex in~\Food has a degree of at most~1.

In this paper, we observed the \emph{All-Paths-PD} measure in phylogenetic networks as a measure of phylogenetic diversity.
This measure is computationally the least challenging, but probably not the best measure in capturing phylogenetic diversity.
In recent years, several other measures have been proposed~\cite{bordewichNetworks,van2025average,van2024maximizing,WickeFischer2018}.
We wonder if there are also efficient algorithms for these problems that respect the viability of the selected set of taxa.

%\newpage
\thispagestyle{empty}
\ifArXiv

\else
\bibliography{ref}
\fi
\bibliographystyle{abbrv}
\thispagestyle{empty}

\ifConference
\setcounter{page}{1}
\setcounter{section}{0}
\renewcommand\thesection{A.\arabic{section}}
%\renewcommand\thepage{A.\arabic{page}}

% Appendix
\renewcommand\thesection{A}
\section{Appendix---Omitted proofs}
\renewcommand\thesection{A.\arabic{section}}
\setcounter{section}{0}

\section{Proof of~\Cref{thm:Dbar+sw}}
\renewcommand\thetheorem{\ref{thm:Dbar+sw}}

\thmDbar{}
\ProofDbar{\ProofDbarCorr}

\section{Proof of~\Cref{lem:perfectTripleBound}}
\renewcommand\thelemma{\ref{lem:perfectTripleBound}}

\lemPerfectTripleBound{}
\ProofPerfectTripleBound{}

\section{Proof of~\Cref{lem:exPPD}}
\renewcommand\thelemma{\ref{lem:exPPD}}

\lemexPPD{}
\ProofexPPD

\section{Proof of~\Cref{lem:Sbar}}
\renewcommand\thelemma{\ref{lem:Sbar}}

\lemSBar{}
\ProofSBar

\section{Proof of~\Cref{lem:fkSMAPPD}}
\renewcommand\thelemma{\ref{lem:fkSMAPPD}}

\lemfkSMAPPD{}
\ProoffkSMAPPD

% Appendix
\renewcommand\thesection{B}
\section{Appendix---Counter-example for a theorem in~\cite{PDD}}
\label{sec:counter-example}
In this section, will briefly point out a mistake in Reduction Rule~6 of~\cite{PDD}.% and why we still believe the theorem to be correct when some changes are done.

\subsection{The Problem Definition}
Reduction Rule~6 of~\cite{PDD} is defined on a colored version of \PDD, which is used as a subroutine to prove that \PDD is \FPT with respect to $k+h$.
In this colored version of \PDD, additionally to a vertex-colored phylogenetic tree~\Tree, a food web~\Food, $k$, and~$D$, one is given a \emph{pattern tree}~$\Tree_P$---a colorful tree on the same set of colors as the phylogenetic tree has.
The task is to find an \viable set~$S$ of taxa with~$|S|\le k$, $\PDsub{\Tree}(S) \ge D$, and
the spanning tree of the root and~$S$ should be isomorphic with the pattern tree.

Reduction Rule~6 states as follows:
\renewcommand\therr{6}
\begin{rr}
	\label{rr:internal-vertex}
	Apply previous Reduction Rules exhaustively.
	Let~$\rho$ be the root of~\Tree and let~$\rho_P$ be the root of~$\Tree_P$.
	Let $v'$ be a grand-child of $\rho_P$ and let~$u'$ be the parent of~$v'$.
	\begin{enumerate}[1.]
			\item For each vertex $u$ of $\Tree$ with~$c(u) = c_P(u')$
			add edges $\rho v$ to $\Tree$ for every child $v$ of~$u$.
			\item Set the weight of $\rho v$ to be $\w(uv)$ if~$c(v) \ne c_P(v')$
			or~$\w(uv)+\w(\rho u)$ if~$c(v) = c_P(v')$.
			\item Add edges~$\rho_P w'$ to $\Tree_P$ for every child~$w'$ of~$u'$.
			\item Set~$\Tree_P' := \Tree_P - u'$ and~$\Tree' := \Tree - u$.
	\end{enumerate}
\end{rr}

Reduction Rule~6 is used to reduced the height of the phylogenetic tree until it is only a star and then to use a known algorithm for that case.

\subsection{The Error}
\Cref{fig:mistake} shows an example execution of Reduction Rule~6.
For~$D=41$ and~$k=2$ the original instance is a \no-instance, while in the latter, it is a \yes-instance with the two outermost vertices being a solution (without considering the food web.)

Unfortunately, it is not enough to drop the condition that the spanning tree of the root and~$S$ should be isomorphic with the pattern tree.
For this example consider~$D=42$ and~$k=2$ for which still the first instance is a \yes-instance and the latter a \no-instance.

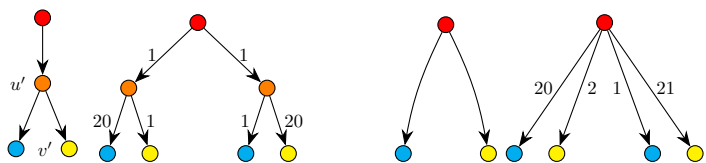
\begin{figure}[t]
	\centering
	\begin{tikzpicture}[scale=0.8,every node/.style={scale=0.7}]
		\node[draw,fill=red,inner sep=3pt,circle] (root) at (10,10) {};
		
		\node[draw,fill=orange,inner sep=3pt,circle,yshift=5mm] (c3) [below= of root] {};
		
		\node[draw,fill=cyan,inner sep=3pt,circle,yshift=5mm,xshift=-5mm] (c31) [below= of c3] {};
		\node[draw,fill=yellow,inner sep=3pt,circle,yshift=5mm,xshift=5mm] (c32) [below= of c3] {};
		
		\draw[-{Stealth[length=6pt]}] (root) -> (c3);
		
		\draw[-{Stealth[length=6pt]}] (c3) -> (c31);
		\draw[-{Stealth[length=6pt]}] (c3) -> (c32);
		
		\node[xshift=14mm] [left= of c3] {$u'$};
		\node[xshift=14mm] [left= of c32] {$v'$};
		%	\node[xshift=5mm] [left= of root] {(1)};
	\end{tikzpicture}
	\begin{tikzpicture}[scale=0.8,every node/.style={scale=0.7}]
		\node[draw,fill=red,inner sep=3pt,circle] (root) at (10,10) {};
		
		\node[draw,fill=orange,inner sep=3pt,circle,yshift=5mm,xshift=-13mm] (c3) [below= of root] {};
		\node[draw,fill=orange,inner sep=3pt,circle,yshift=5mm,xshift=13mm] (c5) [below= of root] {};
		
		\node[draw,fill=cyan,inner sep=3pt,circle,yshift=5mm,xshift=-4mm] (c30) [below= of c3] {};
		\node[draw,fill=yellow,inner sep=3pt,circle,yshift=5mm,xshift=4mm] (c32) [below= of c3] {};
		
		\node[draw,fill=cyan,inner sep=3pt,circle,yshift=5mm,xshift=-4mm] (c51) [below= of c5] {};
		\node[draw,fill=yellow,inner sep=3pt,circle,yshift=5mm,xshift=4mm] (c52) [below= of c5] {};
		
		\draw[-{Stealth[length=6pt]}] (root) -> node[left] {1} (c3);
		\draw[-{Stealth[length=6pt]}] (root) -> node[right] {1} (c5);
		
		\draw[-{Stealth[length=6pt]}] (c3) -> node[left] {20} (c30);
		\draw[-{Stealth[length=6pt]}] (c3) -> node[right] {1} (c32);
		
		\draw[-{Stealth[length=6pt]}] (c5) -> node[left] {1} (c51);
		\draw[-{Stealth[length=6pt]}] (c5) -> node[right] {20} (c52);
	\end{tikzpicture}
	\begin{tikzpicture}[scale=0.8,every node/.style={scale=0.7}]
		\node[draw,fill=red,inner sep=3pt,circle] (root) at (10,10) {};
		
		\node[draw,fill=cyan,inner sep=3pt,circle,yshift=-7mm,xshift=-8mm] (c31) [below= of root] {};
		\node[draw,fill=yellow,inner sep=3pt,circle,yshift=-7mm,xshift=8mm] (c32) [below= of root] {};
		
		\draw[-{Stealth[length=6pt]},white] (8,10) -> (8,9);
		\draw[-{Stealth[length=6pt]}] (root) to[bend right=8] (c31);
		\draw[-{Stealth[length=6pt]}] (root) to[bend left=8] (c32);
		
		%	\node[xshift=5mm] [left= of root] {(2)};
		%		\draw (8.25,10.2) -> (8.25,6.9);
	\end{tikzpicture}
	\begin{tikzpicture}[scale=0.8,every node/.style={scale=0.7}]
		\node[draw,fill=red,inner sep=3pt,circle] (root) at (10,10) {};
		
		\node[draw,fill=cyan,inner sep=3pt,circle,yshift=5mm,xshift=-4mm] (c30) [below= of c3] {};
		\node[draw,fill=yellow,inner sep=3pt,circle,yshift=5mm,xshift=4mm] (c32) [below= of c3] {};
		
		\node[draw,fill=cyan,inner sep=3pt,circle,yshift=5mm,xshift=-4mm] (c51) [below= of c5] {};
		\node[draw,fill=yellow,inner sep=3pt,circle,yshift=5mm,xshift=4mm] (c52) [below= of c5] {};
		
		\draw[-{Stealth[length=6pt]}] (root) -> node[left] {20} (c30);
		\draw[-{Stealth[length=6pt]}] (root) -> node[right] {2} (c32);
		
		\draw[-{Stealth[length=6pt]}] (root) -> node[left] {1} (c51);
		\draw[-{Stealth[length=6pt]}] (root) -> node[right] {21} (c52);
	\end{tikzpicture}
	\caption{An execution of Reduction Rule~6 in~\cite{PDD}.
		Left is the instance before the execution and right is the instance after the execution.
		In both instances, to the left is the pattern tree and on the right the phylogenetic tree.}
	\label{fig:mistake}
\end{figure}%

%\todo[inline]{Todo: Remove 'our solution', say that we still believe the problems are FPT...}
%\subsection{Our Solution}
Nevertheless, we still believe \PDD to be \FPT with respect to $k+h$ and
%However, a slight change in the framework will be necessary.
%The formulation could be as follows:
%
%There is a coloring of edges and a pattern tree such that the instance is a \yes-instance if and only if the resulting instance is a \yes-instance after the application of Reduction Rule~6.
%
%
%\subsection{Application to \MAPPDD with respect to~$k+\delta+h$}
we further believe that successful attempts for \PDD can then be generalized to \MAPPDD when also the parameter~$\delta$ is considered.
%Here, we add a reduction rule that whenever there are vertices~$u$ and~$v$ in the phylogenetic network and two edges~$e_1 = uv$ and~$e_2 = uv$, then remove~$e_1$ and~$e_2$ and add an edge~$e = uv$ with weight~$\w(e_1) + \w(e_2)$.
%%
%Without a proof, we claim here that this is sufficient to show that \MAPPDD is \FPT with respect to~$k+\delta+h$.

\fi

\end{document}

Code for Theo. PLEASE DO NOT DELETE.
olp_9YKlnfm0byE2MxIjrpcFCZPARrMnXs23GMxC